\documentclass[onecolumn]{bmcart}


\usepackage[latin9]{inputenc}
\usepackage{float}

\usepackage{amsmath}

\usepackage{amssymb}
\usepackage{graphicx}
\usepackage{algorithm2e}

\makeatletter
\newcommand{\noop}[1]{}

\providecommand{\tabularnewline}{\\}
\floatstyle{ruled}
\newfloat{algorithm}{tbp}{loa}
\providecommand{\algorithmname}{Algorithm}
\floatname{algorithm}{\protect\algorithmname}



\@ifundefined{showcaptionsetup}{}{%
 \PassOptionsToPackage{caption=false}{subfig}}
 \usepackage{subfig}




\startlocaldefs
\newtheorem{theorem}{Theorem}
\newtheorem{lemma}[theorem]{Lemma}
\newtheorem{proposition}[theorem]{Proposition}

\newenvironment{proof}[1][Proof]{\begin{trivlist}
\item[\hskip \labelsep {\bfseries #1}]}{\end{trivlist}}

\endlocaldefs

\begin{document}

\begin{frontmatter}

\begin{fmbox}
\dochead{Research}


\title{Near-optimal Assembly for Shotgun Sequencing with Noisy Reads}


\author[
   addressref={aff1},                   
   email={kklam@eecs.berkeley.edu}   
]{\inits{KK}\fnm{Ka-Kit} \snm{Lam}}
\author[
   addressref={aff2},
   email={akhalak@pacificbiosciences.com} 
]{\inits{A}\fnm{Asif} \snm{Khalak}}
\author[
   addressref={aff1},
   corref={aff1,aff2},                    
   email={dnctse@gmail.com}
]{\inits{NC}\fnm{David} \snm{Tse}}


\address[id=aff1]{
  \orgname{Department of Electrical Engineering and Computer Sciences, UC Berkeley},
  \city{Berkeley, California},                              
  \cny{United States}                                    
}
\address[id=aff2]{
  \orgname{Pacific Biosciences},
  \city{Menlo Park, California},                              
  \cny{United States}   
}




\begin{abstractbox}

\begin{abstract} 
Recent work identified the fundamental limits on the information requirements in terms of read length and coverage depth required for successful {\em de novo} genome reconstruction from shotgun sequencing data, based on the idealistic assumption of no errors in the reads (noiseless reads). In this work, we show that even when there is noise in the reads, one can successfully reconstruct with information requirements close to the noiseless fundamental limit. A new assembly algorithm, X-phased Multibridging,  is designed based on a probabilistic model of the genome. It is shown through analysis  to perform well on the model, and through simulations to perform well  on real genomes.

\end{abstract}


\begin{keyword}
De novo sequence assembly, genome finishing, methods for emerging sequencing technologies
\end{keyword}


\end{abstractbox}
\end{fmbox}

\end{frontmatter}

\section*{Background}

%
%
%
%
%
%
%

%
%
Optimality in the acquisition and processing of DNA sequence data represents a serious technology challenge from various perspectives including sample preparation, instrumentation and algorithm development.  Despite scientific achievements such as the sequencing of the human genome and ambitious plans for the future \cite{turnbaugh2007human,sequencing2007plan},  there is no single, overarching framework to identify the fundamental limits in terms of information requirements required for successful output of the genome from the sequence data.

Information theory has been successful in providing the foundation for such a framework in digital communication \cite{Sha48}, and we believe that it can also provide  insights into understanding the essential aspects of DNA sequencing. A first step in this direction has been taken in the recent work \cite{bresler2013optimal}, where the fundamental limits on the minimum read length and coverage depth required for successful assembly are identified in terms of the statistics of various repeat patterns in the genome. Successful assembly is defined as the reconstruction of the underlying genome, i.e. genome finishing~\cite{pop2009genome}. The genome finishing problem is particularly attractive for analysis because it is clearly and unambiguously defined and is arguably the ultimate goal in assembly. There is also a scientific need for finished genomes~\cite{medini2008microbiology}\cite{mardis2002finished}. Until recently, automated genome finishing was beyond reach \cite{gordon1998consed} in all but the simplest of genomes.  New advances using ultra-long read single-molecule sequencing, however, have reported successful automated finishing~\cite{chin2013nonhybrid,koren2012hybrid}. 
Even in the case where finished assembly is not possible, the results in \cite{bresler2013optimal} provide insights on optimal use of read information since 
the heart of the problem lies in  
how one can optimally use the read information to resolve repeats. 

Figure \ref{noiselessPlot} gives an example result for the repeat statistics of \textit{E. coli} K12. The x-axis of the plot is the read length and the y-axis is the coverage depth normalized by the Lander-Waterman depth (number of reads needed to cover the genome \cite{lander1988genomic}). The lower bound  identifies the necessary read length and coverage depth required for {\it any} assembly algorithm to be successful with these repeat statistics. An assembly algorithm called Multibridging Algorithm was presented, whose read length and coverage depth requirements are very close to the lower bound, thus tightly characterizing the fundamental information requirements. The result shows a critical phenomenon at a certain read length $L = \ell_{\rm crit}$: below this critical read length, reconstruction is impossible no matter how high the coverage depth; slightly above this read length, reconstruction is possible with Lander-Waterman coverage depth. This critical read length is given by $\ell_{\rm crit} = \max \{\ell_{\rm int}, \ell_{\rm tri}\}$, where $\ell_{\rm int}$ is the length of the longest pair of exact interleaved repeats and $\ell_{\rm tri}$  is the length of the longest exact triple repeat in the genome, and has its roots in earlier work by Ukkonen on Sequencing-by-Hybridization \cite{ukkonen1992approximate}.  The framework also allows the analysis of specific algorithms and the comparison with the fundamental limit; the plot shows for example the performance of the Greedy Algorithm and we see that its information requirement is far from the fundamental limit.

\begin{figure} 

\subfloat[Information requirement for noiseless reads]{\includegraphics[width=8cm]{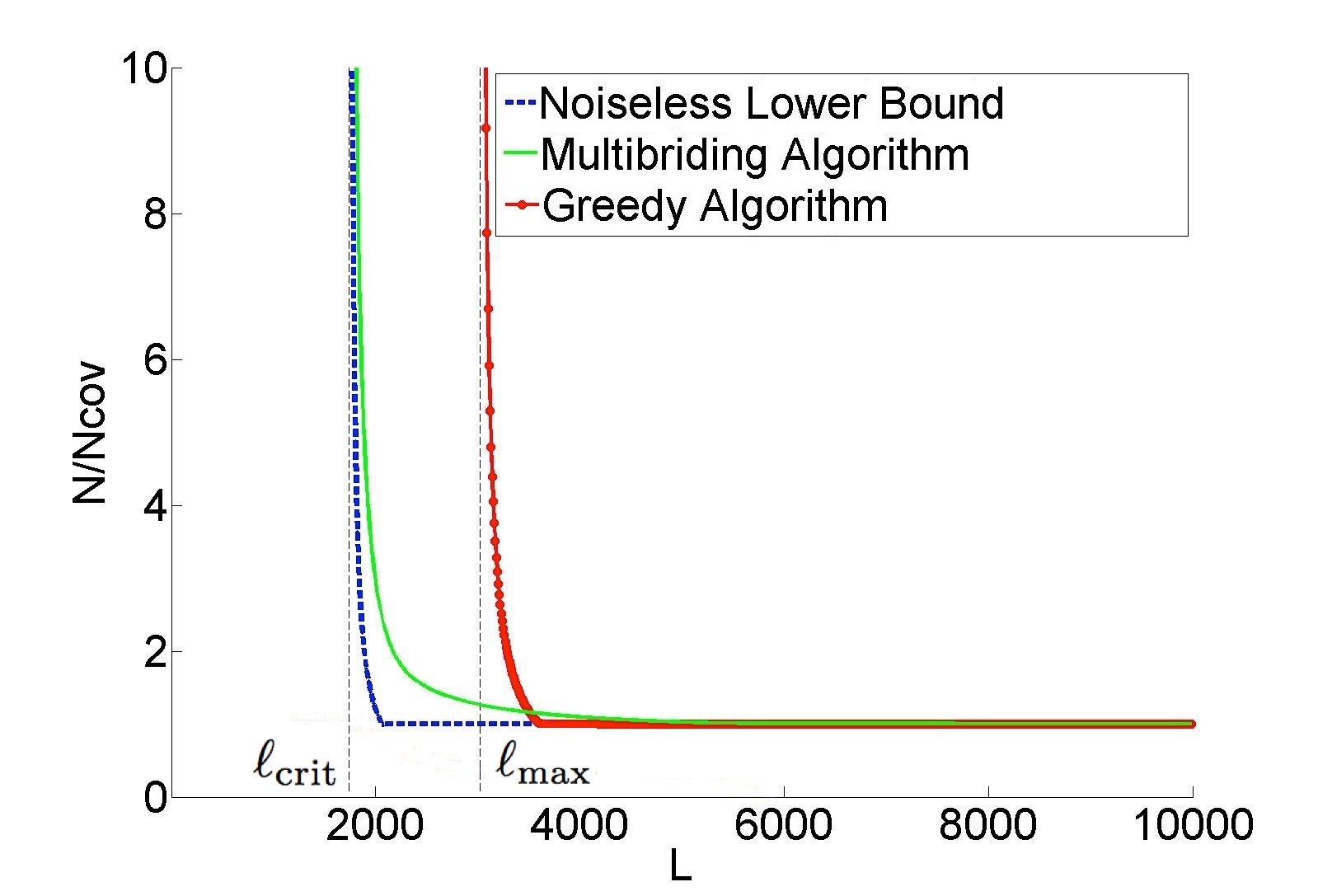}\label{noiselessPlot}}

\subfloat[Information requirement for noisy reads]{\includegraphics[width=8cm]{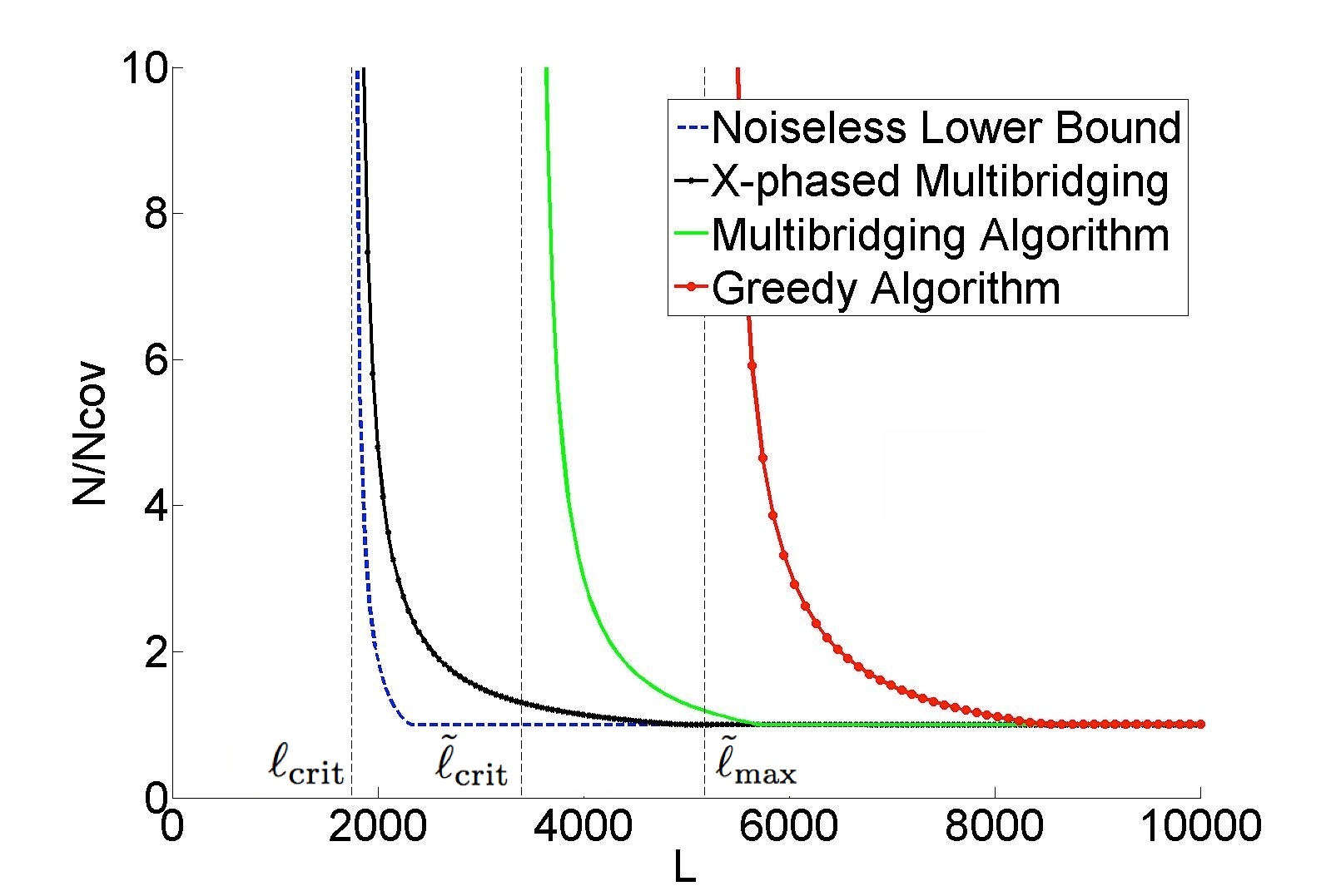}\label{noisyPlot}
}
\par
\begin{minipage}[t]{7cm}
\caption{Information requirement to reconstruct \textit{E. coli} K12. $\ell_{\rm crit} =1744, \tilde{\ell}_{\rm crit} = 3393$}
\end{minipage}
\end{figure}

A key simplifying assumption in \cite{bresler2013optimal} is that there are no errors in the reads (noiseless reads). However reads are noisy in all present-day sequencing technologies, 
%
ranging from primarily substitution errors in Illumina\textsuperscript{\textregistered} platforms, to primarily insertion-deletion errors in Ion Torrent\textsuperscript{\textregistered} and PacBio\textsuperscript{\textregistered} platforms. 
The following question is the focus of the current paper: in the presence of read noise, can we still successfully assemble with a read length and coverage depth close to the minimum in the noiseless case? A recent work \cite{poster}  with an existing assembler
suggests that the information requirement for genome finishing substantially exceeds the noiseless limit. However, it is not obvious whether the limitations lie in the fundamental effect of read noise or in the sub-optimality of the algorithms in the assembly pipeline.
%
%

\section*{Results}
The difficulty of the assembly problem depends crucially on the genome repeat statistics. Our approach to answering the question of the fundamental effect of read noise is based on design and analysis using a parametric probabilistic model of the genome that matches the key features of the repeat statistics we observe  in genomes. In particular, it models the presence of long flanked repeats which are repeats flanked by statistically uncorrelated region.
Figure~\ref{noisyPlot} shows a plot of the predicted information requirement 
for reliable reconstruction by various algorithms under a substitution error rate of $1\%$. The plot is based on analytical formulas  derived under our genome model with parameters set to match the statistics of \textit{E. coli} K12.  We  show that it is possible in many cases to develop algorithms that approach the noiseless lower bound even when the reads are noisy. 
Specifically, the  X-phased Multibridging Algorithm  has close to the same critical read length $L = \ell_{\rm crit}$ as in the noiseless case and only slightly greater coverage depth requirement for read lengths greater than the critical read length. 

We then proceed to build a prototype assembler based on the analytical
insights and we perform experiments on real genomes. As
shown in Figure~\ref{fig:Sim}, we test the prototype assembler by using it to assemble
noisy reads sampled from 4 different genomes. 
At coverage and
read length indicated by a green circle, we successfully assemble noisy reads into one contig (in most cases with
more than 99\% of the content matched when compared with the ground truth).
Note that the information requirement is close to the noiseless
lower bound. Moreover, the algorithm (X-phased Multibridging) is computationally efficient
with the most computational expensive step being the computation
of overlap of reads/K-mers, which is an unavoidable procedure
in most assembly algorithms.

\begin{figure} [!htb]
\subfloat[\textit{Prochlorococcus marinus} ]{\includegraphics[width=8cm]{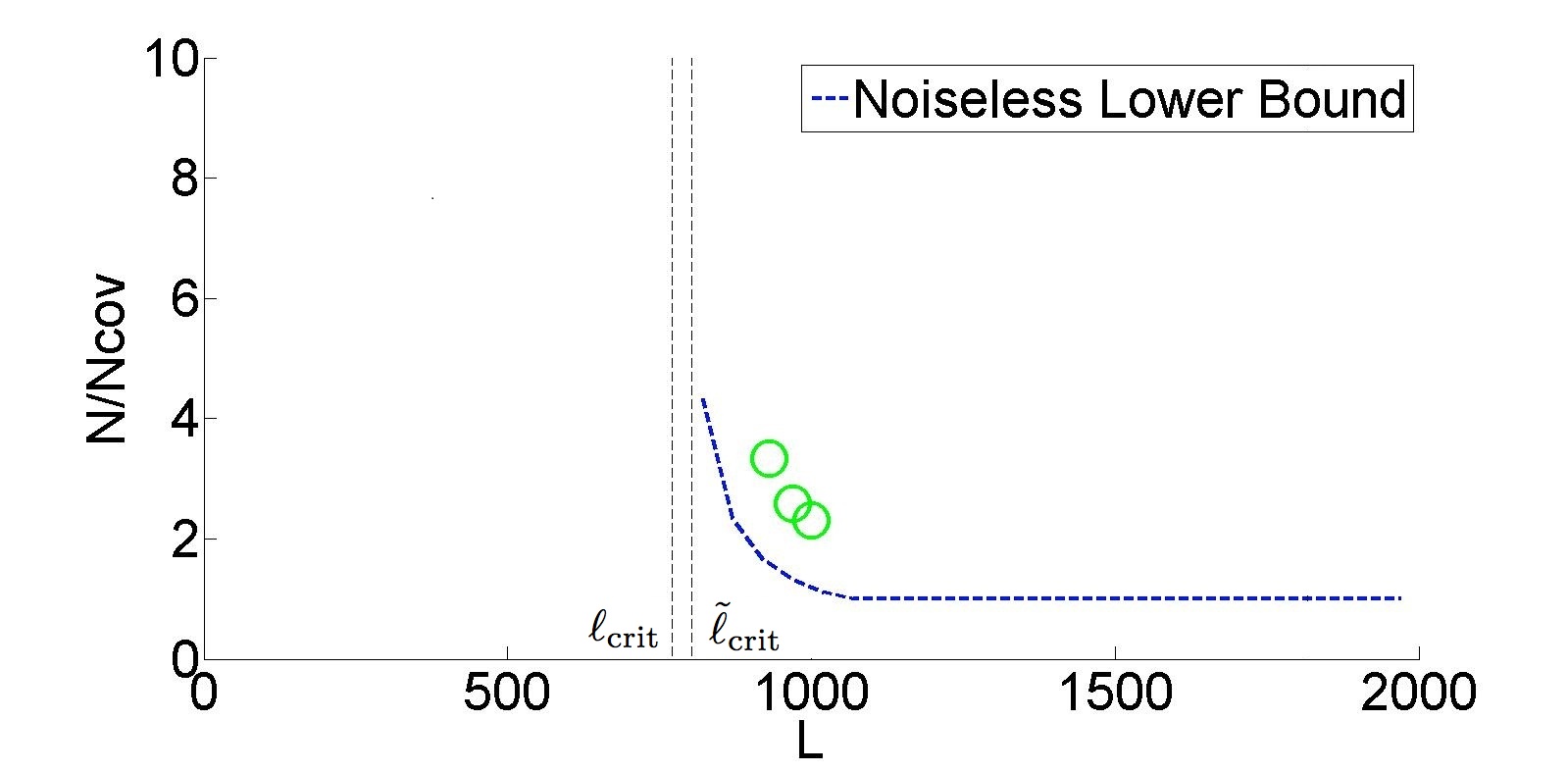}
}

\subfloat[\textit{Helicobacter pylori}]{\includegraphics[width=8cm]{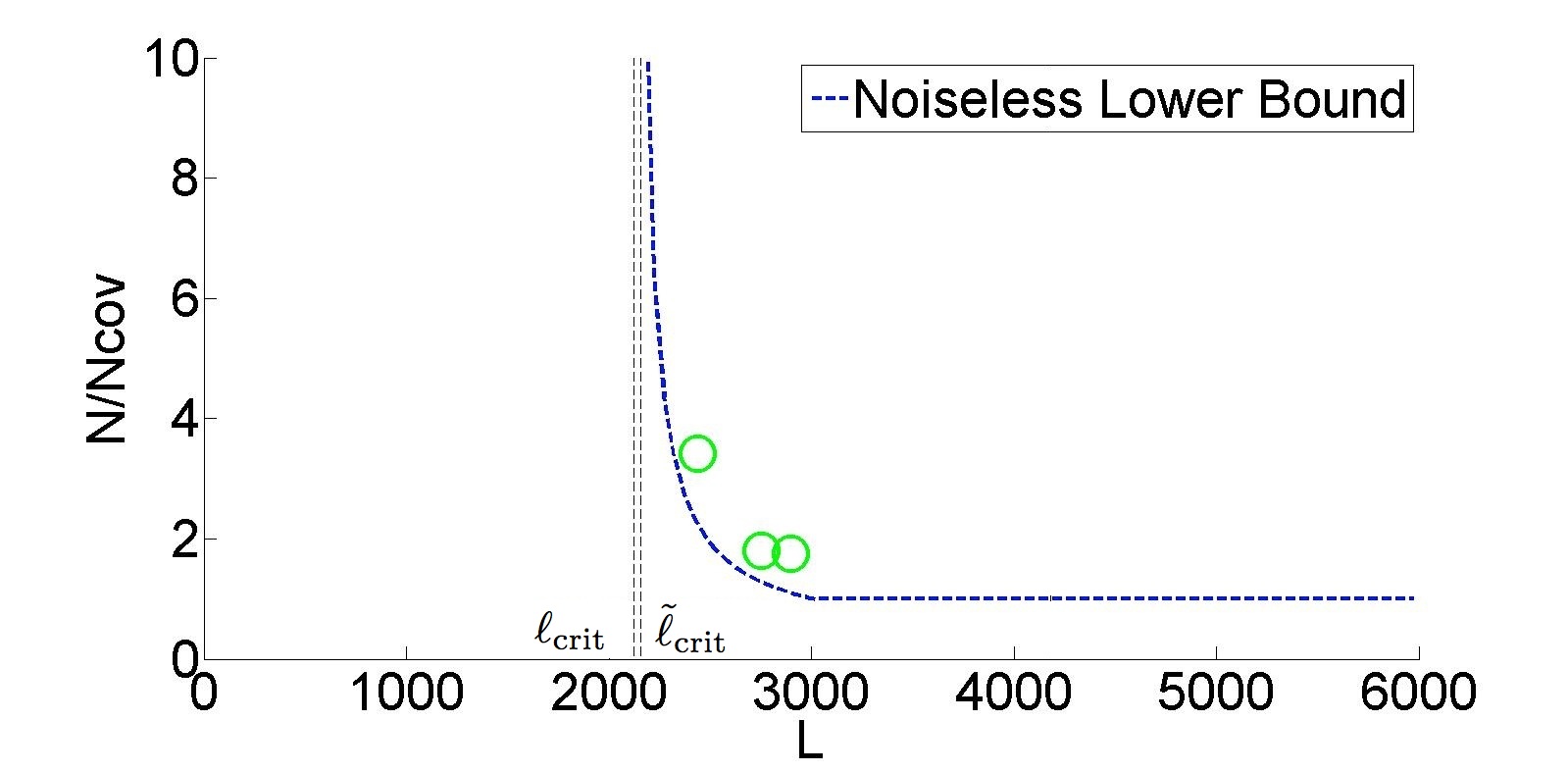}
}

\subfloat[\textit{Methanococcus maripaludis}]{\includegraphics[width=8cm]{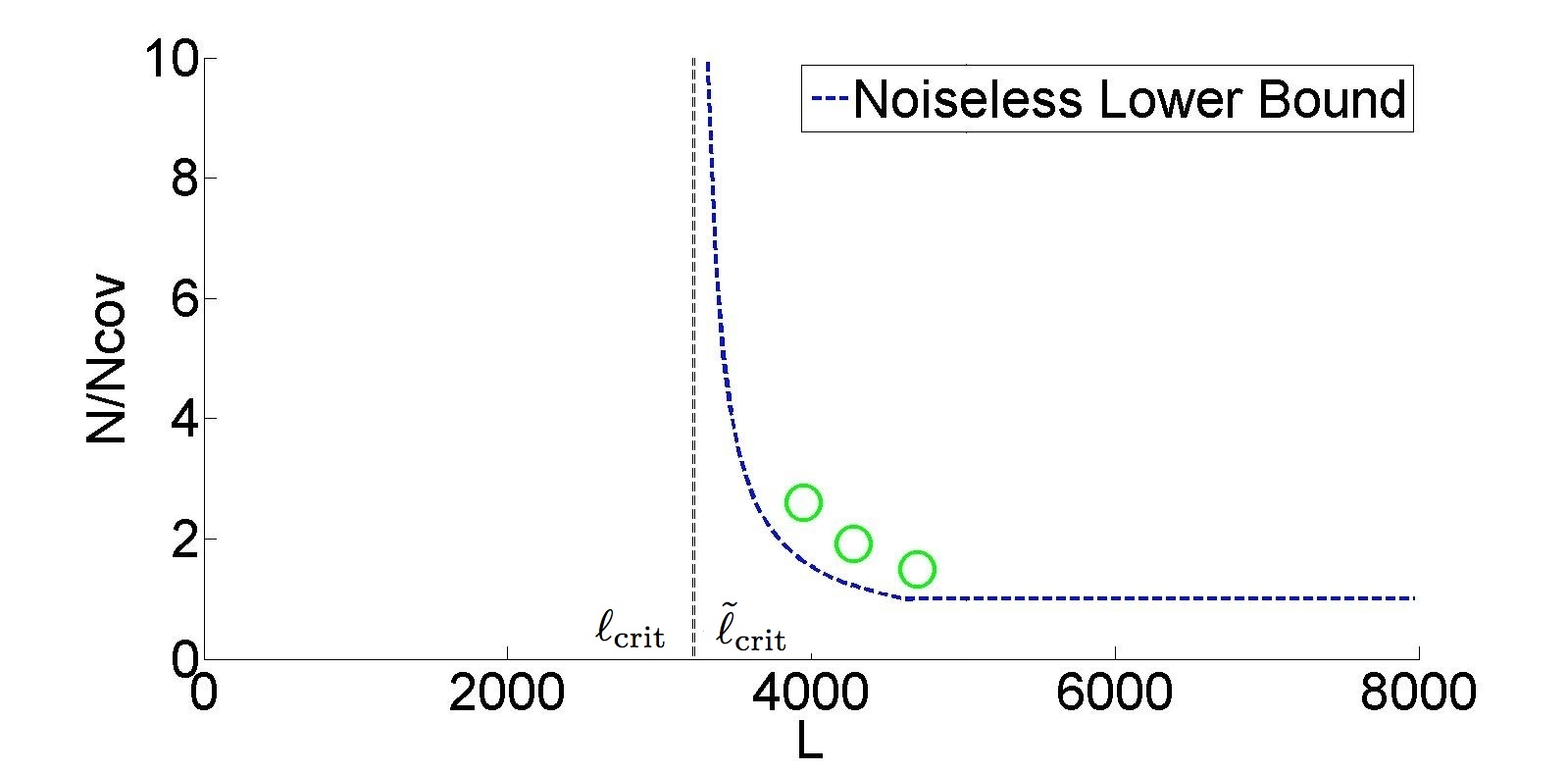}
}

\subfloat[\textit{Mycoplasma agalactiae}]{\includegraphics[width=8cm]{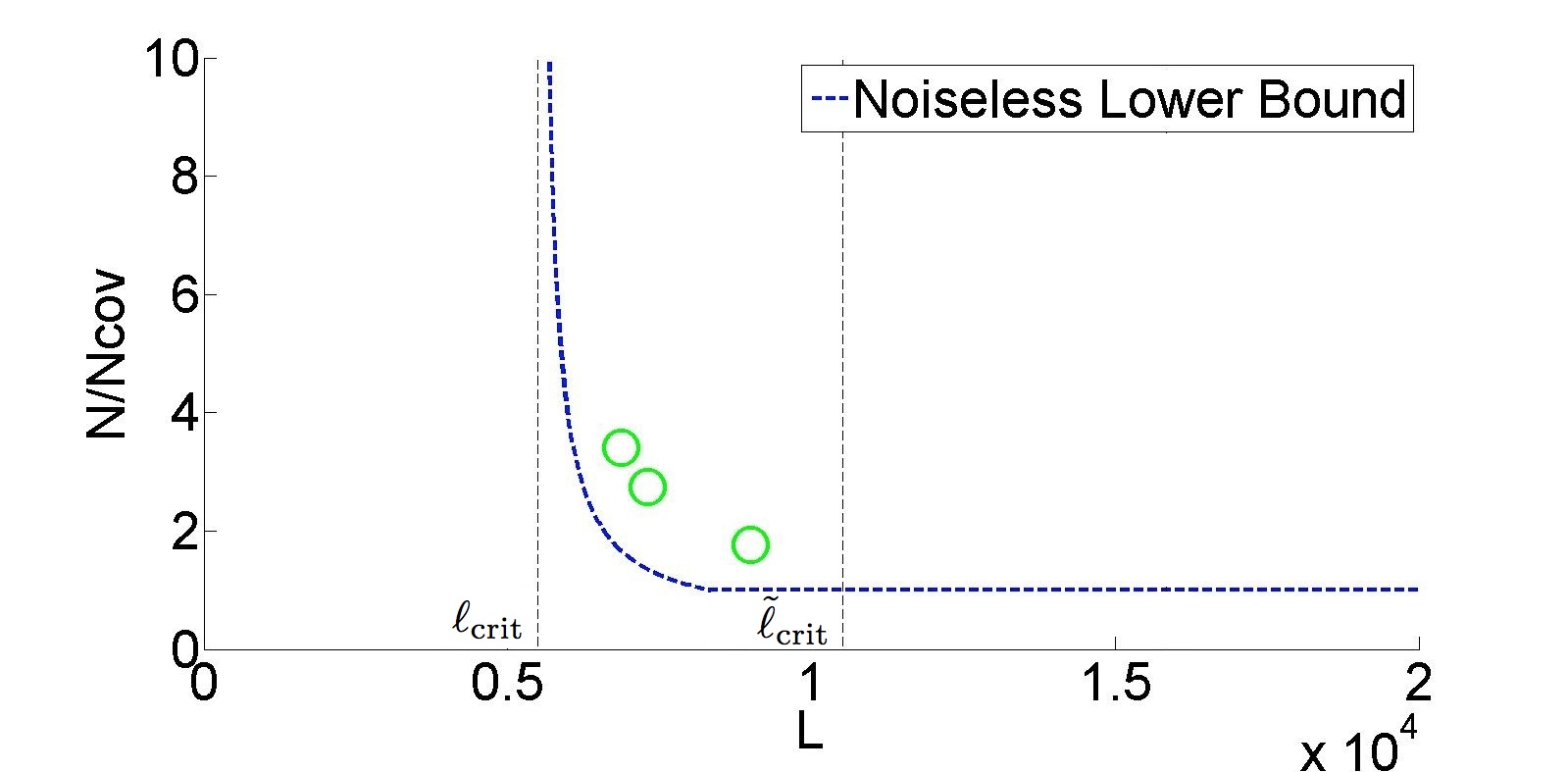}
}
\par
\begin{minipage}[t]{7cm}
\caption{\label{fig:Sim}Simulation results on a prototype  assembler (substitution noise of rate 1.5 \%) }
\end{minipage}
\end{figure}

The main conclusion of this work is that, with an appropriately designed assembly algorithm, the information requirement for genome assembly  is surprisingly insensitive to read noise.  The basic reason is that the redundancy required by the Lander-Waterman coverage constraint can be used to denoise the data.  This is consistent with the asymptotic result obtained in \cite{motahari2013optimal} and the practical approach taken in \cite{chin2013nonhybrid}. However, the result in \cite{motahari2013optimal} is based on a very simplistic i.i.d. random genome model, while the model and genomes considered in the present paper both have long repeats.
A natural extension of the Multibridging Algorithm in \cite{bresler2013optimal} to handle noisy reads allows the resolution of these long flanked repeats if the reads are long enough to span them, thus allowing reconstruction provided that the read length is greater than $L= \tilde{\ell}_{\rm crit} = \max \{\tilde{\ell}_{\rm int}, \tilde{\ell}_{\rm tri}\}$, where $\tilde{\ell}_{\rm int}$ is the length of the longest pair of flanked interleaved repeats and $\tilde{\ell}_{\rm tri}$  is the length of the longest flanked triple repeat in the genome. This condition is shown as a vertical asymptote of the "Multibridging Algorithm" curve in Figure \ref{noisyPlot}. By exploiting the redundancy in the read coverage to resolve read errors, the X-phased Multibridging can phase the polymorphism across the flanked repeat copies using only reads that span the exact repeats. Hence, reconstruction is achievable with a read length  close to $L = \ell_{\rm crit}$, which is the noiseless limit.



%

%
%

%
%
%
%


\subsection*{Related work}
All assemblers must somehow address the problem of resolving noise in the reads during genome reconstruction. However, the traditional approaches to measuring assembly performance makes quantitative comparisons challenging for unfinished genomes~\cite{NM2011}.  In most cases, the heart of the assembly problem lies in processing of the assembly graph, as in  \cite{zerbino2008velvet,gnerre2011high,simpson2012efficient}.  A common strategy for dealing with ambiguity from the reads lies in filtering the massively parallel sequencing data using the graph structure prior to traversing possible assembly solutions.  In the present work, however, we are focused on the often-overlooked goal of optimal data efficiency.  Thus, to the extent possible we distinguish between the read error and the mapping ambiguity associated with the shotgun sampling process.  
The proposed assembler, X-phased Multibridging, adds information to the assembly graph based on a novel analysis of the underlying reads.

\section*{Methods}

The path towards developing X-phased Multibridging is outlined as follows. 
%
%
\begin{enumerate}

\item Setting up the shotgun sequencing model and problem formulation. 

\item Analyzing repeats structure of genome and their relationship to the information requirement for genome finishing. 
\item Developing a parametric probabilistic model that captures the long tail of the repeat statistics. 
\item Deriving and analyzing an algorithm that require minimal information requirements for assembly -- close to the noiseless lower bound. 
\item Performing simulation-based experiments on real and synthetic genomes to characterize the performance of a prototype assembler for genome finishing.
\item Extending the algorithm to address the problem of indel noise.
  
\end{enumerate}

\section*{Shotgun sequencing model and problem formulation}

\subsection*{Sequencing model}

Let $\bf{s}$ be a length $G$ target genome being sequenced
with each base in the alphabet set $\Sigma=\{A,C,G,T\}$. In the
shotgun sequencing process, the sequencing instrument samples $N$ reads,
$\vec{r}_1, \ldots, \vec{r}_N$ of length $L$ and sampled uniformly and independently from  $\bf{s}$.  This unbiased sampling assumption is made for simplicity and is also supported by the characteristics of single-molecule (e.g. PacBio\textsuperscript{\textregistered}) data.  Each read is a noisy version of the corresponding length $L$ substring on the genome. The noise may consist of base insertions, substitutions or deletions.  Our analysis focus on substitution noise first. In a later section, indel noise is addressed. In the substitution noise model, let $p$ be the probability that a base is substituted by another base, with probability $p/3$ to be any other base. The errors are assumed to be independent across bases and across reads.

%
%

\subsection*{Formulation}
Successful reconstruction by an algorithm is defined by 
the requirement that, with probability at least $1-\epsilon$, 
the reconstruction $\hat{\bf{s}}$ is a single contig which is within edit distance $\delta$ from the 
target genome $\bf{s}$. 
If an algorithm can achieve that guarantee at some $(N,L)$, it is called $\epsilon$-feasible
at $(N,L)$.
This formulation implies automated genome finishing, 
because the output of the algorithm is one single contig.  
The fundamental limit for the assembly problem is the set of $(N,L)$  for 
which successful reconstruction is possible by some algorithms. 
If $\hat{\bf{s}}$ is directly spelled out from a correct placement of the reads, 
the edit distance between $\hat{\bf{s}}$ and $\bf{s}$ is of the order of $pG$, 
where the error rate is $p$. 
This motivates fixing $\delta = 2 p G$ for concreteness.
The quality of the assembly can be further improved if we follow the assembly 
algorithm with a consensus stage in which we correct each base, 
e.g. with majority voting. 
But the consensus stage is not the focus in this paper.
\section*{Repeats structure and their relationship to the information requirement for successful reconstruction}
\subsection*{Long exact repeats and their relationship to assembly with noiseless reads}
We take a moment to carefully define the various types of exact repeats. 
Let $\bf{s}^{\ell}_t$ denote the length-$\ell$ substring of the DNA sequence $\bf{s}$ starting at position $t$. 
An exact repeat of length $\ell$ is a substring appearing twice, at some positions $t_1, t_2$ (so $\bf{s}^{\ell}_{t_1}=\bf{s}^{\ell}_{t_2}$) that is maximal (i.e. $s(t_1 - 1) \neq  s(t_2 - 1)$ and $s(t_1 + \ell) \neq s(t_2 + \ell)$). 

Similarly, an exact triple repeat of length-$\ell$ is a substring appearing three times, at positions $t_1, t_2, t_3,$ such that $\bf{s}^{\ell}_{t_1}=\bf{s}^{\ell}_{t_2}=\bf{s}^{\ell}_{t_3}$, and such that neither of $s(t_1 - 1) = s(t_2 - 1) = s(t_3 - 1)$ nor $s(t_1 + \ell) = s(t_2 + \ell) = s(t_3 + \ell)$ holds. 

A copy of a repeat is a single one of the instances of the substring appearances. A pair of exact repeats refers to two exact repeats, each having two copies. 
A pair of exact repeats, one at positions $t_1, t_3$ with $t_1 < t_3$ and the second at positions $t_2, t_4$ with $t_2 < t_4$, is interleaved if $t_1 < t_2 < t_3 < t_4$ or $t_2 < t_1 < t_4 < t_3$. 
The length of a pair of exact interleaved repeats is defined to be the length of the shorter of the two exact repeats.
A typical appearance of a pair of exact interleaved repeat is --X--Y--X--Y-- where X and Y represent two different exact repeat copies and the dashes represent non-identical sequence content. 

We let $\ell_{\rm max}$ be the length of the longest exact repeat, $\ell_{\rm int}$ be the length of the longest pair of exact interleaved repeats and $\ell_{\rm tri}$ be the length of the longest exact triple repeat.

As mentioned in the introduction, it was observed that the read length and coverage depth required for successful reconstruction using noiseless reads for many genomes
is governed by long exact repeats. 
For some algorithms (e.g. Greedy Algorithm), the read length requirement is bottlenecked by $\ell_{\rm max}$.
The Multibridging Algorithm in \cite{bresler2013optimal} can successfully reconstruct the genome with a minimum amount of information.
The corresponding minimum read length requirement is the critical exact repeat length $\ell_{\rm crit}=\max(\ell_{\rm int},\ell_{\rm tri})$.
\subsection*{Flanked repeats}
While exact repeats are defined as the segments terminated on each end by a single differing base (Fig \ref{exact}), flanked repeats are defined by the segments terminated on each end by a statistically uncorrelated region. 
We call that ending region to be the \textit{random flanking region}. 
A distinguishing characteristic of the random flanking region is a high Hamming distance to segment length ratio between the ends of two repeat copies.
The ratio in the random flanking region is around 0.75, which matches with that when the genomic content is independently and uniformly randomly generated. 
We observe that long repeats of many genomes terminate with random flanking region.
Additional statistical analysis is detailed in the Appendix.

If the repeat interior is exactly the same between two copies of the flanked repeat (Fig \ref{nonpolymorphic}), the corresponding flanked repeat is called a flanked exact repeat. 
If there are a few edits (called polymorphism) within the repeat interior (Fig \ref{polymorphic}), the corresponding flanked repeat is called a flanked approximate repeat.

The length of the repeat interior bounded by the random flanking region is then the flanked repeat length. 
%
We let $\tilde{\ell}_{\rm max}$ be the length of the longest flanked repeat, $\tilde{\ell}_{\rm int}$ be the length of the longest pair of flanked interleaved repeats and $\tilde{\ell}_{\rm tri}$ be the length of the longest flanked triple repeat. The critical flanked repeat length is then $\tilde{\ell}_{\rm crit}=\max(\tilde{\ell}_{\rm int},\tilde{\ell}_{\rm tri})$.

\begin{figure} 
\subfloat[Exact repeat]{\includegraphics[width=8cm]{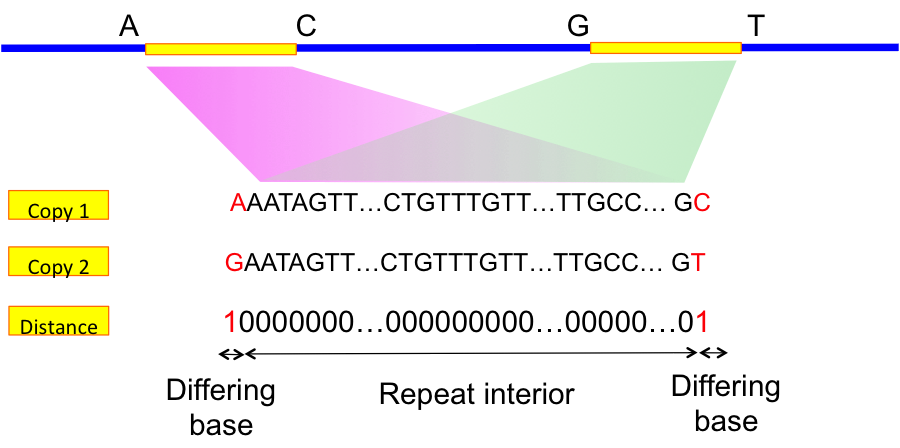}\label{exact}}

\subfloat[Flanked exact repeat]{\includegraphics[width=8cm]{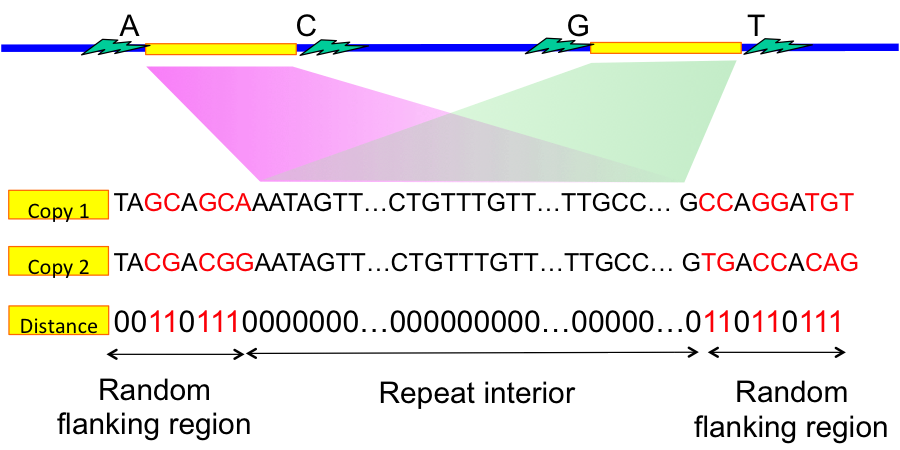}\label{nonpolymorphic}}

\subfloat[Flanked approximate repeat]{\includegraphics[width=8cm]{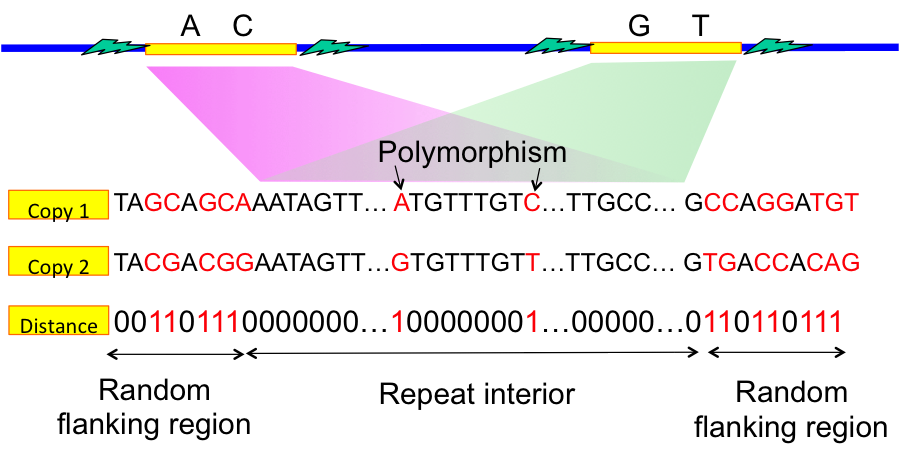}\label{polymorphic}
}
\par
\begin{minipage}[t]{7cm}
\caption{Repeat pattern}
\end{minipage}
\end{figure}

\begin{figure} 
  
\begin{minipage}[t]{7cm}
\subfloat[Noiseless reads spanning an exact repeat and its terminating bases]{
\includegraphics[width=7cm]{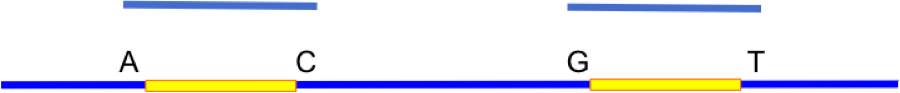}\label{intuition_1}}
\vspace{10pt}
  \end{minipage}
 \par

\begin{minipage}[t]{7cm}
\subfloat[Noisy reads spanning a flanked exact repeat  and its terminating random flanking region]{
\includegraphics[width=7cm]{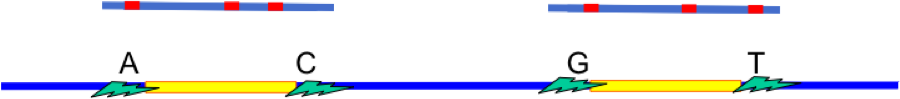}\label{intuition_2}}
\vspace{10pt}
\end{minipage}
 \par

\begin{minipage}[t]{7cm}
\subfloat[Noisy reads extending to span a flanked approximate repeat and its terminating random flanking region]{
\includegraphics[width=7cm]{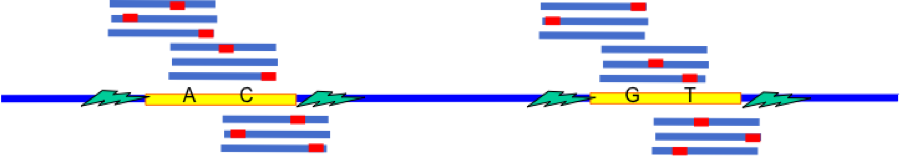}\label{intuition_3}
\vspace{10pt}
}
\end{minipage}
\par

\vspace{10pt}
\begin{minipage}[t]{7cm}
\caption{Intuition behind the information requirement}
\end{minipage}
\end{figure}

\subsection*{Long flanked exact repeats and their relationship to assembly with noisy reads}

If all long flanked repeats are flanked exact repeats, we 
can utilize the information in the random flanking region to generalize Greedy Algorithm and Multibridging Algorithm to handle noisy reads.
The corresponding information requirement is very similar to that when we are 
dealing with noiseless reads. 

The key intuition is as follows. 
A criterion for successful reconstruction is the existence of reads to span the repeats to their neighborhood.
When a read is noiseless, it only need to be long enough to span the repeat interior to its neighborhood 
by one base (Fig \ref{intuition_1}) so as to differentiate between two exact repeat copies. 
When a read is noisy, it then need to be long enough to span the repeat interior plus a short extension into 
the random flanking region (Fig \ref{intuition_2}) so as to confidently differentiate between two flanked repeat copies. 
However, the Hamming distance between two flanked repeat copies' neighborhood in the random flanking region is very high even within a short length. 
This can be used to differentiate between two flanked repeat copies confidently even when the reads are noisy. 
The short extension into the random flanking region has a length which is typically of order of tens whereas the long repeat length is of order of thousands.
Therefore, relative to the repeat length, the change of the critical read length requirement from handling noiseless reads to noisy reads is only marginal when all long repeats are flanked exact repeats.

\subsection*{Long flanked approximate repeats and their relationship to assembly with noisy reads}
%
If a long flanked repeat is a flanked approximate repeat, the flanked repeat length may be significantly longer than the length of its longest enclosed exact repeat. 
Merely relying on the information 
provided by the random flanking region requires the reads to be of length longer than 
the flanked repeat length for successful reconstruction. 
This explains why the information requirement for Greedy Algorithm and Multibridging Algorithm
has a significant increase when we use noisy reads instead of noiseless reads (as shown in Fig \ref{noisyPlot}).
However, if we utilize the information provided by the coverage, we can still confidently differentiate different 
repeat copies by phasing the small edits within the repeat interior (Fig \ref{intuition_3}). 
Specifically, we design X-phased Multibridging whose information requirement is close to the noiseless lower bound even when some long repeats are flanked approximate repeats, as shown in Fig \ref{noisyPlot}. 

\subsection*{From information theoretic insight to algorithm design}
Because of the structure of long flanked repeats,  there are two important sources of information 
that we specifically want to utilize when designing data-efficient algorithms to assemble noisy reads. 
They are 
\begin{itemize}
 \item The random flanking region beyond the repeat interior
 \item The coverage given by multiple reads overlapping at the same site
\end{itemize}

Greedy Algorithm(Alg \ref{alg:Noisy-Greedy}) utilizes the random flanking region when considering overlap. 
The minimum read length needed for successful reconstruction is close to $\tilde{\ell}_{\rm max}$. 

Multibridging Algorithm(Alg \ref{alg:Noisy-Multi-Bridging-1}) also utilizes the random flanking region but it improves upon Greedy Algorithm
 by using a De Bruijn graph to aid the resolution of flanked repeats. 
The minimum read length needed for successful reconstruction is close to $\tilde{\ell}_{\rm crit}$. 

X-phased Multibridging(Alg \ref{alg:Alignment-bridging}) further utilizes the coverage given by multiple reads 
to phase the polymorphism within the repeat interior of flanked approximate repeats. 
The minimum read length needed for successful reconstruction is close to $\ell_{\rm crit}$, which is the noiseless lower bound even when some long repeats are flanked approximate repeats.

\section*{Model for genome}
To capture the key characteristics of repeats and to guide the design of assembly algorithms, 
we use the following parametric probabilistic model for genome. 
A target genome is modeled as  a random vector $\bf{s}$ of length
$G$ that has the following three key components (a pictorial representation is depicted in 
Figure \ref{fig:Generative-model-for}).  

\textbf{Random background:}
The background of the genome is a random vector, composed of uniformly and 
independently picked bases from the alphabet set $\Sigma=\{A,C,G,T\}$. 

\textbf{Long flanked repeats:}
On top of the random background, we randomly position the longest flanked repeat and the longest flanked triple repeat. Moreover, we randomly position a flanked repeat interleaving the longest flanked repeat, forming the longest pair of flanked interleaved repeat.
The corresponding length of the flanked repeats are  $\tilde{\ell}_{\rm max}$, 
$\tilde{\ell}_{\rm tri}$ and $\tilde{\ell}_{\rm int}$ respectively. 
It is noted that  $ \tilde{\ell}_{\rm max} > \max (\tilde{\ell}_{\rm int},\tilde{\ell}_{\rm tri})$.

\textbf{Polymorphism and long exact repeats:}
Within the  repeat interior of the flanked repeats, 
we randomly position $n_{\rm max}$, $n_{\rm int}$  and $n_{\rm tri}$ edits (polymorphism) respectively. 
The sites of polymorphism are chosen such that the longest exact repeat, 
the longest pair of exact interleaved repeats and the longest exact triple repeat are of 
length $\ell_{\rm max}$, $\ell_{\rm int}$ and $\ell_{\rm tri}$ respectively.

\begin{figure} 
\includegraphics[width=8cm]{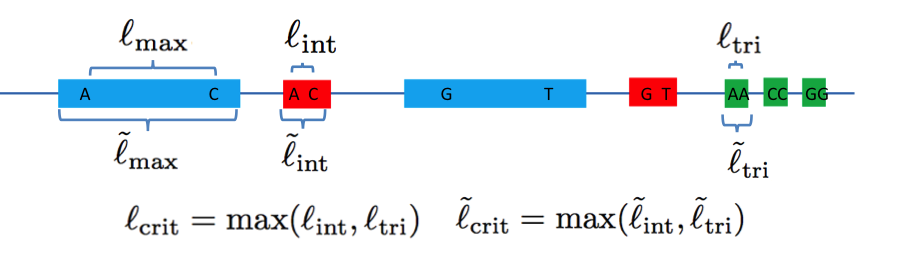}
\caption{Model for genome \label{fig:Generative-model-for}}
\end{figure}

\section*{Algorithm design and analysis}

\subsection*{Greedy Algorithm}
Read $R_2$ is a successor of read $R_1$ if there exists length-$W$ suffix of $R_1$ 
and length-$W$ prefix of $R_2$ such that they are extracted from the same 
locus on the genome. 
Furthermore, there is no other reads that can satisfy the same condition with a
larger $W$. 
To properly determine successors of reads in the presence of long repeats and noise, we need to define an appropriate overlap rule for reads. In this section, we show the conceptual development towards defining such a rule, which is called RA-rule.

\textbf{Noiseless reads and long exact repeats: }If the reads are noiseless, 
all reads can be paired up with their successors correctly with high probability 
when the read length exceeds $\ell_{\rm max}$.  
It was done \cite{bresler2013optimal} by greedily pairing reads and their candidate successors based on their overlap score in descending 
order. 
When a read and a candidate successor are paired, they will be removed from the 
pool for pairing.
Here the overlap score between a read and a candidate successor is the maximum length 
such that the suffix of the read and prefix of the candidate successor match \textit{exactly}. 

\textbf{Noisy reads and random background: }Since we cannot expect exact match for noisy reads, we need a different way to 
define the overlap score. 
Let us consider the following toy situation. 
Assume that we have exactly one length-$(\ell +1) $ noisy read starting at each locus of a 
length $G$ random genome(i.e. only consists of the random background). 
Each read then overlaps with its successor precisely by $\ell$ bases. 
Analogous to the noiseless case, one would expect to pair reads greedily based 
on overlap score. 
Here the overlap score between a read and a candidate successor is the maximum length 
such that the suffix ($x$) of the read and prefix  ($y$) of the candidate successor match \textit{approximately}. 
To determine whether they match \textit{approximately}, one can use a  predefined a threshold factor 
$\alpha$ and compute the  Hamming distance  $d(x,y)$.
If $d(x,y)\le \alpha\cdot \ell$, then they match \textit{approximately}, otherwise not. 
Given this decision rule, we can have false positive (i.e. having any pairs of reads mistakenly paired up) and false negative
(i.e. having any reads not paired up with the true successors). 
If false positive and false negative probability are small, this naive method is a reliable 
enough metric.
This can be achieved by using a long enough length $\ell>\ell_{\rm iid}$ 
and an appropriately chosen threshold $\alpha$. 

Recall that $\epsilon$ is the overall failure probability.
By bounding the sum of false positive and false negative probability by $\epsilon/3$, 
one can find $\ell_{\rm iid}(p,\epsilon/3,G)$ and $\alpha(p,\epsilon/3,G)$ to be the $(\ell_{\rm iid},\alpha)$ solution to the following pair of equations: 
\begin{eqnarray}
G^{2}\cdot  \exp(-\ell_{\rm iid}\cdot D(\alpha||\frac{3}{4})) &=\frac{\epsilon}{6}\\
G\cdot  \exp(-\ell_{\rm iid}\cdot D(\alpha||2p-\frac{4}{3}p^{2}))&=\frac{\epsilon}{6}
\end{eqnarray}
where $D(a||b)=a\log\frac{a}{b}+(1-a)\log\frac{1-a}{1-b}$ is the Kullback-Leibler divergence.

\textbf{Noisy reads and long flanked repeats: }However, when the genome contains long flanked repeats on top of the random background, 
this naive rule of determining  overlap is not enough. 
Let us look at the example in Fig. \ref{fig:Intuition-about-why-1}.
\begin{figure}
\begin{centering}
\includegraphics[width=8cm]{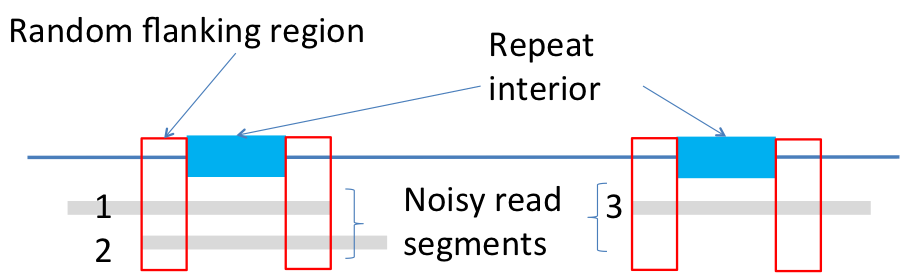} 
\par\end{centering}

\begin{minipage}[t]{7cm}
\caption{\label{fig:Intuition-about-why-1}Intuition about why we define the overlap rule to be RA-overlap rule}
\end{minipage}
\end{figure}
As shown in Fig. \ref{fig:Intuition-about-why-1}, because of long flanked repeats, we have a small ratio of overall distance against the overlap length for
 the segments that are extracted from different copies of the repeat (e.g Segment 1 and Segment 3 in Fig. \ref{fig:Intuition-about-why-1}). 
Therefore, the overall Hamming distance between two segments is not a good enough metric for defining overlap.
If we abide by the naive rule, we need to increase the read length significantly longer than the flanked repeat length so as to guarantee 
confidence in deciding approximate match. 
Otherwise, it will either result in a high false positive rate (if we set a large $\alpha$) or a high false negative 
rate (if we set a small $\alpha$). 
To properly handle such scenario, we define a repeat-aware rule(or RA-rule).
\begin{itemize}
\item RA-matching: Two segments $(x,y)$ of length $W$ match under the RA-rule  if and only 
if the distance between whole segments
is $<\alpha\cdot W$ and \textbf{both} of its ending segments(of length
$\ell_{\rm iid}$) also have distance $<\alpha\cdot \ell_{\rm iid}$.  
\item RA-overlap: The overlap score between a read and a candidate successor 
under the RA-rule is the maximum length 
such that the suffix of the read and prefix of the candidate successor match under the RA-matching. 
\end{itemize}
The RA-rule is particularly useful because it puts an emphasis on both ends of the overlap region. 
Since the ends are separated by a long range, one end will hopefully  
originate from the random flanking region of the flanked repeat.
If we focus on the segments originating from the random flanking region, the distance per segment length 
ratio will be very high when the segments originate from different copies of the repeat but 
very low when they originate from the same copy of the repeat.
This is how we utilize the random flanking region to differentiate between repeat copies and determine 
correct successors in the presence of long flanked repeats and noise. 

If we use  Greedy Algorithm (Alg \ref{alg:Noisy-Greedy}) to merge reads greedily with this
overlap rule (RA-rule), Prop \ref{prop:greedy} shows the information requirement under the previously described sequencing model and genome model. 
A plot is shown in Fig \ref{noisyPlot}.  
Since $\ell_{\rm iid}$ is of order of tens whereas $\tilde{\ell}_{\rm max}$ is 
of order of thousands, the read length requirement for Greedy Algorithm 
to succeed is dominated by $\tilde{\ell}_{\rm max}$. 
The detailed proof of Prop \ref{alg:Noisy-Greedy} is given in Appendix. 

\begin{algorithm}[tbh]
\raggedright{}
Initialize contigs to be reads \;

\For{ $W=L$ to $\ell_{\rm iid}$ } {
\If {any two contigs x,y are of overlap $W$ under RA-rule }
{  merge $x,y$ into one contig.}}

\caption{\label{alg:Noisy-Greedy}Greedy Algorithm}
\end{algorithm}

\begin{proposition}
\label{prop:greedy}With $\ell_{\rm iid}=\ell_{\rm iid}(p,\frac{\epsilon}{3},G)$,
if
$$L > \tilde{\ell}_{\rm max}+2\ell_{\rm iid},$$
\[
N \hspace{1mm} > \hspace{1mm} \max \left( \frac{G \ln (3/\epsilon) }{L - \tilde{\ell}_{\rm max} - 2 \ell_{\rm iid}}  , \hspace{1mm}
\frac{G\ln (3 N/\epsilon)}{L - 2 \ell_{\rm iid}} 
\right)
\]

then, Greedy Algorithm(Alg \ref{alg:Noisy-Greedy}) is $\epsilon-$feasible at $(N, L)$. 
\end{proposition}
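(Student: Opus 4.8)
The plan is to exhibit three ``bad'' events, each of probability at most $\epsilon/3$, such that off their union Greedy Algorithm deterministically returns the correct single contig. I would let $\mathcal{E}_1$ be the event that the RA-matching predicate is reliable: every pair of length-$m$ segments ($m\ge\ell_{\rm iid}$) extracted from the \emph{same} genomic locus by two different reads has Hamming distance below $\alpha m$, and every pair of length-$m$ segments ($m\ge\ell_{\rm iid}$) whose loci are not aligned copies of a common long flanked repeat has Hamming distance above $\alpha m$. I would let $\mathcal{E}_2$ be the event that each long flanked repeat of the genome model --- of which there are only a constant number of copies --- is spanned, together with an extra margin $\ell_{\rm iid}$ on each side, by at least one read. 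I would let $\mathcal{E}_3$ be the event that, with the reads sorted by starting position, every consecutive pair overlaps by at least $2\ell_{\rm iid}$ (so that the genome is covered and every true overlap presents two disjoint length-$\ell_{\rm iid}$ end windows to the RA-rule). A union bound then gives overall failure probability at most $\epsilon$ once each $\mathcal{E}_i^c$ has probability at most $\epsilon/3$.

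The tail bounds on $\mathcal{E}_1^c,\mathcal{E}_2^c,\mathcal{E}_3^c$ are the routine part, and each of the three hypotheses in the Proposition is exactly what makes its bound work. For $\mathcal{E}_1^c$: a same-locus pair of observed bases disagrees with probability $2p-\tfrac43p^2$, a generic distinct-locus pair with probability $\approx\tfrac34$; applying a Chernoff/Sanov estimate to $\mathrm{Bin}(m,2p-\tfrac43p^2)$ and $\mathrm{Bin}(m,\tfrac34)$ --- worst case at $m=\ell_{\rm iid}$ --- and union-bounding over the $O(G)$ same-locus overlaps and $O(G^2)$ distinct-locus end windows yields $\tfrac\epsilon6+\tfrac\epsilon6$, which is precisely what the defining equations for $\ell_{\rm iid}(p,\epsilon/3,G)$ and $\alpha(p,\epsilon/3,G)$ encode. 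For $\mathcal{E}_2^c$: a read spans a given repeat copy (length $\le\tilde{\ell}_{\rm max}$) with both margins iff it starts in a window of length $L-\tilde{\ell}_{\rm max}-2\ell_{\rm iid}$, so failure per copy is at most $\exp(-N(L-\tilde{\ell}_{\rm max}-2\ell_{\rm iid})/G)$, and the first term in the lower bound on $N$ drives this below $\epsilon/3$. For $\mathcal{E}_3^c$: a gap exceeding $L-2\ell_{\rm iid}$ has probability at most $\exp(-N(L-2\ell_{\rm iid})/G)$, and union-bounding over the $N$ gaps forces the second term, since $N\exp(-N(L-2\ell_{\rm iid})/G)\le\epsilon/3$ rearranges to exactly that inequality.

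The combinatorial core is to show that on $\mathcal{E}_1\cap\mathcal{E}_2\cap\mathcal{E}_3$ the algorithm is correct. I would prove, by downward induction on the loop variable $W$ from $L$ to $\ell_{\rm iid}$, that after the $W$-th iteration the current contigs are exactly the maximal chains of reads in which each consecutive pair overlaps by more than $W$, and that each contig spells the (noisy) sequence of the corresponding genome interval. In the inductive step: (i) by $\mathcal{E}_1$ a true consecutive overlap $w_i\ge2\ell_{\rm iid}$ is recognized, so every correct merge occurs; (ii) because $L>\tilde{\ell}_{\rm max}+2\ell_{\rm iid}$, no read --- hence no contig prefix or suffix --- lies entirely inside a long flanked repeat, so at least one of its two length-$\ell_{\rm iid}$ ends falls in a random flanking region or in generic background, and this, with $\mathcal{E}_2$ and the two-sided end check of the RA-rule, rules out every spurious merge of segments drawn from different copies of a long flanked repeat and also pins the RA-overlap score of a contig with the next one to the true overlap value (so merges happen at the right $W$, not earlier). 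When the loop ends, $\mathcal{E}_3$ makes the unique maximal chain span all of $\mathbf{s}$, so the output is one contig; since that contig is $\mathbf{s}$ with independent substitutions at rate at most $p$, its edit distance to $\mathbf{s}$ is at most $pG$ in expectation and at most $\delta=2pG$ off a further Chernoff-small event whose exponent is absorbed, giving $\epsilon$-feasibility at $(N,L)$.

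The main obstacle is step (ii): one must carefully enumerate, for every loop value $W$, all the ways a contig suffix and a contig prefix coming from different genome loci could still RA-match --- most delicately when the overlap region, or just one of its length-$\ell_{\rm iid}$ end windows, straddles or sits inside two aligned copies of a long flanked repeat (including near-periodic degenerate configurations) --- and verify that in each case the two-sided end check forces a mismatch unless the two loci genuinely coincide. This is exactly where $L>\tilde{\ell}_{\rm max}+2\ell_{\rm iid}$ (no read fits inside a long repeat) and the bridging event $\mathcal{E}_2$ do their work, and it is the step demanding the most care; by comparison the tail estimates above and the final edit-distance bound are standard.
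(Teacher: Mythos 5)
Your proposal follows essentially the same route as the paper's proof: a union bound over the same three $\epsilon/3$-events (reliability of the RA-matching rule via the defining equations of $\ell_{\rm iid}$ and $\alpha$, $\ell_{\rm iid}$-bridging of the longest flanked repeat, and sufficient overlap of successive reads), with each tail bound matching the corresponding term in the hypothesis, and the same case analysis on where the two end windows of a candidate overlap fall relative to the repeat. Your downward induction on $W$ actually spells out the deterministic correctness step in more detail than the paper, which merely asserts it as a sufficient-conditions lemma.
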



\subsection*{Multibridging Algorithm}

The read length requirement of Greedy Algorithm has a bottleneck around 
$\tilde{\ell}_{\rm max}$ because it requires at least one copy of each flanked repeat to be spanned by at least one read for successful reconstruction.  
Spanning a repeat by a single read is called bridging in \cite{bresler2013optimal}. 
A natural question is whether we need to have all repeats bridged for successful reconstruction.

In the noiseless setting, \cite{bresler2013optimal} shows that this condition 
can be relaxed. 
Using noiseless reads, one can have successful reconstruction given all copies of each exact triple repeat being bridged, and at least one copy of one of the repeats in each pair of exact interleaved repeats being bridged.

A key idea to allow such a relaxation in \cite{bresler2013optimal}  is to use a De Bruijn graph to capture the structure of the genome.

When the reads are noisy, we can utilize the random flanking region to specify a De Bruijn graph 
with high confidence by RA-rule and arrive at a similar relaxation. 
By some graph operations to handle the residual errors, we can have successful reconstruction 
with read length $\tilde{\ell}_{\rm crit}+2\cdot \ell_{\rm iid}<L<\tilde{\ell}_{\rm max}$. 
The algorithm is summarized in Alg \ref{alg:Noisy-Multi-Bridging-1}.
Prop \ref{prop:NMB} shows its information requirement under the previously described sequencing model and genome model.
A plot is shown in Fig \ref{noisyPlot}. 
We note that Alg \ref{alg:Noisy-Multi-Bridging-1} can be 
seen as a noisy reads generalization of Multibridging Algorithm 
for noiseless reads in \cite{bresler2013optimal}. 

\subsubsection*{Description and its performance}
\begin{proposition}
\label{prop:NMB}With $\ell_{\rm iid}=\ell_{\rm iid}(p,\frac{\epsilon}{3},G)$,\textup{
if}

$$ L > \tilde{\ell}_{\rm crit}+2\ell_{\rm iid}, $$
\[
N \hspace{1mm} > \hspace{1mm} \max \left( \frac{G \ln (3/\epsilon) }{L - \tilde{\ell}_{\rm crit} - 2 \ell_{\rm iid}}  , \hspace{1mm}
\frac{G\ln (3 N/\epsilon)}{L - 2 \ell_{\rm iid}} 
\right)
\]
then,  Multibridging Algorithm(Alg \ref{alg:Noisy-Multi-Bridging-1})
is $\epsilon-$feasible at $(N, L)$. 
\end{proposition}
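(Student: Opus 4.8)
The plan is to mirror the noiseless Multibridging analysis of \cite{bresler2013optimal}, substituting the RA-rule for exact matching at every step, and to partition the failure budget $\epsilon$ into three equal pieces. First I would fix $\ell_{\rm iid}=\ell_{\rm iid}(p,\tfrac{\epsilon}{3},G)$ together with the associated threshold $\alpha=\alpha(p,\tfrac{\epsilon}{3},G)$, and isolate three failure events to be controlled at level $\epsilon/3$ each: (i) the RA-rule returns a wrong verdict on some pair of segments queried by the algorithm; (ii) some copy of the flanked triple repeat is not bridged, or the shorter repeat of the flanked interleaved pair is not bridged on any copy; (iii) the reads fail to cover $\mathbf{s}$ with consecutive overlaps long enough for the $K$-mer spectrum induced by the reads to tile the genome. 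On the complement of all three I will argue that Alg \ref{alg:Noisy-Multi-Bridging-1} outputs a single contig within edit distance $\delta=2pG$ of $\mathbf{s}$, which is exactly $\epsilon$-feasibility at $(N,L)$.

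For event (i), the two defining equations for $(\ell_{\rm iid},\alpha)$ are precisely the Chernoff/Kullback--Leibler bounds that make the endpoint tests of the RA-rule reliable. The first equation, $G^{2}\exp(-\ell_{\rm iid}D(\alpha\|\tfrac34))=\epsilon/6$, controls false positives: a length-$\ell_{\rm iid}$ segment pair drawn from statistically uncorrelated loci --- the random background, or the random flanking regions of two distinct repeat copies --- has normalized Hamming distance concentrated at $\tfrac34$, and a union bound over the $O(G^{2})$ such pairs keeps the chance that any of them passes the ``$<\alpha\ell_{\rm iid}$'' test below $\epsilon/6$. The second equation, $G\exp(-\ell_{\rm iid}D(\alpha\|2p-\tfrac43 p^{2}))=\epsilon/6$, controls false negatives: a same-locus segment pair differs only through independent substitutions, mismatch probability $2p-\tfrac43 p^{2}<\alpha$, so a union bound over the $O(G)$ such pairs keeps the chance that any of them fails the test below $\epsilon/6$ (the same concentration, over any window of length $\ge\ell_{\rm iid}$, also makes the whole-window test of RA-matching pass for same-locus overlaps, and that test can only reject, so it creates no false positives). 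Because any two reads or $K$-mers taken from different copies of a flanked repeat necessarily have at least one $\ell_{\rm iid}$-end anchored in a random flanking region, that end's test rejects the false match whp, while a genuine same-locus overlap passes both endpoint tests whp; hence on the complement of (i), RA-overlap recovers the true successor/adjacency relation exactly, just as exact overlap does in the noiseless analysis.

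Next I would build and condense the $K$-mer graph exactly as in noiseless Multibridging, declaring two $K$-mers equal precisely when they RA-match. On the complement of (i) and (iii) this graph is isomorphic, as a labelled multigraph, to the true De Bruijn graph of $\mathbf{s}$ at the chosen resolution: distinct genomic $K$-mers stay distinct (background $K$-mers at different positions are pairwise far apart; $K$-mers touching different repeat copies are separated through their flanking ends; and --- crucially --- the exact-repeat cores inside \emph{flanked approximate} repeats are kept apart by their polymorphic sites, so that only genuinely exact flanked repeats are allowed to collapse), while identical genomic $K$-mers are merged. Contracting the non-branching paths then leaves a condensed graph whose only branching (``complex'') nodes are the $O(1)$ nodes produced by the three modelled long flanked repeats, the random background contributing no spurious structure whp; here I would invoke the structural lemmas of \cite{bresler2013optimal}. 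I expect this faithfulness claim to be the main obstacle: one has to rule out both spurious merges and spurious branches under the \emph{simultaneous} correctness of all the correlated RA-queries the algorithm issues, and it is exactly this correlation that forces the $G^{2}$ union bound and the particular form of $\ell_{\rm iid}$.

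Finally I would run the resolution step. An RA-bridging read --- one that covers a flanked-repeat interior together with $\ell_{\rm iid}$ bases into each flanking side --- pins down, on the complement of (i), which in-edge continues into which out-edge at the corresponding complex node. By the noiseless Multibridging resolution argument of \cite{bresler2013optimal}, once every copy of the flanked triple repeat is bridged and the shorter repeat of the flanked interleaved pair is bridged on some copy (so that pair, and with it the longest flanked repeat, is resolved), the resolved condensed graph admits a unique Eulerian-type traversal spelling one contig equal to $\mathbf{s}$ up to the unavoidable $O(pG)$ per-base substitution noise, hence within $\delta=2pG$. It then remains to size events (ii) and (iii). For (ii): a flanked repeat of length $\tilde{\ell}\le\tilde{\ell}_{\rm crit}$ is RA-bridged as soon as some read starts in a window of width $L-\tilde{\ell}-2\ell_{\rm iid}\ge L-\tilde{\ell}_{\rm crit}-2\ell_{\rm iid}>0$; the chance that none does is at most $\exp(-N(L-\tilde{\ell}_{\rm crit}-2\ell_{\rm iid})/G)$, and a union bound over the constantly many critical repeat copies (the constant absorbed into the logarithm) is $\le\epsilon/3$ once $N>G\ln(3/\epsilon)/(L-\tilde{\ell}_{\rm crit}-2\ell_{\rm iid})$. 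For (iii): indexing reads by start position, a fatal coverage gap near a given read requires that no other read starts within the next $L-2\ell_{\rm iid}$ bases, probability at most $\exp(-N(L-2\ell_{\rm iid})/G)$, so a union bound over the $N$ reads is $\le\epsilon/3$ once $N>G\ln(3N/\epsilon)/(L-2\ell_{\rm iid})$. Taking $N$ above the maximum of these two thresholds makes the three events total at most $\epsilon$, which is the claimed $\epsilon$-feasibility.
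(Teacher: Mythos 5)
Your overall architecture --- three failure events (long enough overlap between successive reads, correctness of all RA-rule queries, bridging of the critical flanked repeats), each budgeted at $\epsilon/3$, with two exponential bounds giving exactly the two terms in the maximum --- is the same decomposition the paper uses (its conditions on successive-read overlap, $K$-mer clustering, and successor/predecessor identification in the deterministic lemma for Multibridging, each bounded as in its Greedy and Simple De Bruijn theorems). The gap is in your justification of the $K$-mer graph's structure. You claim that ``the exact-repeat cores inside flanked approximate repeats are kept apart by their polymorphic sites, so that only genuinely exact flanked repeats are allowed to collapse.'' This is false for the RA-rule: in a pairwise comparison of two noisy $K$-mers drawn from different copies of a flanked approximate repeat, the sparse polymorphism contributes only a handful of extra mismatches against a noise floor of roughly $(2p-\tfrac{4}{3}p^{2})K$ mismatches, so both the whole-window test and (for $K$-mers interior to the repeat) both endpoint tests pass, and the clusters merge. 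The paper states this explicitly --- that merge is precisely the X-node artifact that motivates X-phased Multibridging --- and if your claim were true, Multibridging alone would already achieve read length near $\ell_{\rm crit}$ rather than $\tilde{\ell}_{\rm crit}$, making the next algorithm unnecessary.

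The correct argument, and the reason the proposition nonetheless holds, runs differently: (a) for the interleaved pair and the triple repeat, $K=\tilde{\ell}_{\rm crit}+2\ell_{\rm iid}$ exceeds their flanked lengths by $2\ell_{\rm iid}$, so every $K$-mer overlapping such a repeat protrudes by at least $\ell_{\rm iid}$ into a random flanking region on at least one side, and that endpoint test separates the copies; (b) for the longest flanked repeat (length $\tilde{\ell}_{\rm max}$, possibly exceeding both $K$ and $L$), the interior $K$-mers \emph{do} collapse across copies into a single condensed node, and resolvability comes not from keeping them distinct but from the uniqueness of the Euler path in a condensed graph where this is the only unresolved repeat; and (c) the $K$-mers straddling the boundary of that repeat (inside its $\ell_{\rm iid}$-neighborhood but not its interior) may merge or fail to merge inconsistently, producing short spurious branches --- this is what the branch-clearing step and the requirement that the interleaving gaps exceed $2\ell_{\rm iid}$ are for, and your proof is silent on it. Repairing the faithfulness step along these lines (distinguishing interior $K$-mers from boundary $K$-mers, as the paper does with its sets $S_{1}$ and $S_{0}\setminus S_{1}$) closes the gap; the remainder of your argument, including the sizing of the bridging and coverage events, matches the paper's.
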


\begin{algorithm}[tbh]
\raggedright{}

1. Choose K to be $\tilde{\ell}_{\rm crit}+2\ell_{\rm iid}$ and extract K-mers
from reads.

2. Cluster K-mers based on the RA-rule.

3. Form uncondensed De Bruijn graph $G_{De-Bruijn}=(V,E)$ with the following rule:

\begin{itemize}
\item a) K-mers clusters as node set $V$.
\item b)  $(u,v)=e\in E$ if and only if there exists K-mers
$u_{1}\in u$ and $v_{1}\in v$ such that $u_{1}$,$v_{1}$are consecutive K-mers in some reads.
\end{itemize}

4. Join the disconnected components of $G_{De-Bruijn}$ together by
the following rule:

\For{ $W=K-1$ to $\ell_{\rm iid}$} {

\For {each node $u$ which has either no predecessors / successors
in $G_{De-Bruijn}$ }{

a) Find the predecessor/successor $v$ for $u$ from all possible
K-mers clusters such that overlap length(using any representative
K-mers in that cluster) between $u$ and $v$ is $W$ under RA-rule.

b) Add dummy nodes in the De Bruijn graph to link $u$ with $v$
and update the graph to $G_{De-Bruijn}$
}
}
5. Condense the graph $G_{De-Bruijn}$ to form $G_{string}$ with
the following rule:

\begin{itemize}
\item a) Initialize $G_{string}$ to be $G_{De-Bruijn}$ with node labels of
each node being its cluster group index.

\item b) \While {$\exists$successive nodes $u\to v$ such that $out-degree(u)=1$and
$in-degree(v)=1$}{

\hspace{10pt} bi) Merge $u$ and $v$ to form a new node $w$

\hspace{10pt} bii) Update the node label of $w$ to be the concatenation of node labels of $u$ and $v$
}
\end{itemize}

6. Clear Branches of $G_{string}$:

\For {each node $u$ in the condensed graph $G_{string}$}{

\If {$out-degree(u)>1$ and that all the successive paths are of the
same length(measured by the number of node labels) and then joining
back to node $v$ and the path length $<\ell_{\rm iid}$}
{we merge the paths into a single path from $u$ to $v$. }

}
7. Condense graph $G_{string}$

8. Find the genome :

\begin{itemize}
\item a) Find an Euler Cycle/Path in $G_{string}$ and output the concatenation
of the node labels to form a string $\vec{\bf{s}}_{labels}$.

\item  b) Using $\vec{\bf{s}}_{labels}$ and look up the associated K-mers to
form the final recovered genome $\hat{\bf{s}}$. 
\end{itemize}

\caption{\label{alg:Noisy-Multi-Bridging-1}Multibridging Algorithm}
\end{algorithm}

Detailed proof is given in the Appendix. The following sketch highlights the motivation behind the key steps of Multibridging Algorithm.

{[}Step1{]} We set a large K value to make sure the K-mers overlapping
the shorter repeat of the longest pair of flanked interleaved repeats and the longest flanked
triple repeat can be separated as distinct clusters.

{[}Step2{]} Clustering is done using the RA-rule because of the existence
of long flanked repeats and noise.

{[}Step3{]} A K-mer cluster corresponds to an equivalence class for K-mers matched under the RA-rule. This step forms a De Bruijn graph with K-mer clusters as nodes.

{[}Step4{]} Because of large K, the graph can be disconnected due
to insufficient coverage. In order to reduce the coverage constraint,
we connect the clusters greedily.

{[}Step5, 7{]} These two steps simplify the graph.

{[}Step6{]} Branch clearing repairs any incorrect merges near the boundary of long flanked repeat. 

{[}Step8{]} Since an Euler path in the condensed graph corresponds to the correct genome sequence, it is traversed to form the reconstructed genome.

\subsubsection*{Some implementation details: improvement on time and space efficiency }
For Multibridging Algorithm, the most computational expensive step is the clustering of K-mers. 
To improve the time and space efficiency, this clustering step can be approximated by performing pairwise comparison of reads.

Based on the alignment of the reads, we can cluster K-mers from different reads together 
using a disjoint set data structure that supports union and find operations.
Since only reads are used in the alignment, only the K-mer indices along with their associated read indices and offsets need to be stored in memory--- not all the K-mers. 

Pairwise comparison of reads roughly runs in $\tilde{\Theta}(N^{2}L^{2})$ if done in the naive way.
To speed up the pairwise comparison of noisy reads, one can utilize the fact that the read length is 
long.
We can extract all consecutive $f$-mers (which act as fingerprints) of the reads and do a lexicographical sort to find candidate neighboring reads and associated offsets for comparison. 
Since the reads are long, if two reads overlap, there should exist some perfectly matched $f$-mers which can be identified after the lexicographical sort.
This allows an optimized version of Multibridging Algorithm to run in $\tilde{\Theta}(NL\cdot \frac{NL}{G})$ time and $\tilde{\Theta}(NLf)$ space.

\subsection*{X-phased Multibridging}

As shown in Fig \ref{noisyPlot}, when long repeats are flanked approximate repeats, there
can be a big gap between the noiseless lower bound and the information
requirement for Multibridging Algorithm. 
A natural question is whether this is due to a fundamental lack of information from the 
reads or whether Multibridging Algorithm does not utilize all the available information.
In this section, we demonstrate that there is an important source of information
 provided by coverage which is not utilized by Multibridging Algorithm. 
In particular, we introduce  X-phased Multibridging, an assembly algorithm that utilizes the information 
provided by coverage to phase the polymorphism in long flanked repeat interior.
The information requirement of X-phased Multibridging is close to the noiseless 
lower bound (as shown in Fig \ref{noisyPlot}) even when some long repeats are flanked approximate repeats.

  
\subsubsection*{Description of  X-phased Multibridging}
\begin{figure}
\begin{centering}
\subfloat[Consensus Step ]{\includegraphics[width=8cm]{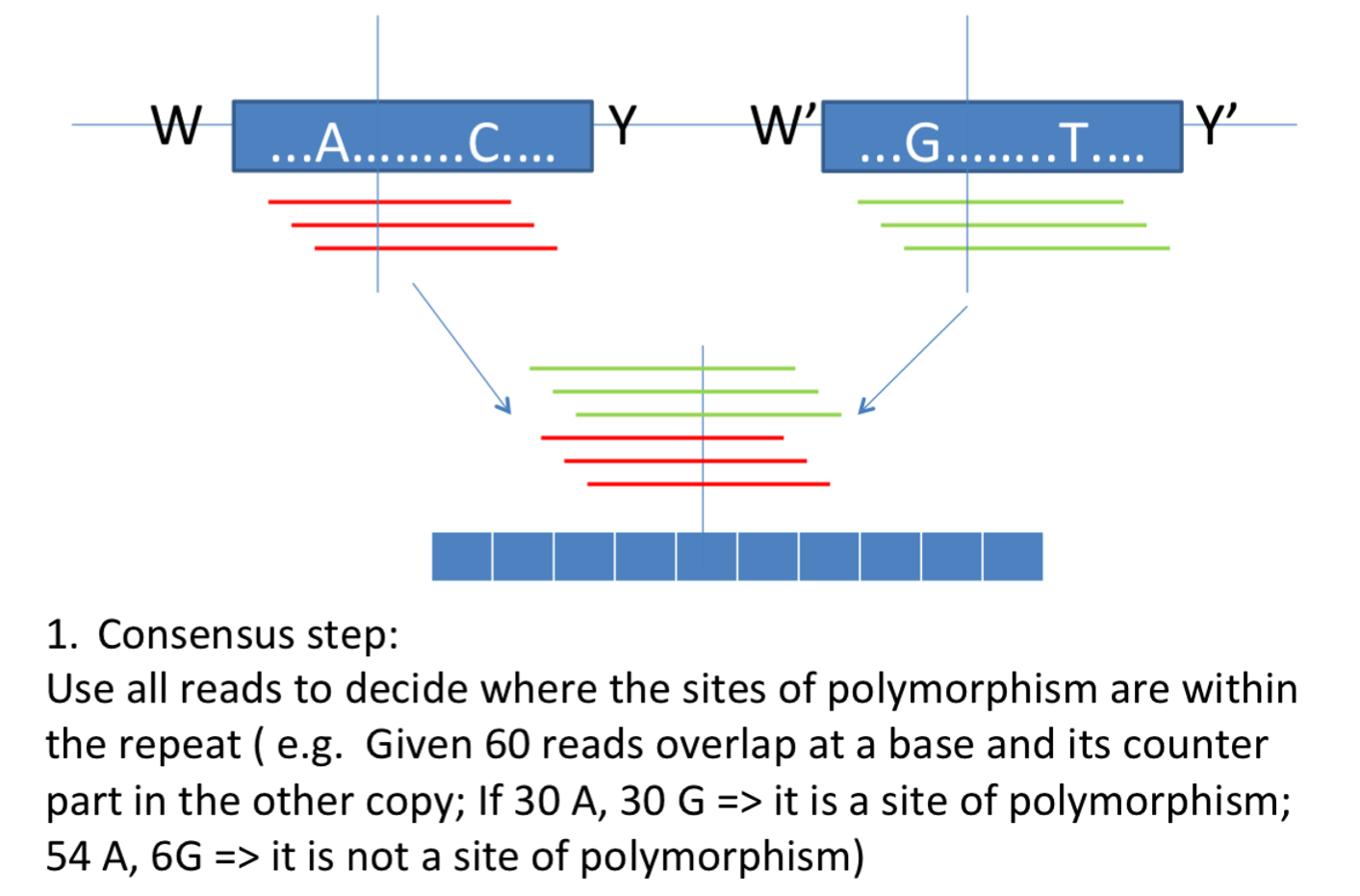}
}

\subfloat[Read Extension Step]{\includegraphics[width=8cm]{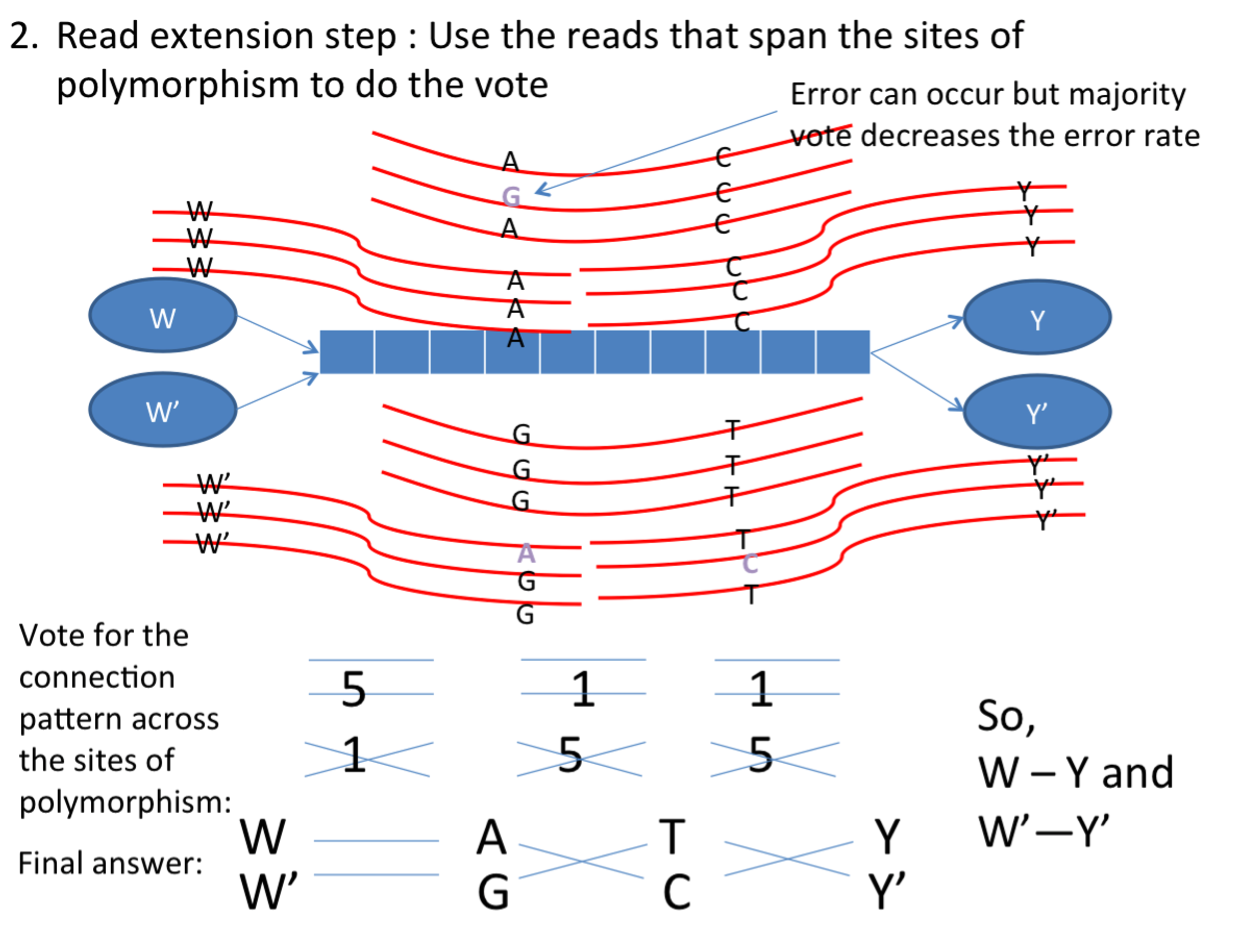}}
\par\end{centering}

\begin{minipage}[t]{7cm}
\caption{\label{fig:Illustration-of-how}Illustration of how to phase polymorphism to extend reads across repeats}
\end{minipage}
\end{figure}

Multibridging Algorithm utilizes the random flanking region to differentiate between repeat copies.
However, for a flanked approximate repeat, its enclosed exact repeat does not terminate with the 
random flanking region but only terminates with sparse polymorphism. 
When we consider the overlap of \textit{two} reads originating from different copies of a flanked approximate repeat, 
the distinguishing polymorphism is so sparse that it cannot be used to confidently differentiate between repeat copies. 
Therefore, there is a need to use the extra redundancy introduced by the 
coverage from \textit{multiple} reads to confidently differentiate between repeat copies and 
that is what X-phased Multibridging utilizes. 

X-phased Multibridging (Alg \ref{alg:Alignment-bridging}) follows the algorithmic design of Multibridging Algorithm. 
However, it adds an extra phasing procedure to differentiate between repeat copies of long flanked repeats that Multibridging Algorithm cannot confidently differentiate. 
We recall that after running step 7 of Multibridging Algorithm, a node in the graph $G_{string}$ corresponds to a 
substring of the genome and has node label consisting of consecutive K-mer cluster indices.
An X-node of $G_{string}$ is a node that has in-degree and out-degree $\ge 2$.
X-node indeed corresponds to a flanked repeat.
The incoming/outgoing nodes of the X-node correspond to the incoming/outgoing 
random flanking region of the flanked repeat. 

To be concrete, we focus the discussion on a pair of flanked interleaved repeats, assuming triple repeats
are not the bottleneck. However, the ideas presented can be generalized to repeats of more copies. 

For the flanked approximate repeat with length $\ell_{\rm int}<L$ 
and $\tilde{\ell}_{\rm int}>L$ (as shown in Fig \ref{fig:Illustration-of-how}),  there is no node-disjoint 
paths joining incoming/outgoing random flanking region with the distinct repeat copies in $G_{string}$.
It is because the reads are corrupted by noise and the polymorphism is too sparse to differentiate between the repeat copies. 
Executing Multibridging Algorithm directly will result in the formation of an X-node, which is an artifact due to K-mers from different copies of the flanked approximate repeat erroneously clustered together.

Successful reconstruction requires an algorithm to pair up the correct 
incoming/outgoing nodes of the X-node(i.e. decide how $W,W'$
and $Y,Y'$ are linked in Fig \ref{fig:Illustration-of-how}). 
This is handled by the phasing procedure in X-phased Multibridging, which uses all the reads information.
The phasing procedure is composed of two main steps:
\begin{itemize}
  \item Consensus step: Confidently find out where the sites of polymorphism are located 
  within the flanked repeat interior. 
  \item Read extension step: Confidently determine how to extend reads using the random flanking region 
  and sites of polymorphism as anchors. 
\end{itemize}

\textbf{Consensus step} 
For the X-node of interest,  let $D$ be the set of reads originating from any sites of the associated flanked repeat region and let $x_{1}$ and $x_{2}$ denote the associated repeat copies. 
Since the random flanking region is used as anchor, it is treated as 
the starting base (i.e. $x_{1}(0)=W$ and
$x_{2}(0)=W'$). 
For the $i^{th}$ subsequent site of the flanked repeat (where
$1\le i\le \tilde{\ell}_{\rm int}$),
we determine the consensus according to Eq (\ref{eq:consen}).  
This can be implemented by
counting the frequency of occurrence of each alphabet overlapping at each site of the 
repeat.
The consensus result determines the sites of polymorphism and the most likely pairs of bases 
at the sites of polymorphism.
\begin{equation}
\max_{F\subset\{A,C,G,T\}^{2}}{\cal P}(\{x_{1}(i),x_{2}(i)\}=F\mid D)\label{eq:consen}
\end{equation}

\textbf{Read extension step} After knowing the sites of polymorphism, we use those reads that 
span the sites of polymorphism or random flanking region to help decide how to extend reads 
across the flanked repeat.
Let $\sigma$ be
the possible configuration of alphabets at the sites of polymorphism
and random flanking region (e.g. $\sigma=(ACY,GTY')$ means that the two copies of the flanked repeat 
with the corresponding random flanking region respectively are W--A--C--Y, W'--G--T--Y' where 
the common bases are omitted). 

The following maximum a posteriori estimation is used to decide the correct configuration. 
\begin{equation}
\max_{\sigma}{\cal P}(\hat{\sigma}=\sigma\mid D,\{x_{1}(i),x_{2}(i)\}_{i=1}^{\tilde{\ell}_{\rm int}})\label{eq:ML}
\end{equation}
 where $\hat{\sigma} $ is the estimator, $D$ is the raw read set, and $x_1,x_2$ are the estimates from the consensus step.
It is noted that the size of the feasible set for $\sigma$ is $2^{n_{\rm int}+1}$.

In practice, for computational efficiency, the maximization in Eq (\ref{eq:ML}) can be approximated accurately 
even if it is replaced by the simple counting illustrated in Fig \ref{fig:Illustration-of-how}, which
we call count-to-extend algorithm(countAlg). 
CountAlg uses the raw reads to establish majority vote on how one should extend to the next sites 
of polymorphism using only the reads that span the sites of polymorphism.

\begin{algorithm}
\raggedright{}
1. Perform Step 1 to Step 7 of MultiBridging Algorithm

2. For every X-node $x\in G_{string}$
\begin{itemize}
\item a) Align all the relevant reads to the flanked repeat $x$
\item b) Consensus step: Consensus to find location of polymorphism by solving Eq (\ref{eq:consen})
\item c) Read extension step: If possible, resolve flanked repeat(i.e. pair up the incoming/outgoing nodes of $x$) by either countAlg or by solving Eq (\ref{eq:ML}) 
\end{itemize}

3. Perform Step 8 of MultiBridging Algorithm as in Alg \ref{alg:Noisy-Multi-Bridging-1}

\caption{\label{alg:Alignment-bridging} X-phased Multibridging}
\end{algorithm}

\subsubsection*{Performance }
After introducing the phasing procedure in X-phased Multibridging, we proceed to 
find its information requirement for successful reconstruction. 

The information requirement for X-phased Multibridging is the amount of information required to reduce the error of the phasing procedure to a negligible level.  The phasing procedure -- step 2 in Alg. \ref{alg:Alignment-bridging} -- is a combination of consensus and read extension steps, which contribute to the error as follows.
 
Let ${\cal E}$ be the error event of the repeat phasing procedure for a repeat,
$\epsilon_{1}$
be the error probability for the consensus step,
$\epsilon_{2}$ be
the error probability for the read extension step given $k$ reads
spanning each consecutive site of polymorphism within the flanked repeat,
$\delta_{cov}$ be the probability for having $k$ reads spanning
each consecutive sites of polymorphism(i.e. $k$ bridging reads) within the flanked repeat. 
We have, 
\begin{eqnarray}
{\cal P}({\cal E}) \le \epsilon_{1}+\epsilon_{2}+\delta_{cov}\label{eq:performance-1}
\end{eqnarray}
Therefore, to guarantee confidence in the phasing procedure,
 it suffices to upper bound $\epsilon_{1}$, $\epsilon_{2}$ and $\delta_{cov}$.  
We tabulate the error probabilities of $\epsilon_{1}$, $\epsilon_{2}$  in Table \ref{tab:Calibration-of-error} 
for phasing a flanked repeat (whose length is 5000 whereas the genome length is 5M). 
The flanked repeat has two sites of polymorphism which partition it into three equally spaced segments. 

From Table \ref{tab:Calibration-of-error}, when $p=0.01$, the information requirement
translates to the condition of having three bridging reads spanning the shorter exact repeat of 
the longest pair of exact interleaved repeats. 
Therefore, the information requirement for  X-phased Multibridging shown in Fig \ref{noisyPlot} 
also corresponds to this condition.  
It is noted that  X-phased Multibridging has the same vertical asymptote as the 
noiseless lower bound. 
The vertical shift is due to the increase of requirement on the number of bridging reads from $k=1$ (noiseless case) to $k=3$ (noisy case).
\begin{table}
\begin{centering}
\subfloat[\label{e1}Calibration for $\epsilon_{1}$]{%
\begin{tabular}{|c|c|c|}
\hline 
$p$    & Coverage (NL/G)  & $\epsilon_{1}$\tabularnewline
\hline 
\hline 
0.01    & 20  & 0.00\tabularnewline
\hline 
0.01   & 40  & 0.00\tabularnewline
\hline 
0.01    & 60  & 0.00\tabularnewline
\hline 
0.1    & 20  & 0.16\tabularnewline
\hline 
0.1    & 40  & 0.00\tabularnewline
\hline 
0.1    & 60  & 0.00\tabularnewline
\hline 
\end{tabular}

}

\subfloat[\label{e2}Calibration for $\epsilon_{2}$]{%
\begin{tabular}{|c|c|c|}
\hline 
$p$  & Number of bridging reads $k$  & Upper bound for $\epsilon_{2}$\tabularnewline
\hline 
\hline 
0.01  & 1  & 0.060\tabularnewline
\hline 
0.01  & 3  & 0.0036\tabularnewline
\hline 
0.01  & 5  & 0.00024\tabularnewline
\hline 
0.1  & 11  & 0.089\tabularnewline
\hline 
0.1  & 21  & 0.022\tabularnewline
\hline 
0.1  & 31  & 0.0059\tabularnewline
\hline 
\end{tabular}

}
\par\end{centering}

\caption{\label{tab:Calibration-of-error}Calibration of error probability
made by the phasing procedure of X-phased Multibridging}
\end{table}

\section*{Simulation of the prototype assembler}
Based on the algorithmic design presented, we implement a prototype assembler for automatic genome finishing using reads corrupted by substitution noise. First, the assembler was tested on synthetic genomes, which were generated according to the genome model described previously.  This demonstrates a proof-of-concept that one can achieve genome finishing with read length close to $\ell_{\rm crit}$, as shown in Fig \ref{fig:Simulation-of-the}.  The number on the line represents the number of simulation rounds (out of 100) in which the reconstructed genome is a single contig with  $\ge 99 \%$ of its content matching the ground truth.

Second, the assembler was tested using synthetic reads, sampled from genome ground truth downloaded from NCBI.   The assembly results  are
shown in Table \ref{tab:Assembly-of-several}.
The observation from the simulation result is that we can assemble genomes to 
finishing quality with information requirement near the noiseless lower bound.
More information about the detail design of the prototype assembler is presented in the Appendix 
and source code/data set can be found in \cite{kkcode}.

\begin{table*}[!tbh]
\begin{centering}
\begin{tabular}{|c|c|c|c|c|c|c|c|c|c|c|c|c|}
\hline 
Index  & Species  & $G$  & $p$  & $\frac{NL}{G}$  & $L$  & $\tilde{\ell}_{\rm max}$  & $\tilde{\ell}_{\rm crit}$ & $\ell_{\rm crit}$  & \% match  & Ncontig  & $\frac{N}{N_{noiseless}}$  & $\frac{L}{\ell_{\rm crit}}$\tabularnewline
\hline 
\hline 
1  & a  & 1440371  & 1.5\%  & 37.36 X  & 930  & 1817  & 803  & 770  & 100.00  & 1  & 1.57  & 1.21\tabularnewline
\hline 
2  & a  & 1440371  & 1.5\%  & 33.14 X  & 970  & 1817  & 803  & 770  & 99.95  & 1  & 1.67  & 1.26\tabularnewline
\hline 
3  & a  & 1440371  & 1.5\%  & 29.60 X  & 1000  & 1817  & 803  & 770  & 99.99  & 1  & 1.66  & 1.30\tabularnewline
\hline 
4  & b  & 1589953  & 1.5\%  & 40.82 X  & 2440  & 4183  & 2155  & 2122  & 100.00  & 1  & 1.30  & 1.15\tabularnewline
\hline 
5  & b  & 1589953  & 1.5\%  & 21.31 X  & 2752  & 4183  & 2155  & 2122  & 99.99  & 1  & 1.19  & 1.30\tabularnewline
\hline 
6  & b  & 1589953  & 1.5\%  & 20.66 X  & 2900  & 4183  & 2155  & 2122  & 99.99 & 1  & 1.35  & 1.37\tabularnewline
\hline 
7  & c  & 1772693  & 1.5\%  & 30.03 X  & 3950  & 5018  & 3234  & 3218  & 99.96  & 1  & 1.36  & 1.23\tabularnewline
\hline 
8  & c  & 1772693  & 1.5\%  & 21.96 X  & 4279  & 5018  & 3234  & 3218  & 99.97  & 1  & 1.33  & 1.33\tabularnewline
\hline 
9  & c  & 1772693  & 1.5\%  & 17.03 X  & 4700  & 5018  & 3234  & 3218  & 100.00  & 1  & 1.31  & 1.46\tabularnewline
\hline 
10  & d  & 1006701  & 1.5\%  & 35.23 X  & 6867  & 15836  & 10518  & 5494  & 99.05  & 1  & 1.72  & 1.25\tabularnewline
\hline 
11  & d  & 1006701  & 1.5\%  & 19.88 X  & 7500  & 15836  & 10518  & 5494  & 97.86  & 1  & 1.30  & 1.37\tabularnewline
\hline 
12  & d  & 1006701  & 1.5\%  & 17.69 X  & 9000  & 15836  & 10518  & 5494  & 98.10  & 1  & 1.68  & 1.64\tabularnewline
\hline 
\end{tabular}
\par\end{centering}

\caption{\label{tab:Assembly-of-several} Simulation results on the assembly of several real genomes using reads corrupted by substitution noise ((a)
\textit{Prochlorococcus marinus} (b) \textit{Helicobacter pylori} (c) \textit{Methanococcus maripaludis}
(d) \textit{Mycoplasma agalactiae}) with $\ell_{\rm crit}=\max(\ell_{\rm int},\ell_{\rm tri})$
, $\tilde{\ell}_{\rm crit}=\max(\tilde{\ell}_{\rm int},\tilde{\ell}_{\rm tri})$
and $N_{noiseless}$ is the lower bound on number of reads in the
noiseless case for $1-\epsilon=95\%$ confidence recovery}
\end{table*}

\begin{figure}
\includegraphics[width=7cm]{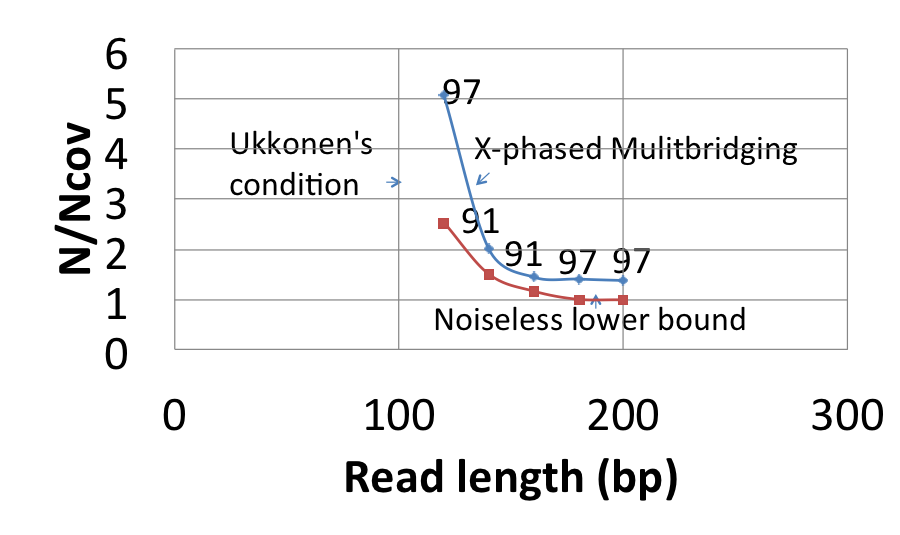} 

\begin{minipage}[t]{7.5cm}
\caption{Simulation results on the assembly of synthetic genomes using reads corrupted by substitution noise\label{fig:Simulation-of-the}. The parameters are as follows. $G=10K; \tilde{\ell}_{\rm max} =\ell_{\rm max} = 500, \tilde{\ell}_{\rm int} = 200,  \ell_{\rm int}=100$ with two sites of polymorphism within the flanked repeat. $p=1.5\%,\epsilon =5 \%  $. } 
\end{minipage}
\end{figure}

\begin{figure}[!htb]
\subfloat[Form K-mer Clusters\label{fig:Form-K-mer-Clusters}]{\includegraphics[width=5cm]{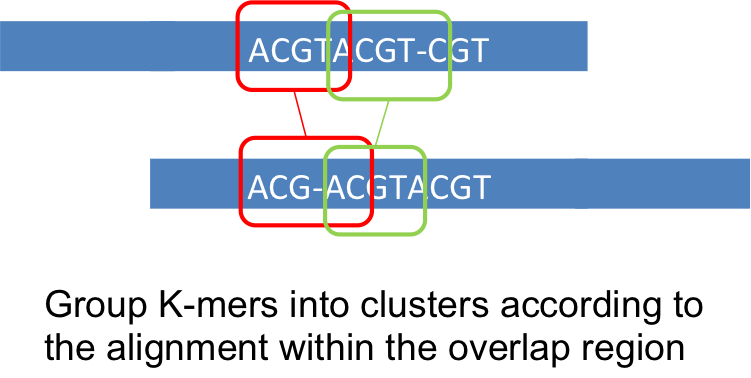}}

\subfloat[Abnormality in (indel) De Bruijn Graph\label{fig:Abnormalty-in-Noisy(indel)} ]{\includegraphics[width=7cm]{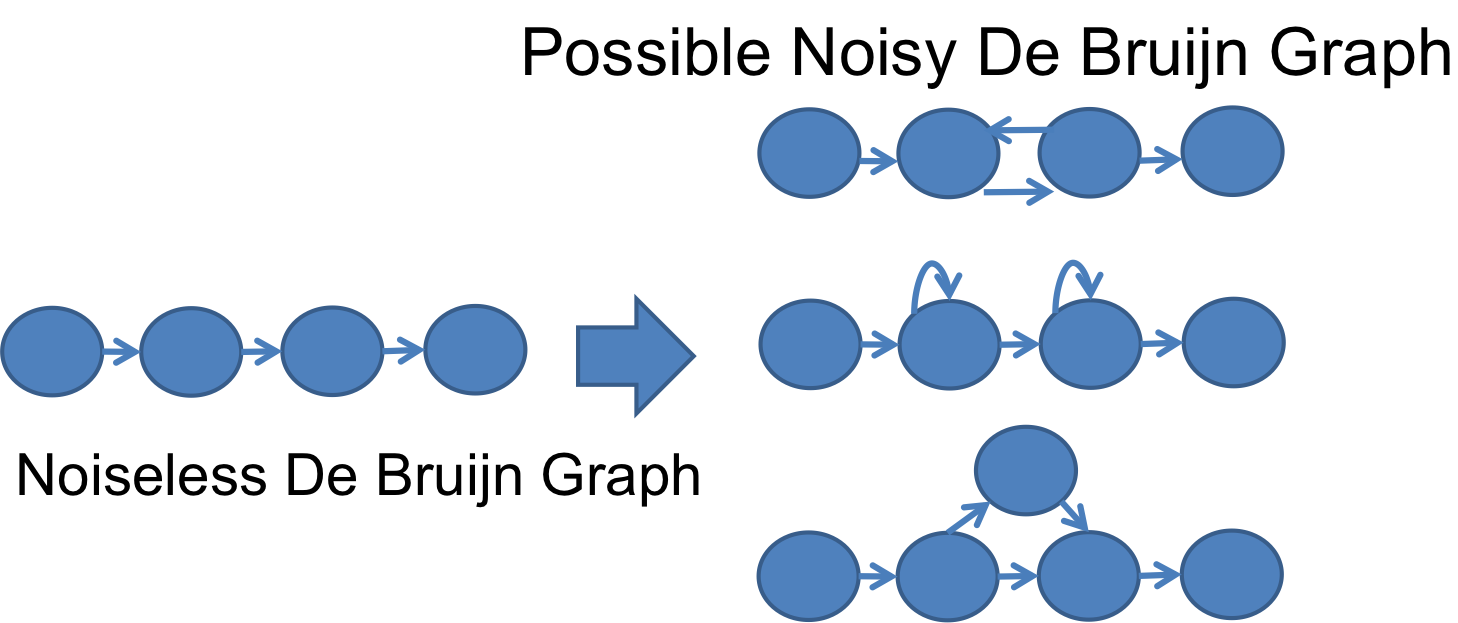}}

\begin{minipage}{7cm}
\caption{Treatment of reads corrupted by indel noise}  
\end{minipage}

\end{figure}
\section*{Extension to handle indel noise}

A further extension of the prototype assembler addresses the case of reads corrupted by indel noise.
Similar to the case of substitution noise, tests were performed on synthetic reads sampled from real genomes and synthetic genomes. 
Simulation results are summarized in Table \ref{tab:Simulation-Results-for}
where $p_{i},p_{d}$ are insertion probability and deletion probability and
rate is the number of successful reconstruction(i.e. simulation rounds that show mismatch $<5\%$) divided by total number of simulation rounds.
The simulation result for indel noise corrupted reads shows that X-phased Multibridging can be generalized to assemble indel noise corrupted reads.  The information requirement for automated finishing is about a factor of two from the noiseless lower bound for both N and L.

We remark that one 
non-trivial generalization is the way that we form the noisy De Bruijn graph for K-mer clusters. 
In particular, we first compute the pairwise overlap alignment among reads, then we
use the overlap alignment to group K-mers into clusters. Subsequently,
we link successive cluster of K-mers together as we do in Alg \ref{alg:Noisy-Multi-Bridging-1}.
An illustration is shown in Fig \ref{fig:Form-K-mer-Clusters}. However,
due to the noise being indel in nature, the edges in the noisy De
Bruijn graph may point in the wrong direction as shown in Fig \ref{fig:Abnormalty-in-Noisy(indel)}.
In order to handle this, we traverse the graph and remove such abnormality when they are detected. 

\begin{table*}
\begin{centering}
\begin{tabular}{|c|c|c|c|c|c|c|c|c|c|c|c|}
\hline 
Type & $G$  & $p_{i}$  & $p_{d}$  & $\frac{NL}{G}$  & $L$  & $\tilde{\ell}_{\rm max}$  & $\tilde{\ell}_{\rm crit}$  & $\ell_{\rm crit}$  & $\frac{N}{N_{noiseless}}$  & $\frac{L}{\ell_{\rm crit}}$ & Rate\tabularnewline
\hline 
\hline 
Synthetic & 50000 & 1.5\%  & 1.5\%  & 23.0 X & 200 & 500 & 200 & 100 & 2.25 & 2  & 28/30\tabularnewline
\hline 
Synthetic & 50000 & 1.5\%  & 1.5\%  & 24.1 X & 180 & 500 & 200 & 100 & 2.33 & 1.8 & 27/30\tabularnewline
\hline 
a & 1440371 & 1.5\%  & 1.5\%  & 28.53 X & 1000 & 1817  & 803  & 770  & 1.60 & 1.30& 1/1\tabularnewline
\hline 
b & 1589953 & 1.5\%  & 1.5\%  & 20.66 X & 2900 & 4183  & 2155  & 2122  & 1.35  & 1.37 & 1/1\tabularnewline
\hline
\end{tabular}
\par\end{centering}

\caption{\label{tab:Simulation-Results-for}Simulation results on the assembly of real/synthetic genomes using reads corrupted by indel noise(Synthetic: randomly generated to fit  $\tilde{\ell}_{\rm max}, \tilde{\ell}_{\rm crit}, \ell_{\rm crit}$; (a) : \textit{Prochlorococcus marinus} ; (b): \textit{Helicobacter pylori})}
\end{table*}

\section*{Conclusion}
In this work, we show that even when there is noise in the reads, 
one can successfully reconstruct with information requirements close to the noiseless 
fundamental limit. A new assembly algorithm, X-phased Multibridging,  
is designed based on a probabilistic model of the genome. It is shown through analysis  
to perform well on the model, and through simulations to perform well  on real genomes.

The main conclusion of this work is that, with an appropriately designed assembly algorithm, the information requirement for genome assembly is insensitive to moderate read noise. 
We believe that the information theoretic insight is useful to guide the design of future assemblers. 
We hope that these insights allow future assemblers to better leverage the high throughput sequencing read data to provide higher quality assembly.

\newpage{}  \bibliographystyle{plain}

\begin{backmatter}

\section*{Competing interests}
The authors  K.K.L and A.K are or were employees of Pacific Biosciences, a company commercializing DNA sequencing technologies at the time that this work was completed.

\section*{Author's contributions}
  K.K.L, A.K and D.T  performed research and wrote the manuscript. K.K.L implemented the algorithms and performed the experiments.

\section*{Acknowledgements}
  The assembly experiments were partly done on the computing infrastructure of Pacific Biosciences. 
  
\section*{Declaration}
  The authors K.K.L and D.T. are partially supported by the Center for Science of Information (CSoI), an NSF Science and Technology Center, under grant agreement CCF-0939370.

\section*{Additional Files}
  \subsection*{Additional file 1}
 Details of the proofs, in-depth description of the design of the prototype assembler and details of simulation results are presented.

\end{backmatter}

\bibliographystyle{plain}
\bibliography{myrefs}

\begin{thebibliography}{10}

\bibitem{bresler2013optimal}
Guy {\noop{AAA}}Bresler, Ma'ayan Bresler, and David Tse.
\newblock Optimal assembly for high throughput shotgun sequencing.
\newblock {\em BMC Bioinformatics}, 2013.

\bibitem{chin2013nonhybrid}
Chen-Shan Chin, David~H Alexander, Patrick Marks, Aaron~A Klammer, James Drake,
  Cheryl Heiner, Alicia Clum, Alex Copeland, John Huddleston, Evan~E Eichler,
  et~al.
\newblock Nonhybrid, finished microbial genome assemblies from long-read smrt
  sequencing data.
\newblock {\em Nature methods}, 2013.

\bibitem{gnerre2011high}
Sante Gnerre, Iain MacCallum, Dariusz Przybylski, Filipe~J Ribeiro, Joshua~N
  Burton, Bruce~J Walker, Ted Sharpe, Giles Hall, Terrance~P Shea, Sean Sykes,
  et~al.
\newblock High-quality draft assemblies of mammalian genomes from massively
  parallel sequence data.
\newblock {\em Proceedings of the National Academy of Sciences},
  108(4):1513--1518, 2011.

\bibitem{gordon1998consed}
David Gordon, Chris Abajian, and Phil Green.
\newblock Consed: a graphical tool for sequence finishing.
\newblock {\em Genome research}, 8(3):195--202, 1998.

\bibitem{poster}
Asif Khalak, Ka~Kit Lam, Greg Concepcion, and David Tse.
\newblock Conditions on finishable read sets for automated de novo genome
  sequencing.
\newblock {\em Sequencing, Finishing and Analysis in the Future}, May, 2013.

\bibitem{koren2012hybrid}
Sergey Koren, Michael~C Schatz, Brian~P Walenz, Jeffrey Martin, Jason~T Howard,
  Ganeshkumar Ganapathy, Zhong Wang, David~A Rasko, W~Richard McCombie, Erich~D
  Jarvis, et~al.
\newblock Hybrid error correction and de novo assembly of single-molecule
  sequencing reads.
\newblock {\em Nature biotechnology}, 30(7):693--700, 2012.

\bibitem{kkcode}
Ka-Kit Lam, Asif Khalak, and David Tse.
\newblock www.eecs.berkeley.edu/ \textasciitilde kakitone.

\bibitem{lander1988genomic}
Eric~S Lander and Michael~S Waterman.
\newblock Genomic mapping by fingerprinting random clones: a mathematical
  analysis.
\newblock {\em Genomics}, 2(3):231--239, 1988.

\bibitem{mardis2002finished}
Elaine Mardis, John McPherson, Robert Martienssen, Richard~K Wilson, and
  W~Richard McCombie.
\newblock What is finished, and why does it matter.
\newblock {\em Genome research}, 12(5):669--671, 2002.

\bibitem{medini2008microbiology}
Duccio Medini, Davide Serruto, Julian Parkhill, David~A Relman, Claudio Donati,
  Richard Moxon, Stanley Falkow, and Rino Rappuoli.
\newblock Microbiology in the post-genomic era.
\newblock {\em Nature Reviews Microbiology}, 6(6):419--430, 2008.

\bibitem{motahari2013optimal}
Abolfazl Motahari, Kannan Ramchandran, David Tse, and Nan Ma.
\newblock Optimal dna shotgun sequencing: Noisy reads are as good as noiseless
  reads.
\newblock {\em Proceedings of the 2013 IEEE International Symposium on
  Information Theory, Istanbul, Turkey, July 7-12, 2013}, 2013.

\bibitem{NM2011}
Giuseppe Narzisi and Bud Mishra.
\newblock Comparing de novo genome assembly: The long and short of it.
\newblock {\em PLoS ONE}, 6(4):e19175, 04 2011.

\bibitem{pop2009genome}
Mihai Pop.
\newblock Genome assembly reborn: recent computational challenges.
\newblock {\em Briefings in bioinformatics}, 10(4):354--366, 2009.

\bibitem{sequencing2007plan}
DNA SEQUENCING.
\newblock A plan to capture human diversity in 1000 genomes.
\newblock {\em Science}, 21:1842, 2007.

\bibitem{Sha48}
Claude~E. Shannon.
\newblock A mathematical theory of communication.
\newblock {\em The Bell System Technical Journal}, 27:379--423, 623--656, July,
  October 1948.

\bibitem{simpson2012efficient}
Jared~T Simpson and Richard Durbin.
\newblock Efficient de novo assembly of large genomes using compressed data
  structures.
\newblock {\em Genome Research}, 22(3):549--556, 2012.

\bibitem{turnbaugh2007human}
Peter~J Turnbaugh, Ruth~E Ley, Micah Hamady, Claire~M Fraser-Liggett, Rob
  Knight, and Jeffrey~I Gordon.
\newblock The human microbiome project.
\newblock {\em Nature}, 449(7164):804--810, 2007.

\bibitem{ukkonen1992approximate}
Esko Ukkonen.
\newblock Approximate string-matching with q-grams and maximal matches.
\newblock {\em Theoretical computer science}, 92(1):191--211, 1992.

\bibitem{zerbino2008velvet}
Daniel~R Zerbino and Ewan Birney.
\newblock Velvet: algorithms for de novo short read assembly using de bruijn
  graphs.
\newblock {\em Genome research}, 18(5):821--829, 2008.

\end{thebibliography}


\begin{thebibliography}{13}
\ifx \bisbn   \undefined \def \bisbn  #1{ISBN #1}\fi
\ifx \binits  \undefined \def \binits#1{#1}\fi
\ifx \bauthor  \undefined \def \bauthor#1{#1}\fi
\ifx \batitle  \undefined \def \batitle#1{#1}\fi
\ifx \bjtitle  \undefined \def \bjtitle#1{#1}\fi
\ifx \bvolume  \undefined \def \bvolume#1{\textbf{#1}}\fi
\ifx \byear  \undefined \def \byear#1{#1}\fi
\ifx \bissue  \undefined \def \bissue#1{#1}\fi
\ifx \bfpage  \undefined \def \bfpage#1{#1}\fi
\ifx \blpage  \undefined \def \blpage #1{#1}\fi
\ifx \burl  \undefined \def \burl#1{\textsf{#1}}\fi
\ifx \doiurl  \undefined \def \doiurl#1{\textsf{#1}}\fi
\ifx \betal  \undefined \def \betal{\textit{et al.}}\fi
\ifx \binstitute  \undefined \def \binstitute#1{#1}\fi
\ifx \binstitutionaled  \undefined \def \binstitutionaled#1{#1}\fi
\ifx \bctitle  \undefined \def \bctitle#1{#1}\fi
\ifx \beditor  \undefined \def \beditor#1{#1}\fi
\ifx \bpublisher  \undefined \def \bpublisher#1{#1}\fi
\ifx \bbtitle  \undefined \def \bbtitle#1{#1}\fi
\ifx \bedition  \undefined \def \bedition#1{#1}\fi
\ifx \bseriesno  \undefined \def \bseriesno#1{#1}\fi
\ifx \blocation  \undefined \def \blocation#1{#1}\fi
\ifx \bsertitle  \undefined \def \bsertitle#1{#1}\fi
\ifx \bsnm \undefined \def \bsnm#1{#1}\fi
\ifx \bsuffix \undefined \def \bsuffix#1{#1}\fi
\ifx \bparticle \undefined \def \bparticle#1{#1}\fi
\ifx \barticle \undefined \def \barticle#1{#1}\fi
\ifx \bconfdate \undefined \def \bconfdate #1{#1}\fi
\ifx \botherref \undefined \def \botherref #1{#1}\fi
\ifx \url \undefined \def \url#1{\textsf{#1}}\fi
\ifx \bchapter \undefined \def \bchapter#1{#1}\fi
\ifx \bbook \undefined \def \bbook#1{#1}\fi
\ifx \bcomment \undefined \def \bcomment#1{#1}\fi
\ifx \oauthor \undefined \def \oauthor#1{#1}\fi
\ifx \citeauthoryear \undefined \def \citeauthoryear#1{#1}\fi
\ifx \endbibitem  \undefined \def \endbibitem {}\fi
\ifx \bconflocation  \undefined \def \bconflocation#1{#1}\fi
\ifx \arxivurl  \undefined \def \arxivurl#1{\textsf{#1}}\fi
\csname PreBibitemsHook\endcsname

\bibitem{koon}
\begin{barticle}
\bauthor{\bsnm{Koonin}, \binits{E.V.}},
\bauthor{\bsnm{Altschul}, \binits{S.F.}},
\bauthor{\bsnm{Bork}, \binits{P.}}:
\batitle{Brca1 protein products: functional motifs}.
\bjtitle{Nat Genet}
\bvolume{13},
\bfpage{266}--\blpage{267}
(\byear{1996})
\end{barticle}
\endbibitem

\bibitem{oreg}
\begin{barticle}
\bauthor{\bsnm{Orengo}, \binits{C.A.}},
\bauthor{\bsnm{Bray}, \binits{J.E.}},
\bauthor{\bsnm{Hubbard}, \binits{T.}},
\bauthor{\bsnm{LoConte}, \binits{L.}},
\bauthor{\bsnm{Sillitoe}, \binits{I.}}:
\batitle{Analysis and assessment of ab initio three-dimensional prediction,
  secondary structure, and contacts prediction}.
\bjtitle{Proteins}
\bvolume{Suppl 3},
\bfpage{149}--\blpage{170}
(\byear{1999})
\end{barticle}
\endbibitem

\bibitem{khar}
\begin{botherref}
\oauthor{\bsnm{Kharitonov}, \binits{S.A.}},
\oauthor{\bsnm{Barnes}, \binits{P.J.}}:
Clinical Aspects of Exhaled Nitric Oxide.
in press
\end{botherref}
\endbibitem

\bibitem{zvai}
\begin{barticle}
\bauthor{\bsnm{Zvaifler}, \binits{N.J.}},
\bauthor{\bsnm{Burger}, \binits{J.A.}},
\bauthor{\bsnm{Marinova-Mutafchieva}, \binits{L.}},
\bauthor{\bsnm{Taylor}, \binits{P.}},
\bauthor{\bsnm{Maini}, \binits{R.N.}}:
\batitle{Mesenchymal cells, stromal derived factor-1 and rheumatoid arthritis
  [abstract]}.
\bjtitle{Arthritis Rheum}
\bvolume{42},
\bfpage{250}
(\byear{1999})
\end{barticle}
\endbibitem

\bibitem{xjon}
\begin{bchapter}
\bauthor{\bsnm{Jones}, \binits{X.}}:
\bctitle{Zeolites and synthetic mechanisms}.
In: \beditor{\bsnm{Smith}, \binits{Y.}} (ed.)
\bbtitle{Proceedings of the First National Conference on Porous Sieves: 27-30
  June 1996; Baltimore},
pp. \bfpage{16}--\blpage{27}
(\byear{1996}).
\bcomment{Stoneham: Butterworth-Heinemann}
\end{bchapter}
\endbibitem

\bibitem{schn}
\begin{bchapter}
\bauthor{\bsnm{Schnepf}, \binits{E.}}:
\bctitle{From prey via endosymbiont to plastids: comparative studies in
  dinoflagellates}.
In: \beditor{\bsnm{Lewin}, \binits{R.A.}} (ed.)
\bbtitle{Origins of Plastids}
vol. \bseriesno{2},
\bedition{2nd} edn.,
pp. \bfpage{53}--\blpage{76}.
\bpublisher{Chapman and Hall},
\blocation{New York}
(\byear{1993})
\end{bchapter}
\endbibitem

\bibitem{pond}
\begin{botherref}
Innovative Oncology
\end{botherref}
\endbibitem

\bibitem{smith}
\begin{bbook}
\beditor{\bsnm{Smith}, \binits{Y.}} (ed.):
\bbtitle{Proceedings of the First National Conference on Porous Sieves: 27-30
  June 1996; Baltimore}.
\bpublisher{Butterworth-Heinemann},
\blocation{Stoneham}
(\byear{1996})
\end{bbook}
\endbibitem

\bibitem{marg}
\begin{bbook}
\bauthor{\bsnm{Margulis}, \binits{L.}}:
\bbtitle{Origin of Eukaryotic Cells}.
\bpublisher{Yale University Press},
\blocation{New Haven}
(\byear{1970})
\end{bbook}
\endbibitem

\bibitem{hunn}
\begin{bchapter}
\bauthor{\bsnm{Hunninghake}, \binits{G.W.}},
\bauthor{\bsnm{Gadek}, \binits{J.E.}}:
\bctitle{The alveloar macrophage}.
In: \beditor{\bsnm{Harris}, \binits{T.J.R.}} (ed.)
\bbtitle{Cultured Human Cells and Tissues},
pp. \bfpage{54}--\blpage{56}.
\bpublisher{Academic Press},
\blocation{New York}
(\byear{1995}).
\bcomment{Stoner G (Series Editor): Methods and Perspectives in Cell Biology,
  vol 1}
\end{bchapter}
\endbibitem

\bibitem{advi}
\begin{botherref}
Advisory Committee on Genetic Modification:
Annual Report.
London
(1999).
Advisory Committee on Genetic Modification
\end{botherref}
\endbibitem

\bibitem{koha}
\begin{botherref}
\oauthor{\bsnm{Kohavi}, \binits{R.}}:
Wrappers for performance enhancement and obvious decision graphs.
PhD thesis,
Stanford University, Computer Science Department
(1995)
\end{botherref}
\endbibitem

\bibitem{mouse}
\begin{botherref}
The Mouse Tumor Biology Database.
\url{http://tumor.informatics.jax.org/cancer\_links.html}
\end{botherref}
\endbibitem

\end{thebibliography}

\newpage{}
\appendix

\section*{Appendix : Proof on Performance Guarantee}

Here we use the short hand of $l_{repeat}$, $l_{interleaved}$ and
$l_{triple}$ to represent the corresponding approximate repeat length
of longest simple, interleaved, triple repeat respectively.

\subsection*{Greedy Algorithm}

Let us define a $\theta-$ neighborhood of the repeat specified by
$x[a:b]$ and $x[c:d]$ to be the loci of repeat which
are $x[a-\theta:b+\theta]$ and $x[c-\theta:d+\theta]$.  

We say a repeat is $\theta-$bridged if there exists a read that cover
the $\theta$-neighborhood of at least one copy of the repeat. For
simplicity of arguments, we assume $l_{repeat}>>\max(l_{intereleave},l_{triple})$.
\begin{lemma}
\label{noisyGreedyCondition}We first note the following sufficient
conditions for Noisy Greedy to succeed.
\begin{enumerate}
\item Merging at stages from $L$ to $\ell_{\rm iid}(p,\epsilon,G)$ are merging
successive reads 
\item Every successive reads have overlap with length at least $\ell_{\rm iid}(p,\frac{\epsilon}{3},G)$
\end{enumerate}
\end{lemma}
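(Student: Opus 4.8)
The plan is to show that conditions (1) and (2) force the greedy loop to terminate with exactly one contig, and that this contig is a faithful placement of all the reads, hence within edit distance $\delta=2pG$ of $\mathbf{s}$, which is the success criterion from the Formulation section. First I would fix the combinatorial picture: order the $N$ reads by starting coordinate on $\mathbf{s}$ to get a cyclic chain $R_{(1)},\ldots,R_{(N)}$, and note that condition (2) --- every successive pair overlaps on the genome by at least $\ell_{\rm iid}(p,\epsilon/3,G)>0$ --- rules out coverage gaps, so the paper's successor relation coincides with ``next read in the chain'' and the chain is connected. All bookkeeping below is relative to this chain.

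Next I would establish an invariant maintained throughout the loop: at every stage each current contig is obtained by gluing a contiguous block $R_{(i)},\ldots,R_{(j)}$ of the chain at (approximately) their true relative offsets, and therefore spells the corresponding stretch of $\mathbf{s}$ up to the $O(p\cdot\text{length})$ discrepancy of a directly-placed reconstruction. The base case is immediate since each read is a block of length one. For the inductive step, condition (1) guarantees that every merge the algorithm performs joins two contigs whose touching boundary reads are successive in the chain; hence the two blocks are adjacent and concatenating them at the detected RA-overlap offset again yields a contiguous block, with spelled-out sequence deviating from $\mathbf{s}$ only within the per-base error budget. Summing these deviations over the at most $N-1$ merges keeps the total within $\delta=2pG$, so once a single contig remains the reconstruction is $\delta$-close to $\mathbf{s}$.

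It then remains to show the loop terminates with a single contig. Suppose it ends with two or more; since the contigs partition the chain into blocks, pick two blocks adjacent in the chain, let $R_{(j)}$ be the last read of one and $R_{(j+1)}$ the first read of the other, so $R_{(j+1)}$ is the successor of $R_{(j)}$. By condition (2) they overlap on the genome by some $w\ge\ell_{\rm iid}(p,\epsilon/3,G)$; since $R_{(j)}$ is a suffix of its contig and $R_{(j+1)}$ a prefix of its contig, the RA-overlap score of the pair is at least $w$ --- here I use that a genuine overlap of length $\ge\ell_{\rm iid}$ passes RA-matching, which is exactly what the calibration equations defining $\ell_{\rm iid}(p,\epsilon/3,G)$ and $\alpha$ guarantee. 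Consequently, when the loop variable $W$ reaches that score (a value in $[\ell_{\rm iid},L]$), the algorithm must merge these two contigs, or must already have merged them via intermediate steps that subsume both $R_{(j)}$ and $R_{(j+1)}$ into one contig --- contradicting the assumption that they remain separate. Hence a single contig is output.

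I expect the main obstacle to be the invariant of the second paragraph: one must check that the RA-overlap score of a pair of successive contigs locks onto the \emph{true} offset rather than a spurious shorter or longer alignment induced by a long flanked repeat or by noise, and that the accumulated positional slippage over $\Theta(N)$ merges stays within the $2pG$ budget. This is precisely where the ``both ends of the overlap must match'' clause of the RA-rule is essential --- for a long overlap at least one end falls in a random flanking region, pinning the alignment --- so I would isolate it as a short sub-lemma (true successive overlaps of length $\ge\ell_{\rm iid}$ are detected and are the unique RA-matches at that offset), dovetailing with the false-positive/false-negative analysis already set up for the random background.
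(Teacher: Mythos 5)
The paper never actually proves this lemma --- it is stated as a ``noted'' fact and the burden is shifted entirely to the subsequent theorem, which bounds the probabilities that each condition fails. Your proposal therefore supplies an argument the paper leaves implicit, and it is essentially the right one: the chain-of-successors picture, the block invariant under condition (1), and the termination argument under condition (2) are exactly the skeleton of the noiseless Greedy correctness proof in \cite{bresler2013optimal} that this lemma is silently importing. One caveat is worth making explicit. Read literally, conditions (1) and (2) are \emph{not} deterministically sufficient: condition (1) only forbids spurious merges and condition (2) only asserts that the physical overlaps exist, so a false negative of the RA-rule on a true successive overlap of length $w\ge\ell_{\rm iid}$ would still leave two contigs unmerged and the lemma would fail. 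You sense this and propose to patch it with a sub-lemma (``true successive overlaps of length $\ge\ell_{\rm iid}$ are detected and are the unique RA-matches at that offset''), but note that this sub-lemma is itself only a high-probability statement --- it is precisely the $2N\exp(-\ell_{\rm iid}D(\alpha\|\eta))$ false-negative term that the paper's theorem folds into ${\cal P}(E_{2}\mid E_{1}^{C}\cap E_{3}^{C})$ --- so it belongs among the lemma's hypotheses (or the conditions must be read as ``the merges performed are exactly the successive-read merges''), not in its proof. With that repair, your block invariant and termination argument go through, and your identification of the offset-locking issue (one end of any long overlap falls in a random flanking region, which pins the alignment under the two-ended RA-matching clause) is exactly the role the RA-rule is designed to play.
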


\begin{theorem}
Under the generative model on genome, with $\ell_{\rm iid}=\ell_{\rm iid}(p,\frac{\epsilon}{3},G)$,
$\alpha=\alpha(p,\frac{\epsilon}{3},G)$, if 
\begin{align*}
L & >l_{repeat}+2\cdot \ell_{\rm iid}\\
G & >N>\max(\frac{G\ln\frac{3}{\epsilon}}{L-l_{repeat}-2\cdot \ell_{\rm iid}},\frac{G\cdot\ln\frac{N}{\epsilon/3}}{L-\ell_{\rm iid}})
\end{align*}
, then ${\cal P}({\cal S}^{C})\le\epsilon$.\end{theorem}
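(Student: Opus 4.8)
The plan is to deduce the theorem from the sufficient conditions in Lemma~\ref{noisyGreedyCondition} by a union bound over three ``bad'' events, allocating a failure budget of $\epsilon/3$ to each: (i) the RA-rule makes an incorrect accept/reject decision on some pair of segments that Greedy ever compares; (ii) no single read $\ell_{\rm iid}$-bridges any copy of the longest flanked repeat (i.e.\ no read spans it together with an $\ell_{\rm iid}$-margin on each side); and (iii) some pair of successive reads overlaps by fewer than $\ell_{\rm iid}$ bases. On the complement of (i) and (ii) the only merges Greedy performs are merges of successive reads, which is condition~1 of the Lemma, and on the complement of (iii) condition~2 holds; so outside these three events Greedy returns a single contig that spells $\mathbf{s}$ up to the substitution errors, and the number of such errors lies below $\delta=2pG$ with overwhelming probability, giving ${\cal P}({\cal S}^{C})\le\epsilon$ once each event has probability $\le\epsilon/3$. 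Events (ii) and (iii) are precisely what the two arguments of the $\max$ for $N$, together with $L>l_{repeat}+2\ell_{\rm iid}$, are engineered to control; event (i) is controlled by the defining equations of $\ell_{\rm iid}(p,\epsilon/3,G)$ and $\alpha(p,\epsilon/3,G)$.

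For event (i): conditioned on the read placements, every comparison performed by Greedy is between a length-$W$ suffix and a length-$W$ prefix of two contigs ($\ell_{\rm iid}\le W\le L$), and the RA-matching rule additionally tests the two length-$\ell_{\rm iid}$ end windows of this overlap region, each of which sits inside an individual read. If the two segments come from the same locus, each aligned base-pair disagrees independently with probability $2p-\frac{4}{3}p^{2}$, so by a Chernoff bound an end window has relative Hamming distance $\ge\alpha$ with probability at most $\exp(-\ell_{\rm iid}D(\alpha||2p-\frac{4}{3}p^{2}))$; summing over the $O(G)$ true-successor pairs and invoking the second defining equation bounds the total false-negative probability by $\epsilon/6$. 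If the two segments straddle different copies of a long flanked repeat, the structural fact I will establish from the genome model is that at least one of the two end windows falls entirely in a random flanking region, where aligned base-pairs disagree with probability $\frac{3}{4}$; that window then has relative distance $\ge\alpha$ except with probability $\exp(-\ell_{\rm iid}D(\alpha||\frac{3}{4}))$, and summing over the $O(G^{2})$ candidate locus pairs and invoking the first defining equation bounds the total false-positive probability by $\epsilon/6$. Comparisons of overlaps shorter than $\ell_{\rm iid}$ never occur because the loop terminates at $W=\ell_{\rm iid}$.

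Events (ii) and (iii) are classical first-moment (Lander--Waterman) estimates over the i.i.d.\ uniform read starts. For (ii), a read $\ell_{\rm iid}$-bridges a fixed copy of the longest flanked repeat iff its start lies in a window of $L-l_{repeat}-2\ell_{\rm iid}$ positions, so the probability that none of the $N$ reads does is at most $\exp(-N(L-l_{repeat}-2\ell_{\rm iid})/G)$; requiring this to be $\le\epsilon/3$ gives the first term of the $\max$ (the union over the constantly many long repeats of the model only affects constants). For (iii), a read $R$ fails to have a successor overlapping it by $\ge\ell_{\rm iid}$ iff no other read starts within $L-\ell_{\rm iid}$ positions downstream of $R$, which happens with probability at most $\exp(-(N-1)(L-\ell_{\rm iid})/G)$; a union bound over the $N$ reads, set to $\le\epsilon/3$, yields the second term of the $\max$, and the hypothesis $G>N$ places us in the sparse-read regime in which these first-moment bounds are the binding ones. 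Adding the three probabilities finishes the proof.

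The step I expect to be the main obstacle is the structural fact invoked in event (i): that whenever Greedy would compare a long suffix/prefix pair originating from two \emph{distinct} copies of a flanked repeat, at least one of the two $\ell_{\rm iid}$-length end windows of the compared region necessarily lies in a random flanking region rather than inside the (possibly much longer) repeat interior. Proving this requires a careful case analysis on the position of the overlap window relative to the repeat boundaries on both genomic copies --- including the interleaved-repeat configurations of the model --- and uses the hypothesis that some copy of the repeat is $\ell_{\rm iid}$-bridged (event (ii)), together with bookkeeping that the merges already performed do not create a contig whose suffix or prefix straddles a repeat interior without such a flanking anchor. Once this geometric fact is in hand, the remaining estimates are routine concentration and union bounds.
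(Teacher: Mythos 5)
Your proposal is correct and follows essentially the same route as the paper's proof: the same three-event decomposition with an $\epsilon/3$ budget each (successive-read overlap below $\ell_{\rm iid}$, failure to $\ell_{\rm iid}$-bridge the longest flanked repeat, and RA-rule false positives/negatives controlled by the defining equations of $\ell_{\rm iid}$ and $\alpha$ via the KL-divergence Chernoff bounds), with the RA-rule error analyzed conditionally on the other two events. The structural fact you flag as the main obstacle --- that, given bridging, any cross-copy comparison has at least one $\ell_{\rm iid}$-end-window in the random flanking region --- is exactly the point the paper disposes of by its case analysis of the four overlap types, of which only two survive under the bridging condition.
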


\begin{proof}
In order to prove that claim, let us break down into several subparts

Let $E_{1}$ be the event that condition 1 in Lemma (\ref{noisyGreedyCondition})
is not satisfied. $E_{2}$ be the event that condition 2 in Lemma
(\ref{noisyGreedyCondition}) is not satisfied. $E_{3}$ be the event
that the long/interleave/triple repeat is not $\ell_{\rm iid}-$bridged. 

Now we claim that with the chose $(N,L)$ in the range, 

1. ${\cal P}(E_{1})\le\frac{\epsilon}{3}$

2. ${\cal P}(E_{3})\le\frac{\epsilon}{3}$

3. ${\cal P}(E_{2}\mid E_{1}^{C}\cap E_{3}^{C})\le\frac{\epsilon}{3}$

We first see how we can use these to obtain the desired claim and
proceed to prove each of the above sub-claims.

\begin{align*}
{\cal P}(S^{C})= & {\cal P}(E_{1})+{\cal P}(E_{2}\cap E_{1}^{C})\\
= & {\cal P}(E_{1})+{\cal P}(E_{2}\cap E_{1}^{C}\cap E_{3}^{C})+{\cal P}(E_{2}\cap E_{1}^{C}\cap E_{3})\\
\le & {\cal P}(E_{1})+{\cal P}(E_{2}\mid E_{1}^{C}\cap E_{3}^{C})+{\cal P}(E_{3})\\
\le & \frac{\epsilon}{3}+\frac{\epsilon}{3}+\frac{\epsilon}{3}\\
= & \epsilon
\end{align*}

Now, we proceed to prove each of the sub-claims. 

1. With $N>\frac{G\cdot\ln\frac{N}{\epsilon/3}}{L-\ell_{\rm iid}}$, we have,

\begin{align*}
{\cal P}(E_{1})\le & N\exp(-\frac{N}{G}(L-\ell_{\rm iid}))\\
\le & \frac{\epsilon}{3}
\end{align*}

2. With $N>\frac{G\cdot\ln\frac{3}{\epsilon}}{L-l_{repeat}-2\cdot \ell_{\rm iid}}$we
have, 

\begin{align*}
{\cal P}(E_{3})\le & \exp(-\frac{N}{G}(L-2\ell_{\rm iid}-l_{repeat}))\\
\le & \frac{\epsilon}{3}
\end{align*}

3. With the choice of $\ell_{\rm iid}=\ell_{\rm iid}(p,\frac{\epsilon}{3},G)$,
we have,

\begin{align*}
{\cal P}(E_{2}\mid E_{1}^{C}\cap E_{3}^{C})\le & N^{2}\cdot\exp(-\ell_{\rm iid}D(\alpha||\frac{3}{4}))\\
 & +2N\exp(-\ell_{\rm iid}D(\alpha||\eta))\\
\le & G^{2}\cdot\exp(-\ell_{\rm iid}D(\alpha||\frac{3}{4}))\\
 & +2G\exp(-\ell_{\rm iid}D(\alpha||\eta))\\
\le & \frac{\epsilon}{3}
\end{align*}

Here we use the fact that there are indeed 4 types of overlap as in
Fig \ref{fig:Overlap-Type}. And given the bridging condition, we
are only left with 2 types, namely, both ending segments outside/exactly
one ending segment outside the longest repeat repeat region. 

\begin{figure}
\begin{centering}
\includegraphics[width=8cm]{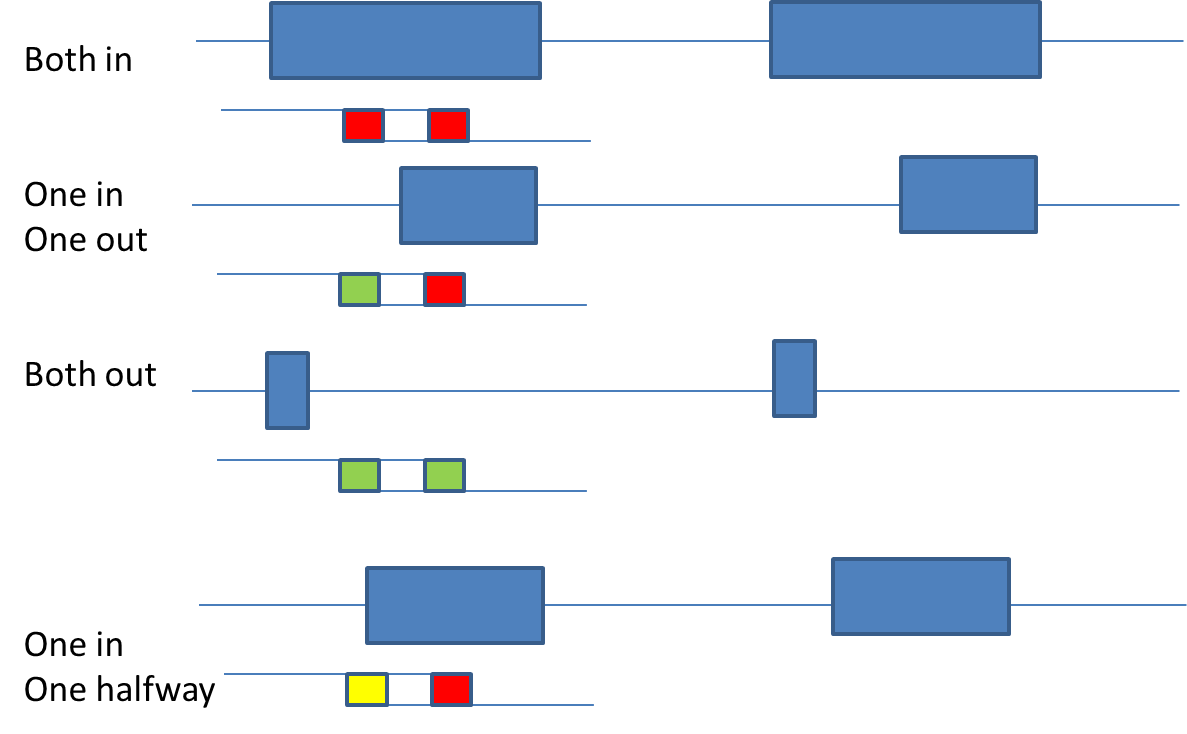}
\par\end{centering}

\caption{\label{fig:Overlap-Type}Overlap Type}
\end{figure}

\end{proof}

\subsection*{Simple De Bruijn Algorithm}

Before continuing proving the performance of Multibridging Algorithm,
it is instructive to analyze the following Simple De Bruijn Algorithm
(Alg \ref{alg:Noisy-Simple-De Bruijn}) because this is closely related
to the Multibridging Algorithm. 

\begin{algorithm}[tbh]
\raggedright{}
0. Choose K to be $\max(l_{interleave},l_{triple})$

1. Extract Kmers from reads

2. Clusters Kmers

3. Form Kmer Graphs

4. Condense the graph

5. Clear Branches

6. Condense graph

7. Find Euler Cycle

\caption{\label{alg:Noisy-Simple-De Bruijn}Noisy Simple De Bruijn}
\end{algorithm}

Here we first define several genomic region of interest which we will
refer to in the proofs below. 

a) $S_{0}$ = set of K-mers that are completely inside $\ell_{\rm iid}-$
neighborhood of the longest repeat 

b) $S_{1}$ = set of K-mers that are completely inside the longest
repeat

c) $S_{2}=S_{0}\backslash S_{1}$
\begin{lemma}
\label{simpleDe BruijnCondition}Here we provide several deterministic
conditions that guarantee the success of the algorithm.

1. Successive reads overlap with length at least $K$

2 K-mers are almost correctly clustered, that is, 

\textup{a) K-mers from the same locus but not merged }

b)\textup{ $x$ not in $S_{0}$ s.t. $x$ get clustered with wrong
K-mers}

c)\textup{ $x$ in $S_{0}$ s.t.$x$ get clustered with elements other
than its own cluster/mirror cluster(mirror cluster is defined to be
the cluster for the other copy of the repeat)}

3) Repeat at both circle are at least $2\cdot \ell_{\rm iid}$ separated(the
interleaving segments between the repeat differ with at least $2\ell_{\rm iid}$in
length)\end{lemma}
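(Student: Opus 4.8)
The plan is to prove this as a \emph{deterministic} statement: we condition on the three structural events and show, step by step, that Alg.~\ref{alg:Noisy-Simple-De Bruijn} then outputs the correct genome; the probabilistic content --- that conditions (1)--(3) each hold with the required probability under the sequencing and genome models --- is postponed to a separate estimate that parallels the calculations already done for the Greedy Algorithm. Throughout, ``success'' is read as: the condensed string graph produced equals the genome's string graph, its unique Euler cycle spells $\mathbf{s}$, and the looked-up $K$-mers reproduce $\mathbf{s}$ up to the per-base read noise, hence within the tolerance $\delta$.

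First I would analyze steps~1--3. By condition (1), every pair of consecutive length-$K$ windows on $\mathbf{s}$ occurs as a consecutive pair of $K$-mers in some read, so the uncondensed $K$-mer graph contains every true genomic adjacency; conversely, no single read links non-consecutive loci, so the only non-genomic adjacencies that can appear are those induced by merging clusters. By condition (2), the clusters are in bijection with the distinct genomic $K$-mers, except that the two copies of the longest repeat's interior (the set $S_1$) are identified into shared ``mirror'' clusters and the boundary $K$-mers in $S_2 = S_0 \setminus S_1$ are, at worst, confused only with their own mirror. Hence after step~3 the graph is isomorphic to the De~Bruijn-type graph of $\mathbf{s}$ in which the longest flanked repeat is contracted to a single node of in-degree and out-degree $2$ (an X-node) while \emph{every other} repeat is fully resolved. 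This is where $K = \max(l_{interleave}, l_{triple})$ and condition (3) enter: since the interleaved and triple repeats have length at most $K$ and their copies' $\ell_{\rm iid}$-neighborhoods differ (the interleaving gaps differ by at least $2\ell_{\rm iid}$), the $K$-mers covering them also carry distinguishing flanking content and so are not mirror-merged.

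Next I would trace steps~4--7. Condensation (step~4) collapses the graph to the genome's string graph, whose only non-trivial features are the lone X-node and a bounded number of spurious ``bubbles'' of length $< \ell_{\rm iid}$ created by the residual $S_2$ ambiguity near the repeat boundary. Step~5 removes exactly these bubbles, and condition (3) is invoked a second time to certify that $\mathbf{s}$ has no genuine feature shorter than $\ell_{\rm iid}$ with bubble topology, so no real structure is destroyed. After the second condensation (step~6) the graph is the string graph of $\mathbf{s}$ carrying a single X-node. Crucially, the interleaving partner of the longest repeat has length at most $K$ and has already been resolved, so in the resolved graph the longest repeat behaves as a non-interleaved double repeat: the ``swapped'' pairing at its X-node splits the closed genome walk into two disjoint cycles and is therefore \emph{not} an Euler cycle of the connected graph. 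Hence the Euler cycle found in step~7 is unique up to rotation and spells $\mathbf{s}$, and looking up the associated $K$-mers yields $\hat{\mathbf{s}}$ within edit distance $\delta$.

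I expect the main obstacle to be the bookkeeping around the boundary set $S_2$: one must show simultaneously that (a) every spurious branch it generates is short enough (of length $< \ell_{\rm iid}$) to be caught by step~5, (b) the length threshold together with condition (3) never merges a legitimate branch, and (c) after branch clearing the surviving X-node still encodes nothing beyond ``this is the longest repeat'' --- in particular that the artifacts do not smuggle in the wrong pairing. A secondary point needing care is making the uniqueness-of-Euler-cycle argument rigorous in the presence of the contracted X-node, which reduces to the elementary observation that the two resolutions of a non-interleaved double repeat yield, respectively, one cycle and two cycles, combined with the fact that resolving the short interleaving partner is precisely what renders the longest repeat effectively non-interleaved.
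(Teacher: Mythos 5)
Your proposal is correct and follows essentially the same route as the paper's own (much terser) argument: the decomposition into $S_0$, $S_1$, $S_2$, with $K$-mers outside $S_0$ yielding distinct nodes by condition (1), $S_1$ collapsing into the X-node, $S_2$ producing short spurious branches that step~5 removes without harming real structure thanks to condition (3), and the result being read off by an Euler walk. The only difference is that you flesh out the Euler-cycle uniqueness (via resolution of the interleaved partner making the longest repeat effectively non-interleaved), which the paper simply asserts.
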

\begin{proof}
We note that every length K segments $x\not\in S_{0}$, they are represented
as a distinct node in the K-mer graph because of the length K that
we pick and the condition that successive reads overlap at least K
bases. Moreover, for K-mers $x\in S_{1}$, they are condensed into
the repeat as 'X' in Fig \ref{fig:Before-branch-clearing}. However,
for the K-mers $x\in S_{2}$, they have chances not to merge properly,
thus they form into the branches surrounding 'X' in Fig \ref{fig:Before-branch-clearing}.
Because of condition 3, branch clearing will not eliminate the 'A'
or 'C' in Fig \ref{fig:Before-branch-clearing}, further after condensing,
we get the desired K-mer graph as in Fig \ref{fig:After-branch-clearing}
and this can be successfully read by a Eulerian Walk.
\end{proof}
\begin{figure}
\begin{centering}
\subfloat[\label{fig:Before-branch-clearing}Before branch clearing]{\begin{centering}
\includegraphics[width=8cm]{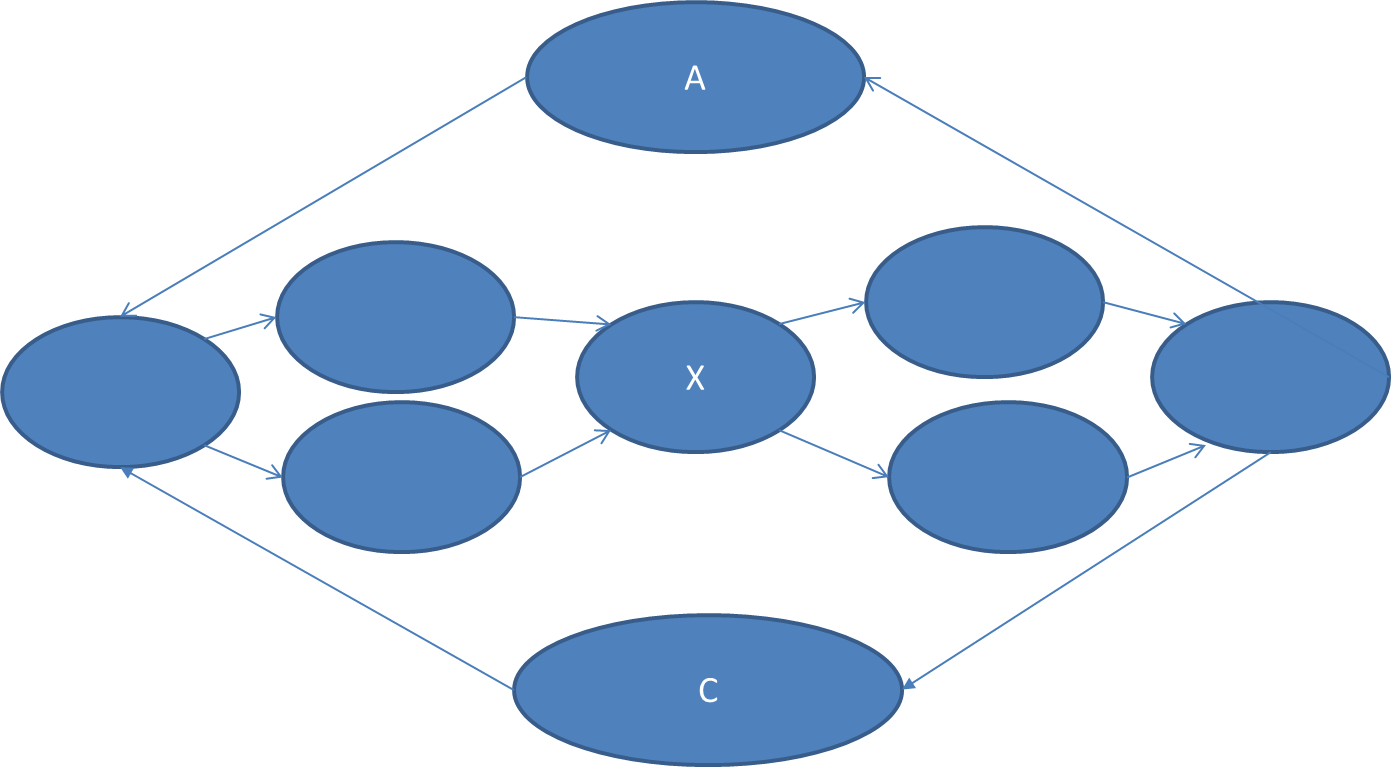}
\par\end{centering}

}
\par\end{centering}

\begin{centering}
\subfloat[\label{fig:After-branch-clearing}After branch clearing]{\begin{centering}
\includegraphics[width=1cm]{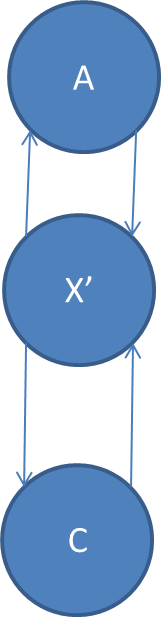}
\par\end{centering}

\centering{}}
\par\end{centering}

\caption{Branch clearing}
\end{figure}

\begin{theorem}
\label{simpleDe BruijnThm}If $G>\frac{6}{\epsilon \ell_{\rm iid}}$$,G\ge N\ge\frac{G\cdot\ln\frac{3N}{\epsilon}}{L-\max(l_{int},l_{triple})-2\ell_{\rm iid}}$
, with $\ell_{\rm iid}=\ell_{\rm iid}(p,\frac{\epsilon}{3},G)$ $\alpha=\alpha(p,\frac{\epsilon}{3},G)$,
then, ${\cal P}(S^{C})\le\epsilon$\end{theorem}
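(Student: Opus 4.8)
The strategy is to mirror the structure of the Greedy Algorithm proof above, but using the sufficient conditions of Lemma~\ref{simpleDe BruijnCondition} in place of those of Lemma~\ref{noisyGreedyCondition}. Let $F_1,F_2,F_3$ be the events that conditions 1, 2, 3 of Lemma~\ref{simpleDe BruijnCondition} fail. Because Lemma~\ref{simpleDe BruijnCondition} states that $F_1^C\cap F_2^C\cap F_3^C$ already forces a correct reconstruction, it suffices to show ${\cal P}(F_1)\le\epsilon/3$, ${\cal P}(F_3)\le\epsilon/3$, and ${\cal P}(F_2\mid F_1^C\cap F_3^C)\le\epsilon/3$, and then to combine them via the union-bound chain ${\cal P}(S^C)\le{\cal P}(F_1)+{\cal P}(F_2\mid F_1^C\cap F_3^C)+{\cal P}(F_3)$ exactly as in the Greedy case.

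For $F_1$ I would use a Lander--Waterman coverage estimate. Consecutive reads fail to overlap by at least $K=\max(l_{int},l_{triple})$ only if some pair of consecutive read start positions is separated by more than $L-K$; moreover, for the K-mers in the $\ell_{\rm iid}$-neighborhood of the longest repeat to be present (so that the branch structure of Steps~5--6 can form and then be cleared), one wants this separation to be at most $L-K-2\ell_{\rm iid}$. A union bound over the $N$ inter-read gaps then gives ${\cal P}(F_1)\le N\exp\!\big(-\tfrac{N}{G}(L-K-2\ell_{\rm iid})\big)$, which is at most $\epsilon/3$ under the hypothesis $N\ge G\ln(3N/\epsilon)/(L-\max(l_{int},l_{triple})-2\ell_{\rm iid})$. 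For $F_3$ I would appeal to the generative genome model: the interleaving segments flanking the longest repeat have lengths that, up to lower-order corrections, are spread over a range of order $G$, so the probability that any such separation falls below $2\ell_{\rm iid}$ is $O(1/(G\ell_{\rm iid}))$, which is $\le\epsilon/3$ under the technical hypothesis $G>6/(\epsilon\ell_{\rm iid})$.

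For ${\cal P}(F_2\mid F_1^C\cap F_3^C)$ I would reuse, essentially verbatim, the false-positive / false-negative analysis of the RA-rule from the Greedy proof (its bound on $E_2$). A pair of K-mers from the same genomic locus is left unclustered only if one of its two length-$\ell_{\rm iid}$ ends has Hamming distance exceeding $\alpha\ell_{\rm iid}$ from the other, an event of probability at most $\exp(-\ell_{\rm iid}D(\alpha||2p-\tfrac{4}{3}p^{2}))$ per locus (item (a) of condition 2); a pair of K-mers from genuinely uncorrelated loci is mistakenly clustered only if both $\ell_{\rm iid}$-ends match, in which case at least one end lies in a random-flanking region with distance ratio $\approx 3/4$, so this has probability at most $\exp(-\ell_{\rm iid}D(\alpha||\tfrac{3}{4}))$ (items (b) and (c), where for $x\in S_0$ one uses that the only extra cluster it can overlap is its harmless mirror cluster). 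Union bounds over at most $G$ loci and at most $G^{2}$ pairs, together with the defining equations for $(\ell_{\rm iid},\alpha)=(\ell_{\rm iid}(p,\epsilon/3,G),\alpha(p,\epsilon/3,G))$, yield ${\cal P}(F_2\mid F_1^C\cap F_3^C)\le\epsilon/6+\epsilon/6=\epsilon/3$.

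The main obstacle is not any individual probability bound --- those all parallel the Greedy proof --- but establishing Lemma~\ref{simpleDe BruijnCondition} itself, i.e.\ the deterministic implication that conditions 1--3 force the condensed K-mer graph to be the correct string graph. Concretely one must argue: every K-mer outside $S_0$ is realized as a distinct node (using the $K$-overlap condition and $K=\max(l_{int},l_{triple})$); the K-mers of $S_1$ condense into the single repeat vertex ``X''; the K-mers of $S_2=S_0\setminus S_1$ can appear only as branches of length at most $\ell_{\rm iid}$ attached to that vertex; and --- crucially via condition 3 --- branch clearing deletes exactly those spurious branches while preserving the genuine sequence divergence on the two sides of the longest repeat (Fig.~\ref{fig:Before-branch-clearing} through Fig.~\ref{fig:After-branch-clearing}), so that an Eulerian walk spells out $\hat{\bf{s}}$. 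Pinning down the exact slack needed in the coverage condition, so that the $K$-overlap requirement and the $\ell_{\rm iid}$-neighborhood coverage requirement hold simultaneously, is the other point where care is needed.
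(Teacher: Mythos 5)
Your proposal matches the paper's own argument: the paper's proof of this theorem is literally the one-line observation that each of the three deterministic conditions of Lemma~\ref{simpleDe BruijnCondition} fails with probability at most $\epsilon/3$ (with the deterministic sufficiency already established in that lemma), combined by a union bound exactly as in the Greedy Algorithm theorem. Your version simply fills in the individual $\epsilon/3$ bounds by transplanting the coverage, separation, and RA-rule false-positive/false-negative estimates from the Greedy proof, which is precisely what the paper intends.
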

\begin{proof}
We first note that in order to obtain a bound on the error probability,
we only need to separately bound the probability that each of the
conditions in Lemma \ref{simpleDe BruijnCondition} fail,which are
$\le\frac{\epsilon}{3}$ each. Thus, combining, we get, ${\cal P}(S^{C})\le\epsilon$.
\end{proof}

\subsection*{Multibridging Algorithm}

An illustration of noisy Multibridging Algorithm is shown in Fig (\ref{illustrationMBA}).

\begin{figure}
\begin{centering}
\includegraphics[width=8cm]{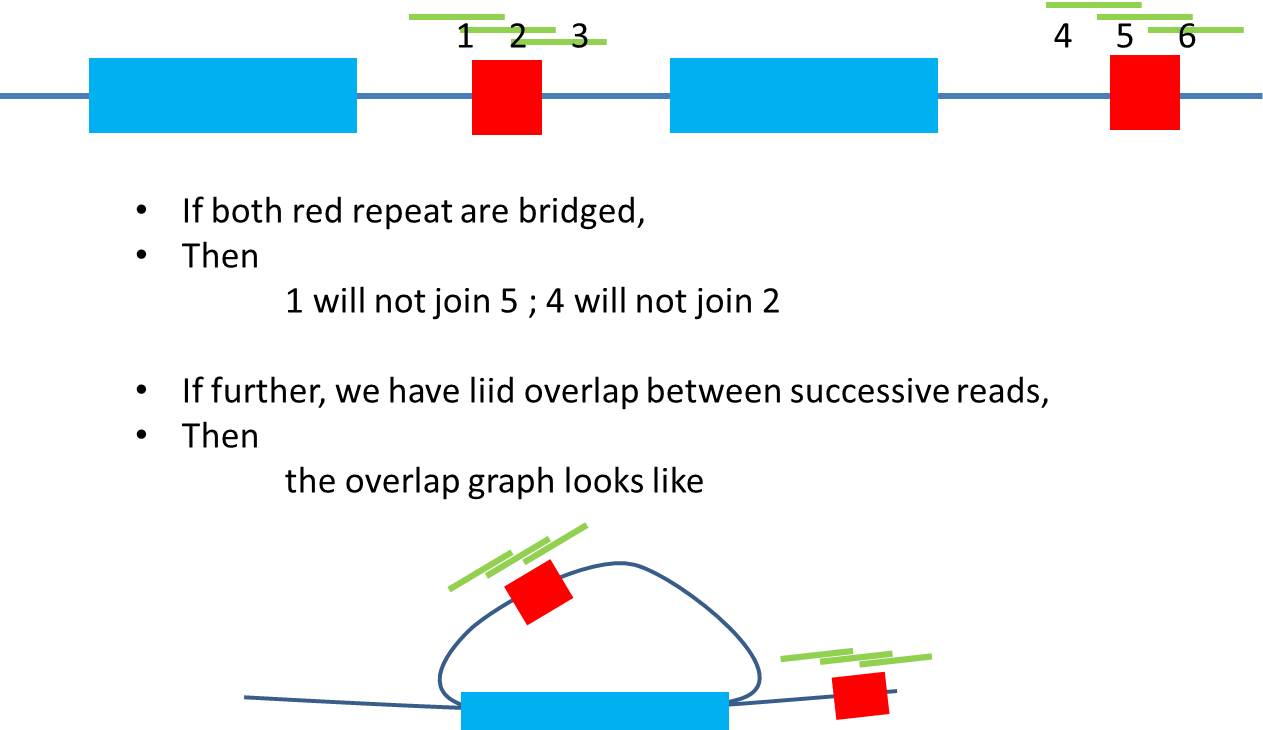}
\par\end{centering}

\caption{\label{illustrationMBA}An illustration of the Noisy Multi Bridging }
\end{figure}

\begin{lemma}
\label{MBAConditionDetail}Here are the deterministic conditions for
the algorithm to succeed.

1) Every successive reads overlap at least $\ell_{\rm iid}(p,\frac{\epsilon}{3},G)$

2) K-mers are almost correctly clustered, that is,

\textup{a) K-mers from the same locus but not merged }

b)\textup{ $x$ not in $S_{0}$ s.t. $x$ get clustered with wrong
K-mers}

c)\textup{ $x$ in $S_{0}$ s.t. $x$ get clustered with elements
other than its own cluster/mirror cluster}

3) Repeat at both circle are at least $2\cdot \ell_{\rm iid}$ separated

4) When finding successors/predecessors, they are the real successors
and predecessors\end{lemma}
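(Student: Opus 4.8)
The plan is to give a purely deterministic argument, in the same spirit as the proof of Lemma~\ref{simpleDe BruijnCondition} for Simple De Bruijn, by following the genome's repeat structure through Steps~1--8 of Algorithm~\ref{alg:Noisy-Multi-Bridging-1} and checking that under conditions 1)--4) each graph operation preserves that structure, so that the final Eulerian traversal outputs a single contig equal to $\mathbf{s}$ up to residual per-base substitution noise, hence within edit distance $\delta = 2pG$.

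First I would reuse the partition of the K-mers into $S_1$ (those lying strictly inside the longest flanked repeat), $S_0$ (those lying inside its $\ell_{\rm iid}$-neighborhood), and $S_2 = S_0 \setminus S_1$, and likewise for the longest interleaved pair and the longest triple repeat. For a K-mer outside $S_0$, condition 2 (parts (b) and (c)) says it is clustered exactly with the copies coming from its own locus; and because $K = \tilde{\ell}_{\rm crit} + 2\ell_{\rm iid}$, every K-mer touching the shorter copy of the longest interleaved pair, or touching the longest triple repeat, gets its own cluster. Combined with condition 1 (successive reads overlap by at least $\ell_{\rm iid}$, which forces an edge between every pair of consecutive clusters), this makes the uncondensed De Bruijn graph built in Step~3 coincide with the genome's K-mer graph away from the repeat boundaries. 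K-mers in $S_1$ are (as the algorithm intends) merged across repeat copies, producing the ``X'' node of Fig.~\ref{fig:Before-branch-clearing}; K-mers in $S_2$ may fail to merge with their mirror cluster and then appear as short parallel stubs, of length at most $\ell_{\rm iid}$, attached to that X node.

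Next I would treat Steps~4--7. Because $K$ is large, the Step~3 graph can be disconnected from insufficient coverage; Step~4 reconnects the loose ends greedily under the RA-rule, and condition 4 guarantees every predecessor/successor it selects is the genuine one, so after Step~4 the adjacency structure agrees with the genome's K-mer graph modulo the $S_2$ stubs. Steps~5 and~7 are plain condensation. The subtle step is Step~6: condition 3 forces the two copies of each long repeat to be $2\ell_{\rm iid}$-separated, so the genuine alternative content at the repeat boundary (the distinct incoming and outgoing flanking segments, ``A'' and ``C'' in Fig.~\ref{fig:Before-branch-clearing}) can never satisfy the equal-length, length-$<\ell_{\rm iid}$, rejoining-path test that branch clearing applies, whereas the spurious $S_2$ stubs do satisfy it. Hence branch clearing collapses precisely the spurious stubs and leaves the genuine structure intact, and after re-condensing in Step~7 we are left with exactly the condensed string graph of the genome.

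Finally I would invoke the structural characterization of \cite{bresler2013optimal}: for a genome of the assumed model, the condensed string graph obtained after Step~7 admits a unique Eulerian path --- the choice $K = \tilde{\ell}_{\rm crit} + 2\ell_{\rm iid}$ being exactly what renders the shorter interleaved repeat and the triple repeat unambiguous in this graph --- so Step~8 reads it off unambiguously, and looking up the associated K-mers reconstructs $\mathbf{s}$ up to per-base substitution errors, safely within $\delta = 2pG$. I expect the main obstacle to be the Step~6 analysis: one must show simultaneously that every noise-induced mis-merge is confined to a repeat boundary and has exactly the equal-length short-parallel-path shape that branch clearing is designed to eliminate, and that no pair of genuinely distinct flanking segments is ever accidentally merged by branch clearing. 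It is precisely here that condition 3 is indispensable, with the $S_2$ description in condition 2(c) supplying the other half of the argument; once Step~6 is pinned down, the remaining bookkeeping through Steps~4, 5, 7 and 8 is routine.
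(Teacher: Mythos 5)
Your argument is correct and follows essentially the same route as the paper, which proves this lemma by deferring to the proof of Lemma~\ref{simpleDe BruijnCondition} (the same $S_0$, $S_1$, $S_2$ decomposition, condition 3 protecting the genuine flanking branches from Step~6 branch clearing, and condition 4 covering the extra predecessor/successor step that compensates for the overlap requirement dropping from $K$ to $\ell_{\rm iid}$). Your write-up simply fills in the details that the paper leaves implicit.
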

\begin{proof}
Along the same lines as the proof in Lemma (\ref{simpleDe BruijnCondition}),
we only note that in this algorithm, we have an extra step of finding
predecessor/successors. Moreover, the overlap here is significantly
reduced to only $\ell_{\rm iid}$ instead of K in the Noisy Simple De Bruijn
case.\end{proof}
\begin{theorem}
With $G>\frac{6}{\epsilon \ell_{\rm iid}}$, $G\ge N\ge\max(\frac{G}{L-2\ell_{\rm iid}}\ln\frac{N}{\epsilon/3},\frac{G\ln\frac{3}{\epsilon}}{L-\max(l_{triple},l_{interleave})-2\ell_{\rm iid}})$
,with $\ell_{\rm iid}=\ell_{\rm iid}(p,\frac{\epsilon}{3},G)$ $\alpha=\alpha(p,\frac{\epsilon}{3},G)$
, then ${\cal P}(S^{C})\le\epsilon$.\end{theorem}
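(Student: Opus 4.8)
The plan is to follow the template of the proof of Theorem~\ref{simpleDe BruijnThm}: reduce the probabilistic claim ${\cal P}(S^C)\le\epsilon$ to the four deterministic sufficient conditions of Lemma~\ref{MBAConditionDetail}, bound the failure probability of each condition, and close with a union bound. Let $E_1,\dots,E_4$ denote the events that conditions $1,\dots,4$ of Lemma~\ref{MBAConditionDetail} fail. Arguing as in the Greedy proof — conditioning on $E_1^C$ and $E_3^C$ — one gets ${\cal P}(S^C)\le{\cal P}(E_1)+{\cal P}(E_3)+{\cal P}(E_2\cup E_4\mid E_1^C\cap E_3^C)$, so it suffices to control these three contributions. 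The choices $\ell_{\rm iid}=\ell_{\rm iid}(p,\frac{\epsilon}{3},G)$ and $\alpha=\alpha(p,\frac{\epsilon}{3},G)$ are what make the RA-rule contribution small, while the two lower bounds on $N$ and the hypothesis $G>\frac{6}{\epsilon\ell_{\rm iid}}$ are what make the shotgun-sampling contributions small.

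Condition~1 (every pair of successive reads overlaps by at least $\ell_{\rm iid}$) is a pure coverage statement: the probability that the successor of a fixed read overlaps it by less than $\ell_{\rm iid}$ is at most $\exp(-\frac{N}{G}(L-\ell_{\rm iid}))$, so ${\cal P}(E_1)\le N\exp(-\frac{N}{G}(L-\ell_{\rm iid}))\le\frac{\epsilon}{3}$ under $N\ge\frac{G}{L-2\ell_{\rm iid}}\ln\frac{N}{\epsilon/3}$. Condition~3 (the interleaving segments on each circle differ in length by at least $2\ell_{\rm iid}$) is a property of the random background of the genome model, failing with probability $O(\ell_{\rm iid}/G)$ per circle, which is small under $G>\frac{6}{\epsilon\ell_{\rm iid}}$. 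The second lower bound on $N$, namely $N\ge\frac{G\ln(3/\epsilon)}{L-\max(l_{triple},l_{interleave})-2\ell_{\rm iid}}$, enters because correctly clustering the K-mers that straddle the shorter bottleneck repeat (part of condition~2) requires at least one read to bridge the $\ell_{\rm iid}$-neighborhood of a copy of that repeat, an event failing with probability at most $\exp(-\frac{N}{G}(L-\max(l_{triple},l_{interleave})-2\ell_{\rm iid}))$, exactly as in the Greedy proof.

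Conditions~2 and~4 — that the RA-rule clusters K-mers almost correctly (K-mers from the same locus merge; a K-mer outside the $\ell_{\rm iid}$-neighborhood of the long repeat gets clustered with nothing else; a K-mer inside that neighborhood gets clustered only with its own or mirror cluster) and that the greedy predecessor/successor search in Step~4 returns true predecessors/successors — are precisely the events governed by the RA-rule. Conditioned on $E_1^C\cap E_3^C$, every overlap inspected by the RA-rule has at least one length-$\ell_{\rm iid}$ end drawn from a region statistically independent of the candidate locus, so a spurious match forces that end to have Hamming-to-length ratio below $\alpha$ against a $\frac{3}{4}$-expected region (probability $\exp(-\ell_{\rm iid}D(\alpha\|\frac{3}{4}))$), and a missed match forces it above $\alpha$ against a $(2p-\frac{4}{3}p^2)$-expected region (probability $\exp(-\ell_{\rm iid}D(\alpha\|2p-\frac{4}{3}p^2))$). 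Union-bounding the former over the $\le G^2$ pairs of loci and the latter over the $\le G$ loci, and invoking the defining equations for $(\ell_{\rm iid},\alpha)$, gives ${\cal P}(E_2\cup E_4\mid E_1^C\cap E_3^C)\le\frac{\epsilon}{3}$.

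Once all four conditions hold the remaining argument is deterministic and reduces to the noiseless Multibridging analysis of \cite{bresler2013optimal}, with correctly clustered K-mers playing the role of exact K-mer identities: the uncondensed De Bruijn graph condenses, branch-clearing (Step~6) removes the short spurious branches created by the partially-overlapping K-mers of $S_2$, re-condensation leaves a string graph whose unique Eulerian path spells the true genome, and Step~8 recovers $\hat{\bf{s}}$ within the required edit distance. I expect the main obstacle to be exactly this boundary behaviour of long flanked repeats: showing that the $S_2$ K-mers produce \emph{only} branches short enough ($<\ell_{\rm iid}$) to be removed by branch-clearing — which is where the $2\ell_{\rm iid}$ length-separation of condition~3 is needed so that genuine structure is not erased — and showing that the greedy reconnection of Step~4 never fuses the two repeat copies, so that the clean combinatorial structure underpinning the noiseless Euler-path argument is inherited despite the noisy clustering.
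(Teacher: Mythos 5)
Your proposal follows essentially the same route as the paper: reduce ${\cal P}(S^C)\le\epsilon$ to the deterministic conditions of Lemma~\ref{MBAConditionDetail}, bound each failure probability by $\epsilon/3$ via the coverage/bridging bounds and the defining equations for $(\ell_{\rm iid},\alpha)$, and union-bound, inheriting the combinatorial argument from the Simple De Bruijn analysis. The paper's own proof is far terser (it simply notes that the coverage bound implies the bridging conditions, hence Condition~4, and defers to Theorem~\ref{simpleDe BruijnThm}); your write-up supplies the same decomposition with more detail, the only cosmetic difference being that you attach the bridging requirement to the clustering condition rather than to the successor-finding condition as the paper does.
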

\begin{proof}
Here we note that with the given coverage, bridging conditions of
the interleave repeat and the triple repeat are satisfied. And when
this is true, then Condition 4 in Lemma \ref{MBAConditionDetail}
is true with high probability. Following the arguments in Theorem
\ref{simpleDe BruijnThm}, we get desired.
\end{proof}

\section*{Appendix: Design and additional algorithmic components for the prototype
assembler}

\subsection*{Pipeline of the prototype assembler}

The pipeline of the prototype assembler is shown in Fig \ref{fig:Pipeline-of-the}.
With a ground truth genome as input, the output is the performance
of the whole pipeline by giving the mismatch rate.

\begin{figure}
\includegraphics[width=12cm]{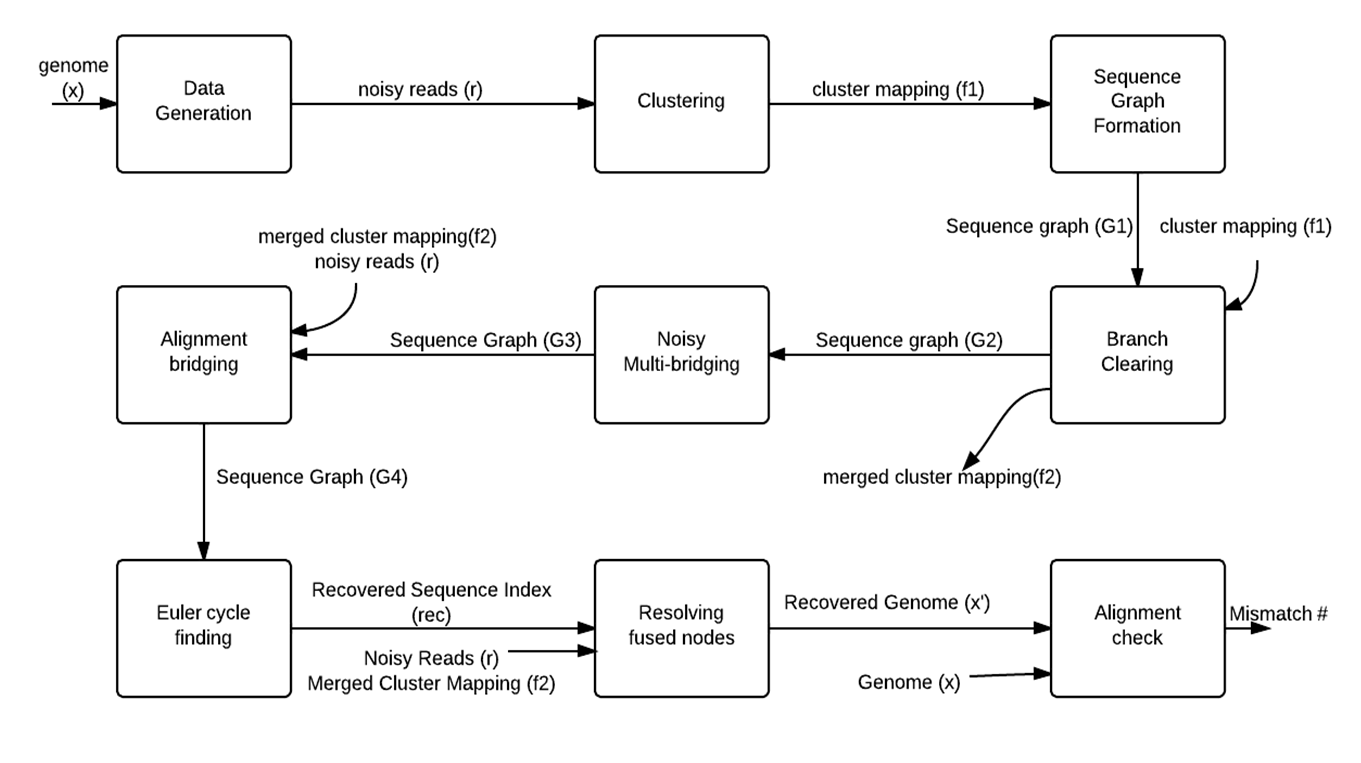}

\caption{\label{fig:Pipeline-of-the}Pipeline of the prototype assembler}

\end{figure}

\subsection*{A more robust branch clearing step}

Since we employ a speed up step in the clustering and there may be
K-mers that are not completely clustered correctly in the clustering
step of Multibridging Algorithm. Regarding that, we need to have a
more robust branch clearing step. In particular, we first classfy
nodes as ``big'' or ``small'' nodes based on the size of the nodes
in the sequence graph. The key idea is to merge the small nodes together
while keeping the big nodes unchanged. Starting from each big nodes,
we tranverse the graph to detect all the small nodes that link the
current big node to other big nodes. Then, we classify the small nodes
into levels(depending on its distance from the current big node).
After that, the small nodes in the same level are merged. Finally,
we note that we keep the reachability among each big nodes.

\subsection*{Enhanced Multibridging Algorithm that can resolve middle range repeats}

We note that the ideas presented here can also be found in the prior
work on the treatment of noiseles sreads. It is stated here for completeness.
In the noisy setting, instead of considering the alphabet set to be
$\Sigma=\{A,C,G,T\}$, one can consider the alphabet set as the cluster
index of the K-mers. 

\begin{algorithm}
\raggedright{}
Resolution of repeats:

0. Intially the weight of the edge are set to be 1. 

1. While there is a X-node $v$ :

a) For each edge ($p_{i}$,$v$) with weight $a_{p_{i},v}$, create
a new node $u_{i}=^{p_{i}\to}v$ and an edge ($p_{i},u_{i}$) with
weight 1 + $a_{p_{i}v}$. Similarly, for each edge ($v$,$q_{j}$),
create a new node $w_{j}=v^{\to q_{j}}$ and an edge ($w_{j}$, $q_{j}$)

b) If $v$ has a self-loop ($v$, $v$) with weight $a_{v,v}$, add
an edge $(v^{\to v},^{v\to}v)$ with weight $a_{v,v}+2$

c) Remove node $v$ and all incident edges

d) For each pair of $u_{i}$, $w_{j}$ adjacent in a read(extending
to at least length of $\ell_{\rm iid}$ on both sides of the X-node), add
edge ($u_{i},w_{j}$). If exactly on each of the $u_{i}$and $w_{j}$nodes
have no added edge, add the edge. 

e) Condense the graph

\caption{Enhanced Multibridging Algorithm}

\end{algorithm}

\section*{Appendix: Treatment of indel noise}

\subsection*{Formation of K-mer De Bruijn graph for indel corrupted reads}

In order to form K-mer De Bruijn graph for indel corrupted reads, we
first need to have a clear notion of K-mers. We define K-mers to be
the length K segments in the genome ground truth (as opposed to the
usual definition from the reads). Although we mostly work on the reads
themselves, the definition of the Kmers are based on the ground truth.
In order to successfully cluster K-mers, we need to do the following
steps. 

1. We first compute the pairwise alignement of the reads. 

2. Based on the pairwise alignment, for each length K-segments, we
know which should be aligned to which. We then group them together
using the alignment result. 

3. Finally, we end up with the length K segments from the reads clustering
together, and now we use it as an operational way to identify the
Kmers since each cluster will naturally correspond to a K-mers originated
from the genome groundtruth(though there are a few discrepancy, mostly
this is correct).

4. After we identify the K-mers clusters, we add an edge between them
if there exists a read such that there are two consecutive Kmers originate
from it.

\subsubsection*{Graph surgery to clear abnormality of the noisy De Bruijn graph}

Due to indel noise and runs of the same alphabet, the way that we
form K-mers graph may need to abnormality of the graph. We thus perform
a graph tranversal and identify the abnormality that are of short
length(i.e. resulted from noise but not the genome structure). After
that, we remove such abnormality. This step also involves transitive
edge reduction and removal of small self loops.

\subsection*{X-phased step tailored for indel noise type }

\subsubsection*{Generalization to handle Indel Error }

When dealing with indel noise, the neighborhood of reads can also
affect consensus of the base. There we have to do sequence alignment
in order to find the appropriate posterior probability in order to
do a maximum likelihood estimate of whether a particular given genomic
location is a site of polymorphism or not. In order to do that, we formulate the problem
as a ML problem as follows. 

\begin{eqnarray}
\max_{T\in\Omega} & \Pi_{i\in S}P(R_{i}\mid T), & P_{err}=P_{opt} \label{1}  \\ 
 \max_{T\in\Omega'} & \Pi_{i\in S}P(R_{i}\mid T), & P_{err}=P_{opt}+\delta_{1} \label{2} \\
\max_{T\in\Omega'} & \Pi_{i\in S'}P(R_{i}\mid T), & P_{err}=P_{opt}+\delta_{1} \label{3}  \\
\max_{T\in\Omega'} & \Pi_{(j,k)}\Pi_{i\in S_{jk}'}P(R_{i}\mid T_{j}^{j+k}), & P_{err}=P_{opt}+\delta_{1}\label{4}  \\
\max_{T\in\Omega'} & \Pi_{(j,j+1)}\Pi_{i\in S_{j,j+1}'}P(R_{i}\mid T_{j}^{j+1}), & P_{err}=P_{opt}+\delta_{1}+\delta_{2} \label{5} 
\end{eqnarray}

Here we also discuss about the places that we take approximation to
enhance the computational efficiency in the steps of the previous
reduction. From (\ref{1}) to (\ref{2}), we use some heuristics to find out the
possible location of SNPs within the whole repeat in which disagreement
is observed after several rounds of error correction. From (\ref{2}) to
(\ref{3}), we remove all the reads that only span one single SNPs and it
has no effect on the error of the detection problem that we are trying
to solve. From (\ref{3}) to (\ref{4}), we further partition the reads into group
in which $S_{jk}'$ is the set of reads that only span the SNPs j
to j+k. Doing this can decompose the ML problem into smaller subproblems
with no effect on the accuracy. Finally, in practice, we take a first
order approximation of (\ref{4}) to (\ref{5}) by only onsidering two SNPs for
each subproblem.

As for each of the marginal probability distribution, the best way
is to run Sum-Product algorithm to compute in a dynamic programming
fashion similar to S-W alignment. But as pointed out in Quiver, this
steps can be significanly speeded up using a Viterbi approximation
and this is also what we implemented in the simulation code.

\subsubsection*{Simulation study}

We simulated on both synthetic and real data set with indel noise
and on a double stranded DNA. In the simulation, we assume that the
reads from the neighborhood of a repeat is given and our goal is to
decide how to extend the reads to span the repeat copies into the
flanking region correctly. The correctness is evaluated based on whether
they can correctly extend the correct reads into the flanking region. 

\begin{table}
\begin{centering}
\begin{tabular}{|c|c|c|c|c|c|c|c|c|c|c|c|}
\hline 
Repeat Type  & $C_{s}$  & $L_{s}$  & $C_{l}$  & $L_{l}$  & $p_{del}$  & $p_{ins}$  & $G$  & Homology  & $l^{approx}$  & $l^{exact}$  & Success \%\tabularnewline
\hline 
\hline 
Randomly generated  & 50X  & 100  & 50X  & 240  & 10\%  & 10\%  & 10000  & 0.67\%  & 300  & 150  & 99\%\tabularnewline
\hline 
A repeat of Ecoli-K12  & --  & --  & 80X  & 3000  & 10\%  & 10\%  & 4646332  & 0.48\%  & 5182  & 1507  & 89\% \tabularnewline
\hline 
A repeat of Bacillus anthracis  & --  & --  & 80X  & 3500  & 10\%  & 10\%  & 5227293  & 0.23\%  & 4778  & 2305  & 85\% \tabularnewline
\hline 
A repeat of Meiothermus ruber  & --  & --  & 80X  & 750  & 10\%  & 10\%  & 3097457  & 1.40\%  & 1217  & 257  & 94\% \tabularnewline
\hline 
\end{tabular}
\par\end{centering}

\caption{\label{tab:contig}Simulation results on long contig creator($C_{s},L_{s}$
are coverage and readlength for short reads. $C_{l},L_{l}$ are coverage
and readlength for long reads. $p_{del},p_{ins}$ are the probability
of insertion and deletion. $G$ is the length of the genome. Homology
is the number of SNPs divided by the length of the approximate repeat.
$l^{approx},l^{exact}$ are the length of the approximate and exact
repeat being studied. Success \% is the percentage of success in 100
rounds) }
\end{table}

\subsection*{Edit distance metric calibration }

We also do a study on whether we can use alignment score to differentiate
between segments from being extracted from the same locus or not.
In Fig \ref{fig:A-calibration-for}, the upper curve is the score
for segment extracted from the same locus while the bottom
curve is for completely iid randomly(irrelevant) generated segment.
And we simulate it for 100 times at each length and the bar indicate
1 standard deviation from the mean. 

\begin{center}
\begin{figure}
\begin{centering}
\includegraphics[width=10cm]{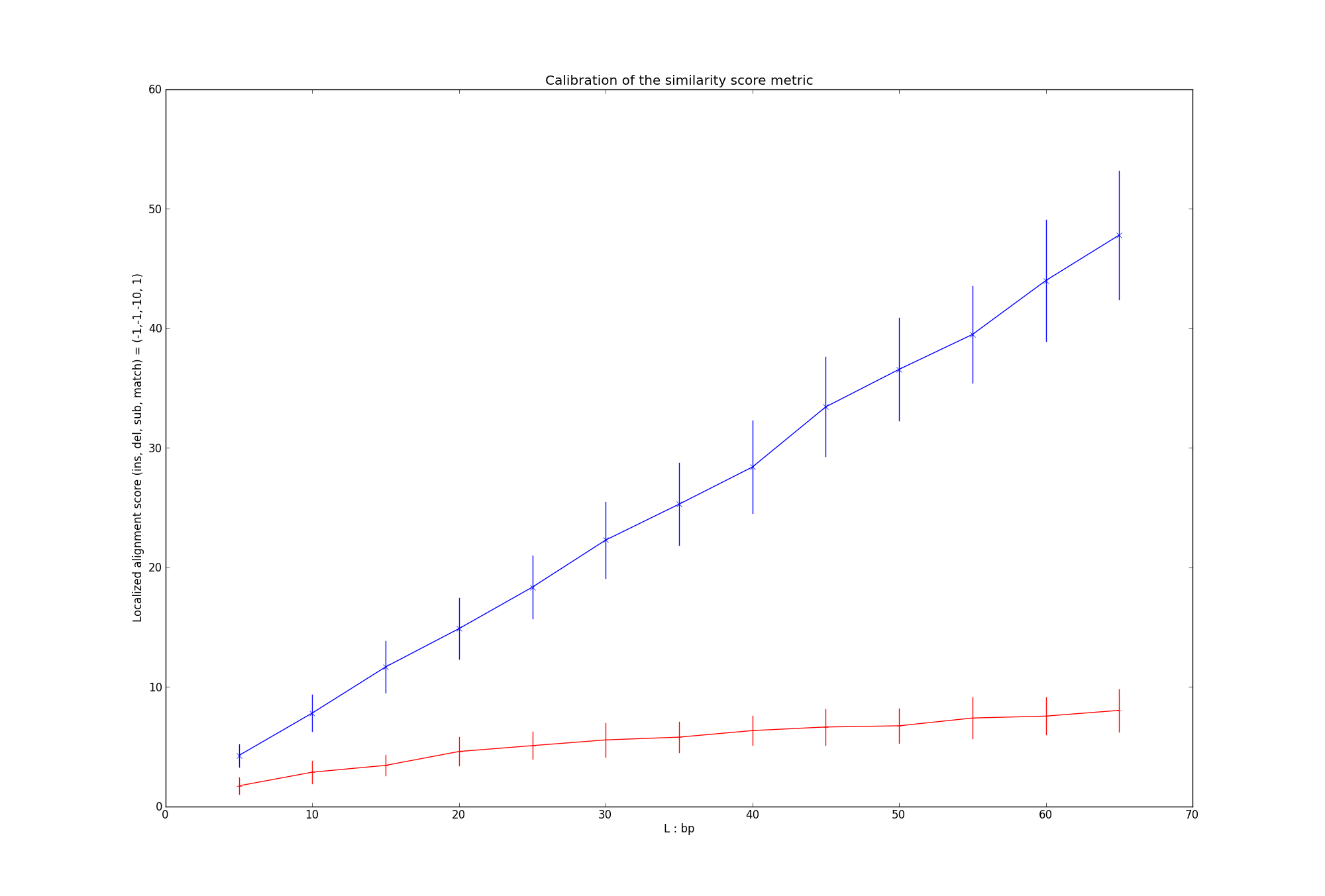}
\par\end{centering}

\caption{\label{fig:A-calibration-for}A calibration for similarity score using
global alignment computation. }

\end{figure}

\par\end{center}

\subsection*{Tolerance in the Multibridging step}
We note that due to the indel noise and the graph surgery that we
perform, an X-node of the graph may be p times longer than the usual
size of the approximate repeat, thus we should have a corresponding
higher tolerance to use the reads to bridge across the repeats. 
\subsection*{Computation speed up of alignment step}
The key bottleneck in computation speed  of the indel extension is on the pairwise alignment of the reads, which can be speeded up using the ideas in BLAST. We use sorting to identify exact matching fingerprint that identify the starting and ending location of the segment that need to be aligned with. After that, we do a local search instead of the whole dynamic programming search. 

\newpage{}

\section*{Appendix : Evidence behind model}

\subsection*{Approximate Repeat}

We let the underlying genome be $\vec{x}$ and use the short hand
that $x[a:b]$ be the $a^{th}$ to $(b-1)^{th}$ entries of $\vec{x}$. 

Let $\vec{v}_{1}=x[s_{1}:s_{1}+l]$ and $\vec{v}_{2}=x[s_{2}:s_{2}+l]$
be two length $l$ substrings of the genome with starting positions
at $s_{1}$ and $s_{2}$ respectively. We call $\vec{v}_{1}$ and
$\vec{v}_{2}$ be an approximate repeat of length $l$ if 

$d(x[s_{1}-W:s_{1}],x[s_{2}-W:s_{2}])\ge0.7W$

$d(x[s_{1}+l:s_{1}+l+W],x[s_{2}:s_{2}+l+W])\ge0.7W$

$d(x[s_{1}-W+k:s_{1}+k],x[s_{2}-W+k:s_{2}+k])<0.7W$ for all $0<k<l$

To understand approximate repeat better, we plot the Hamming distance
for consecutive disjoint window of length 10 as shown in Fig \ref{fig:Example-of-how}
.

\subsection*{Classification of approximate repeat }

While repeats are studied in the literature\cite{bao2002automated},
they are not investigated by looking at the ground truth. This is
partially due to the insufficiency of data in the early days of genome
assembly development. Therefore, based on the ground truth genome,
we define several quantities that allow us to classify approximate
repeat and understand the approximate repeat spectrum of genome. Here
we define stretch and mutation rate. Stretch is defined to be the
ratio of the length ($l^{*}$) of the longest exact repeat within
an approximate repeat divided by the length ($l_{approx}$) of the
approximate repeat. Mutation rate is defined to be number of mutation
within approximate repeat divided by ($l_{approx}-$$l^{*}$). An
illustration is shown in Fig \ref{fig:Example-of-how}.

\begin{figure}
\begin{centering}
\includegraphics[width=8cm]{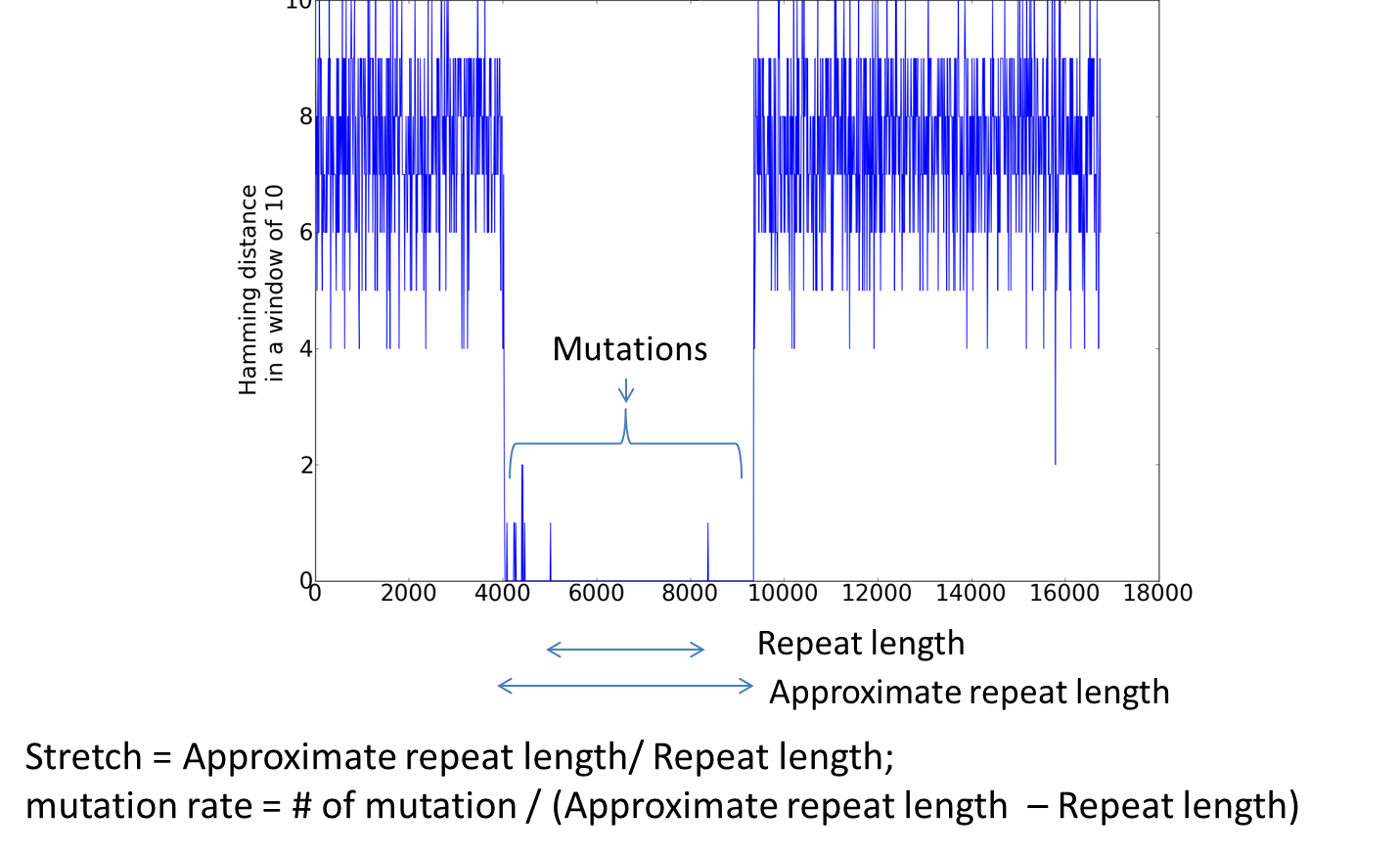}
\par\end{centering}

\caption{\label{fig:Example-of-how}Example of how to define stretch and mutation
rate}
\end{figure}

Moreover, we do a scatter plot to classify the approximate repeats(approximate
repeat having exact repeat length within top 20) and we have a plot
of approximate repeat spectrum as in Fig \ref{fig:Classification-of-approximate}. 

From the plots in Fig \ref{fig:Classification-of-approximate}, we
classify approximate repeat as homologous repeat if the stretch is
bigger than 1.25 and as non-homologous repeat if the stretch is less
than 1.25. 

For the scatter plot, every approximate repeat is a dot there with
x coordinate and y coordinate being mutation rate and stretch respectively.
And the color represents the length of that approximate repeat. For
the approximate repeat spectrum plot, the red bar represent non-homogeneous
repeat while the blue bar represent homologous repeat. The green dotted
line indicates the length of the longest repeat.

We focus on genomes when the non-homologous repeat dominates, namely
the longest interleave and the longest triple repeats are non-homologous
because the stretch is relatively short which can be captured by our
generative model. We do not distinguish between the length of approximate
or exact repeat are considered to be the same and we do not distinguish
between the two in the discussion because of the small stretch.

\begin{figure}
\begin{centering}
\includegraphics[width=10cm]{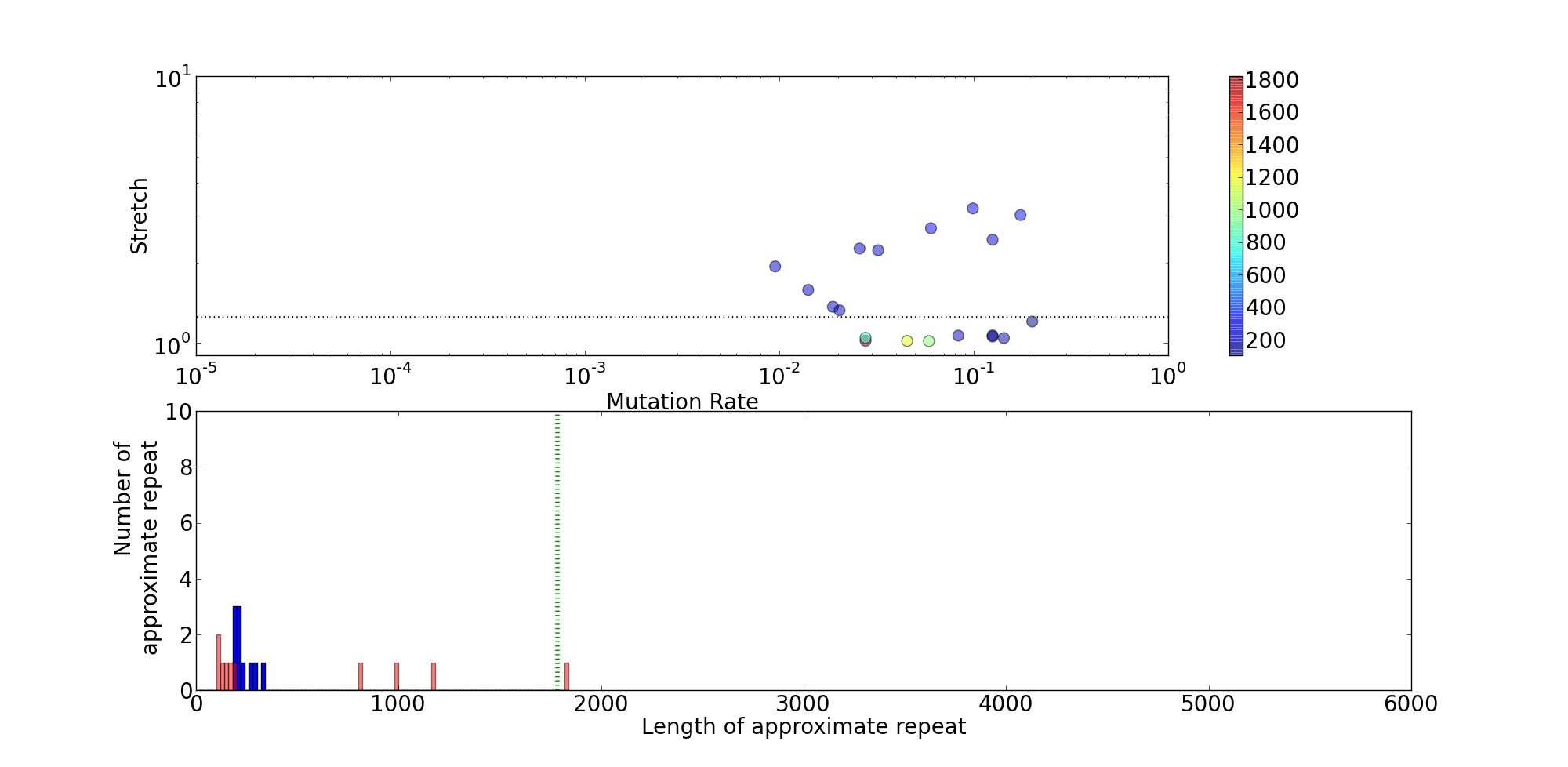}
\par\end{centering}

\caption{\label{fig:Classification-of-approximate}Classification of approximate
repeats and approximate repeat spectrum. The upper plot is scatter
plot to classify approximate repeat. The lower plot is the approximate
repeat spectrum}
\end{figure}

\subsection*{Stopping criterion for defining approximate repeat by MLE estimate}

\subsubsection*{Parametric model }

Let $L_{k}$ be the number of bases between the $(k-1)^{th}$ and
the $k^{th}$ SNPs starting from the right end-point of a repeat. 

We consider the following probabilistic model for the $L_{k}$. $\{L_{k}\}_{k=1}^{n}$is
taken as an independent sequence of geometrically distributed random
variables with parameter $\Theta=\{p_{1},p_{2},r\}$ defined as follows. 

\begin{eqnarray*}
L_{k} & \sim & \begin{cases}
Geo(p_{1}) & \text{if }1\le k\le r\\
Geo(p_{2}) & \text{if }r<k\le n
\end{cases}
\end{eqnarray*}

\subsubsection*{MLE estimate of parameters }

We now would like to estimate $\Theta$ given the observation of $\{\hat{L}_{k}\}_{k=1}^{n}$
by maximum likelihood estimation. Consider the log-likelihood function
$L(\Theta)=\log{\cal P}(\{\hat{L_{k}}\}_{k=1}^{n}\mid\Theta)$.
\begin{eqnarray}
L(\Theta) & = & \log{\cal P}(\{\hat{L_{k}}\}_{k=1}^{n}\mid\Theta)\\
 & = & \log\{[\Pi_{k=1}^{r}(1-p_{1})^{\hat{L_{k}}}p_{1}]\cdot[\Pi_{k=r+1}^{n}(1-p_{2})^{\hat{L_{k}}}p_{2}]\}\\
 & = & r\log p_{1}+[\sum_{k=1}^{r}\hat{L}_{k}]\cdot\log(1-p_{1})+(n-r)\cdot\log p_{2}+[\sum_{k=r+1}^{n}\hat{L}_{k}]\cdot\log(1-p_{2})
\end{eqnarray}

And we want to find $\hat{\Theta}=\arg\max_{\Theta}L(\Theta)$.

Observe that, if we fix $1\le r\le n$, then the optimal $\hat{p}_{1}$
and $\hat{p}_{2}$ can be readily obtained by taking derivative on
$L(\Theta)$ with respect to $p_{1}$and $p_{2}$, specifically,
\begin{eqnarray}
\hat{p_{1}} & = & \frac{1}{1+\frac{\sum_{k=1}^{r}\hat{L}_{k}}{r}}\\
\hat{p}_{2} & = & \frac{1}{1+\frac{\sum_{k=r+1}^{n}\hat{L}_{k}}{n-r}}
\end{eqnarray}

$\hat{\Theta}$ can then be obtained by running over all integral
$1\le r\le n$ and use the corresponding optimal $\hat{p}_{1}$ and
$\hat{p}_{2}$ to obtain the $L(\Theta)$, and finally we use the
$r$ that gives the highest value of $L(\Theta)$ as the MLE estimate
given the observation.

\subsubsection*{Linear time algorithm to estimate the stopping criterion}

Moreover, this can be done by the following algorithm Algo \ref{alg:Linear-time-algorithm},
which run in linear time $\Theta(n)$ with respect to the number of
observations $n$.

\begin{algorithm}

\raggedright{}
1.

a)$C_{0}\leftarrow0$

b) for r = 1 to n

$C_{r}\leftarrow C_{r}+\hat{L}_{r}$

2.

a)$D_{0}\leftarrow C_{n}$

b)for r = 1 to n

$D_{r}\leftarrow C_{n}-\hat{L}_{r}$

3. for r = 1 to n

$\hat{p}_{1}^{(r)}\leftarrow\frac{1}{1+\frac{C_{r}}{r}}$

$\hat{p}_{2}^{(r)}\leftarrow\frac{1}{1+\frac{D_{r}}{n-r}}$

$\Theta_{r}\leftarrow(r,\hat{p}_{1}^{(r)},\hat{p}_{2}^{(r)})$

$X_{r}\leftarrow L(\Theta_{r})$

4. find maximum among $\{X_{r}\}_{r=1}^{n}$, and the corresponding
$\Theta_{r}$ is the MLE estimate. 

5. (Differentiate between homologous and non-homologous repeat)

If the optimal $\hat{p}_{1}^{(r)},\hat{p}_{2}^{(r)}$are too close
(i.e. $\hat{p}_{1}^{(r)}>0.2$), then claim $r=1$; else, claim $\hat{r}=r$. 

\caption{\label{alg:Linear-time-algorithm}Linear time algorithm to estimate
the stopping criterion}
\end{algorithm}

A sample plot is of who we can use the critierion to accurately define
the ending of approximate repeat is shown in Fig \ref{fig:An-example-plot}. 

\begin{figure}
\includegraphics[width=12cm]{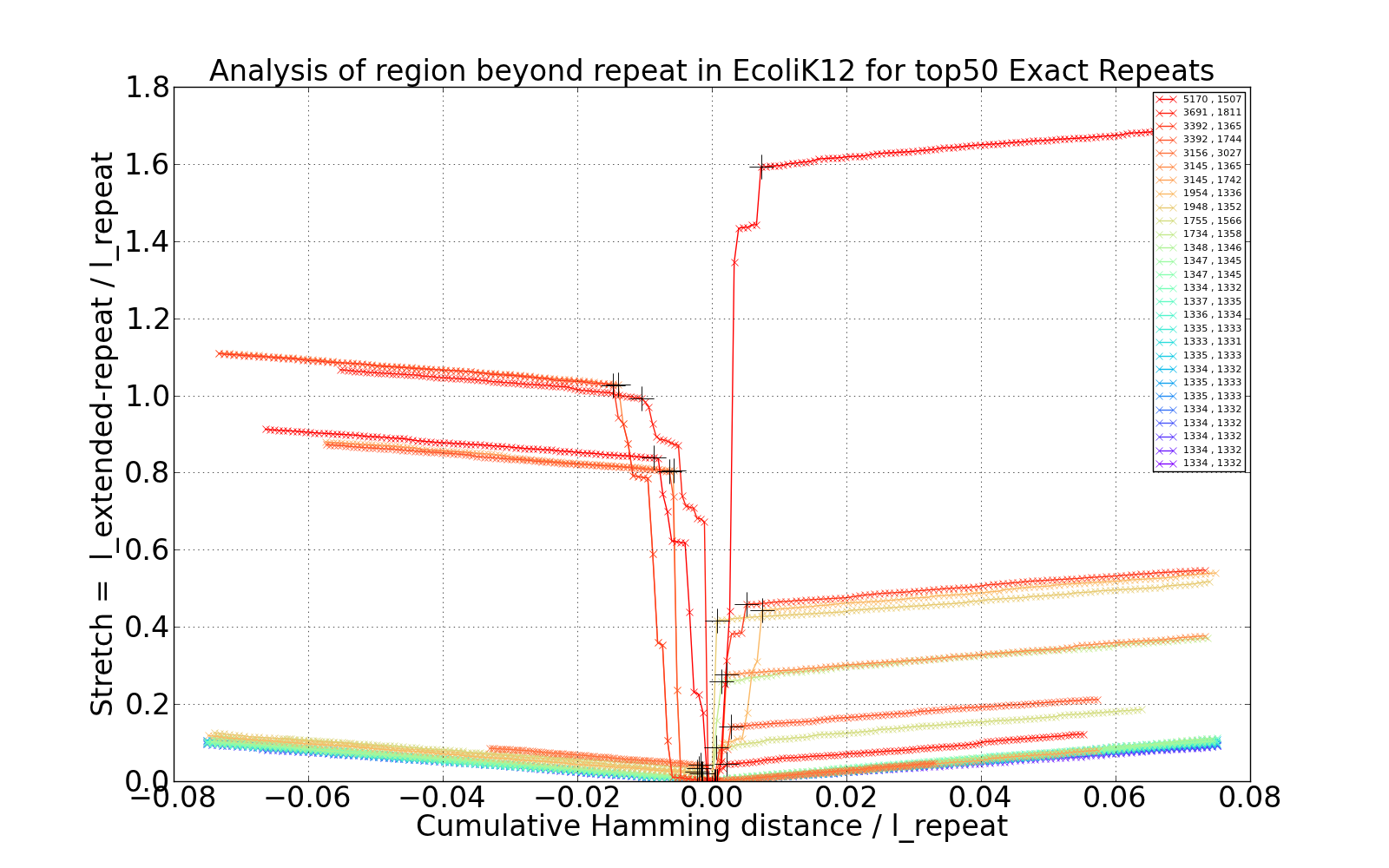}

\caption{\label{fig:An-example-plot}An example plot that define the stopping
point of approximate repeat by Algorithm \ref{alg:Linear-time-algorithm}}
\end{figure}

\newpage{}

\section*{Appendix of the dot plot of finished genomes}
\begin{figure}
\subfloat[Index 1]{\includegraphics[width=4cm]{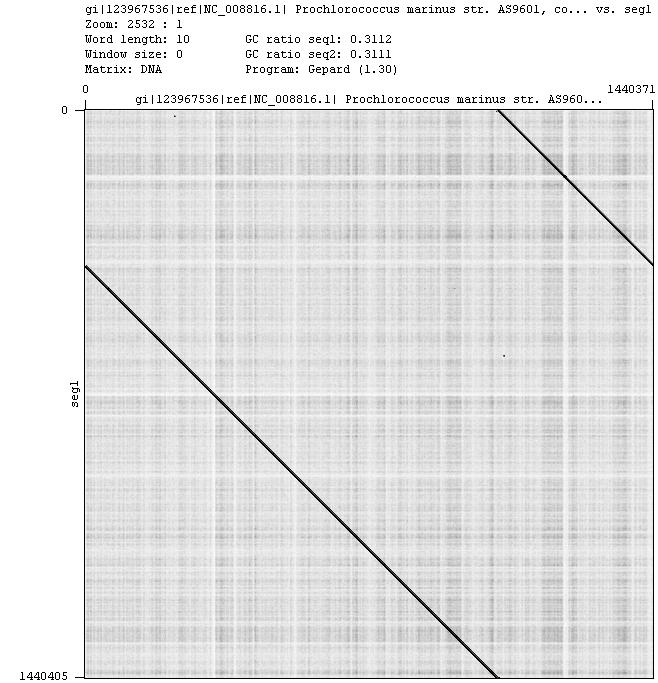}}
\subfloat[Index 2]{\includegraphics[width=4cm]{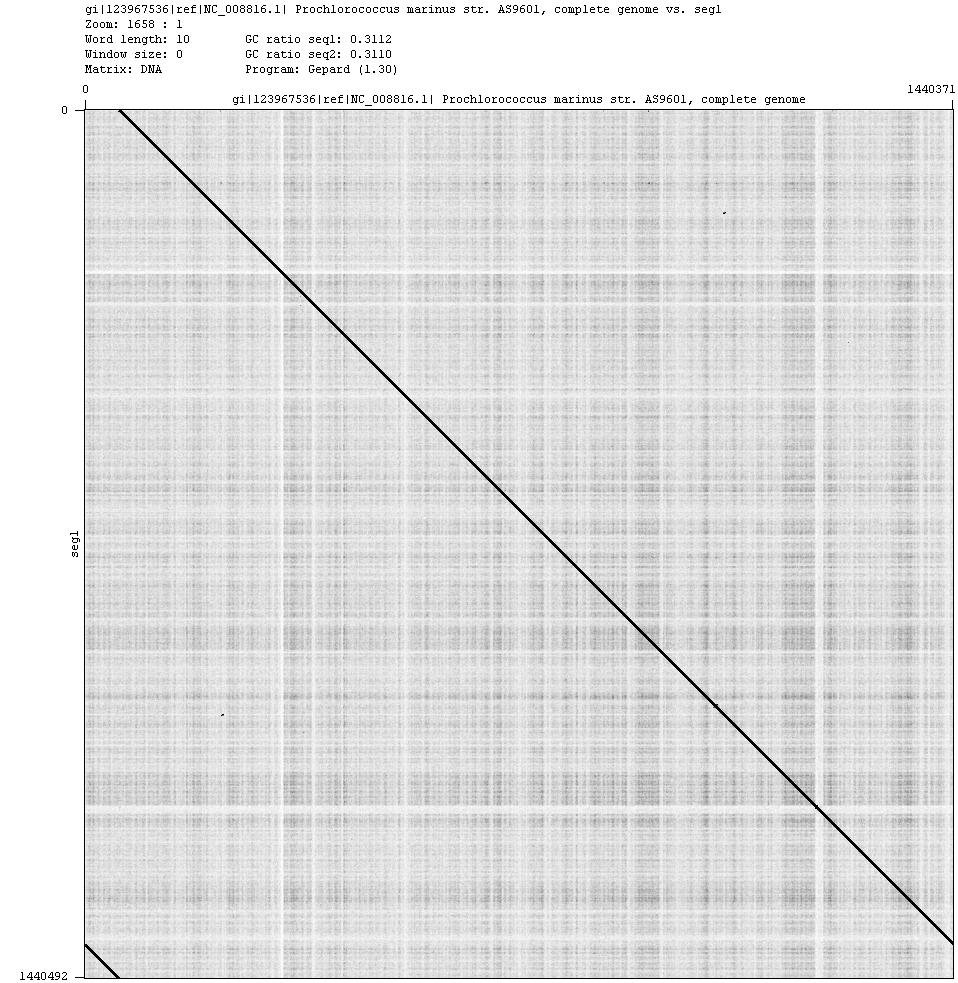}}
\subfloat[Index 3]{\includegraphics[width=4cm]{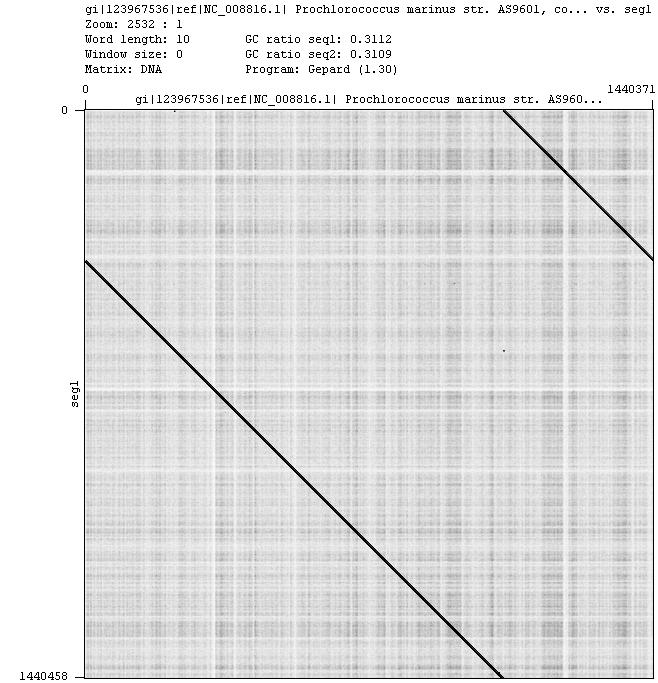}}

\subfloat[Index 4]{\includegraphics[width=4cm]{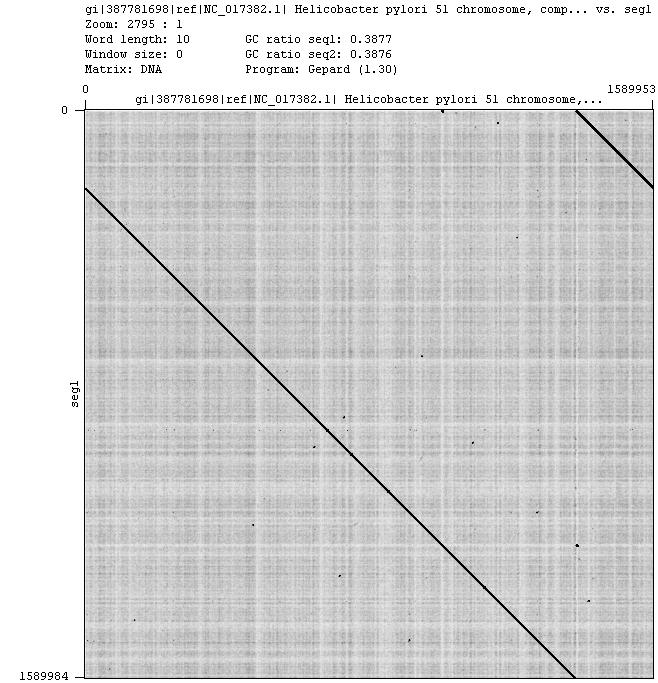}}
\subfloat[Index 5]{\includegraphics[width=4cm]{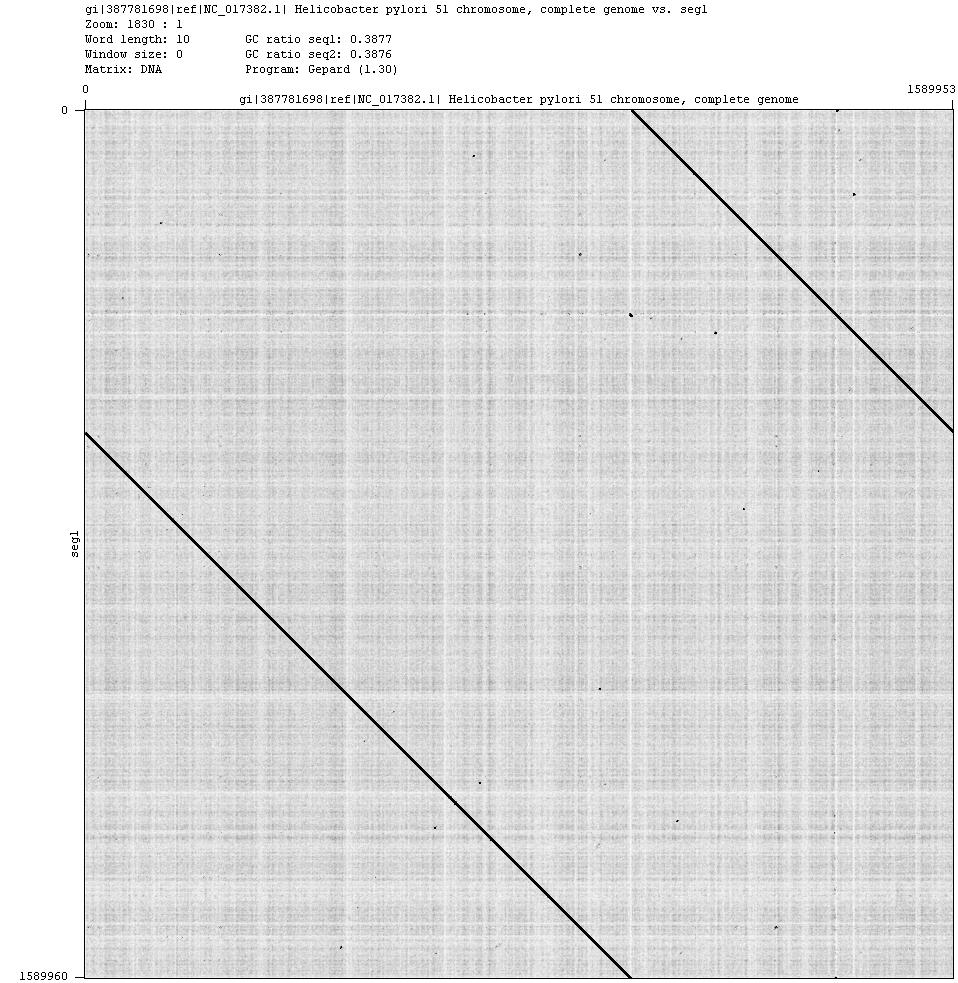}}
\subfloat[Index 6]{\includegraphics[width=4cm]{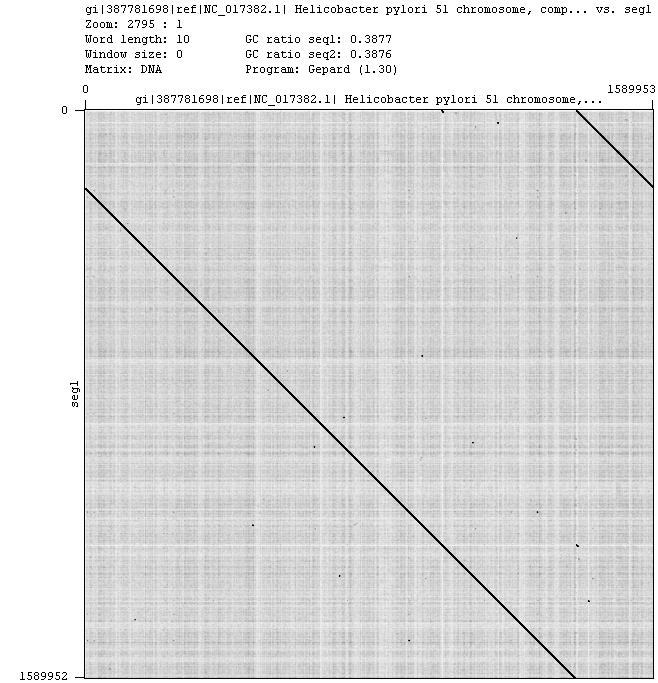}}

\subfloat[Index 7]{\includegraphics[width=4cm]{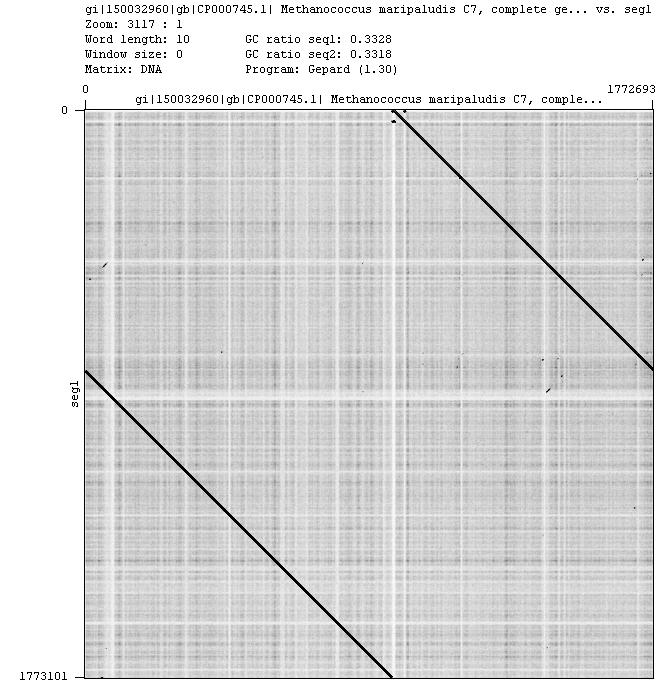}}
\subfloat[Index 8]{\includegraphics[width=4cm]{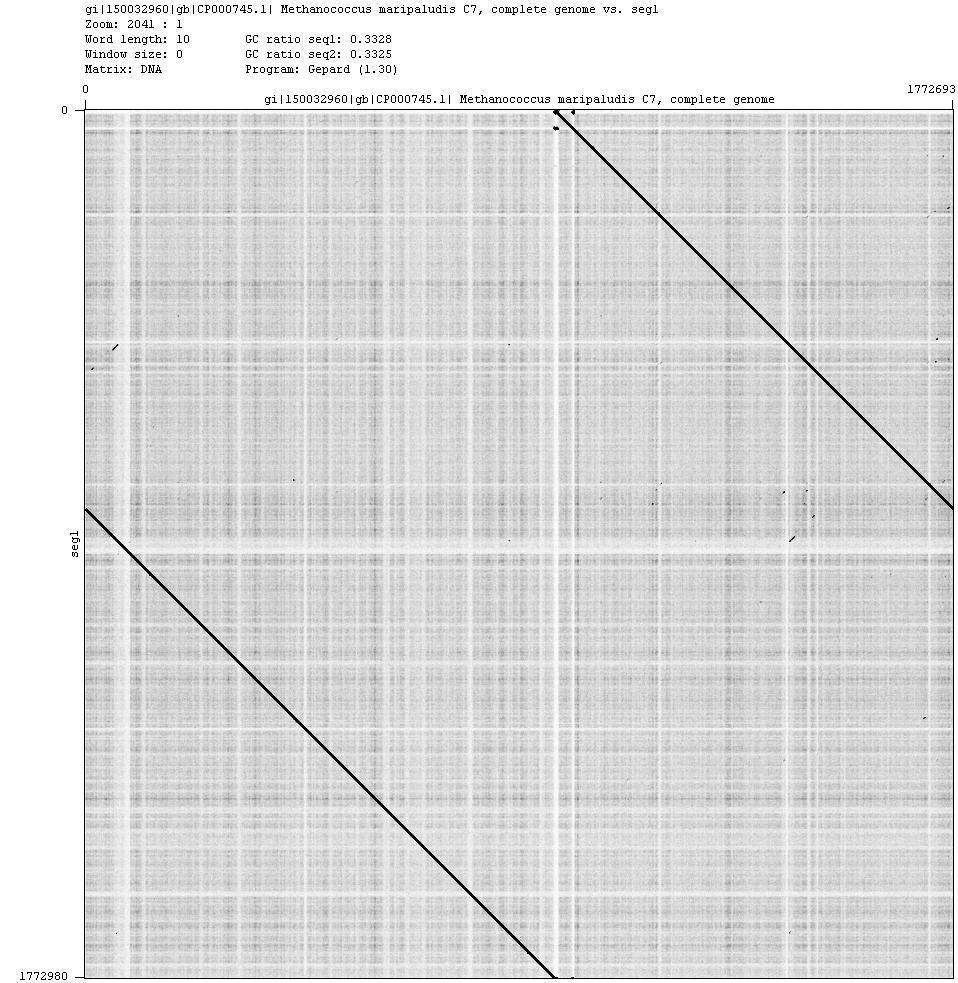}}
\subfloat[Index 9]{\includegraphics[width=4cm]{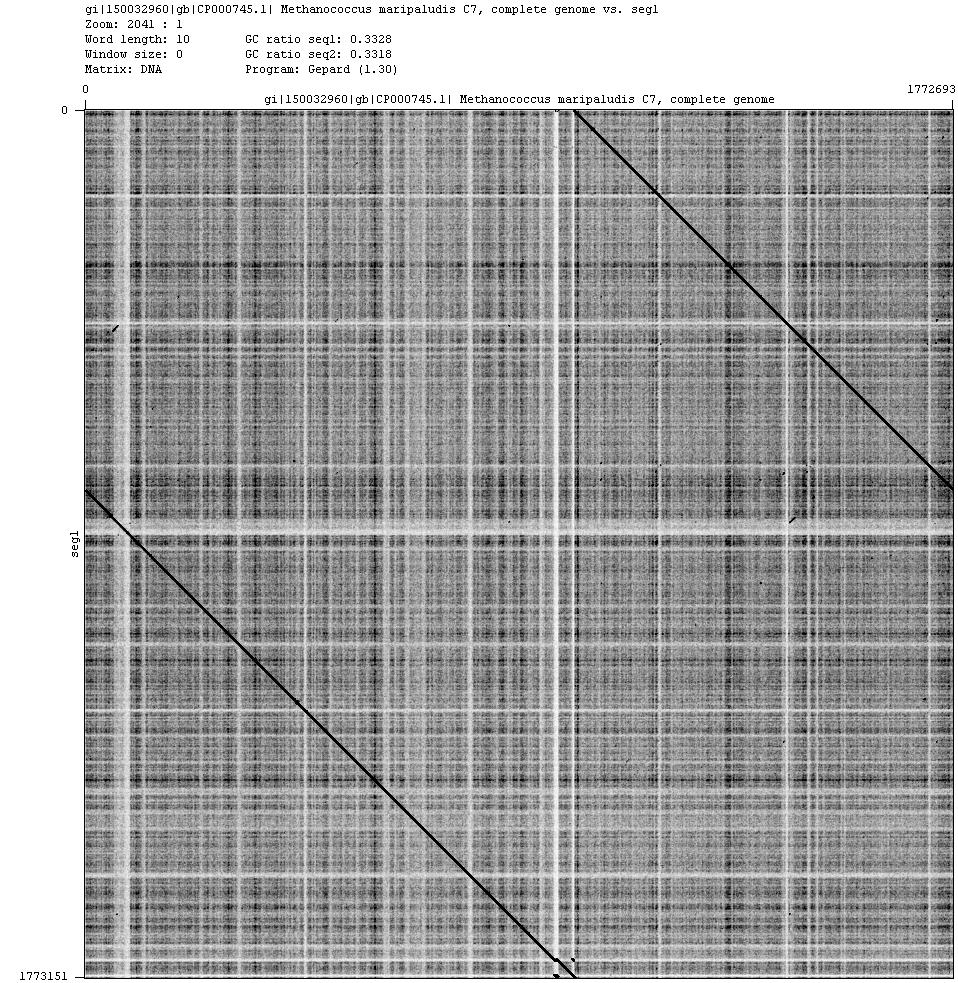}}

\subfloat[Index 10]{\includegraphics[width=4cm]{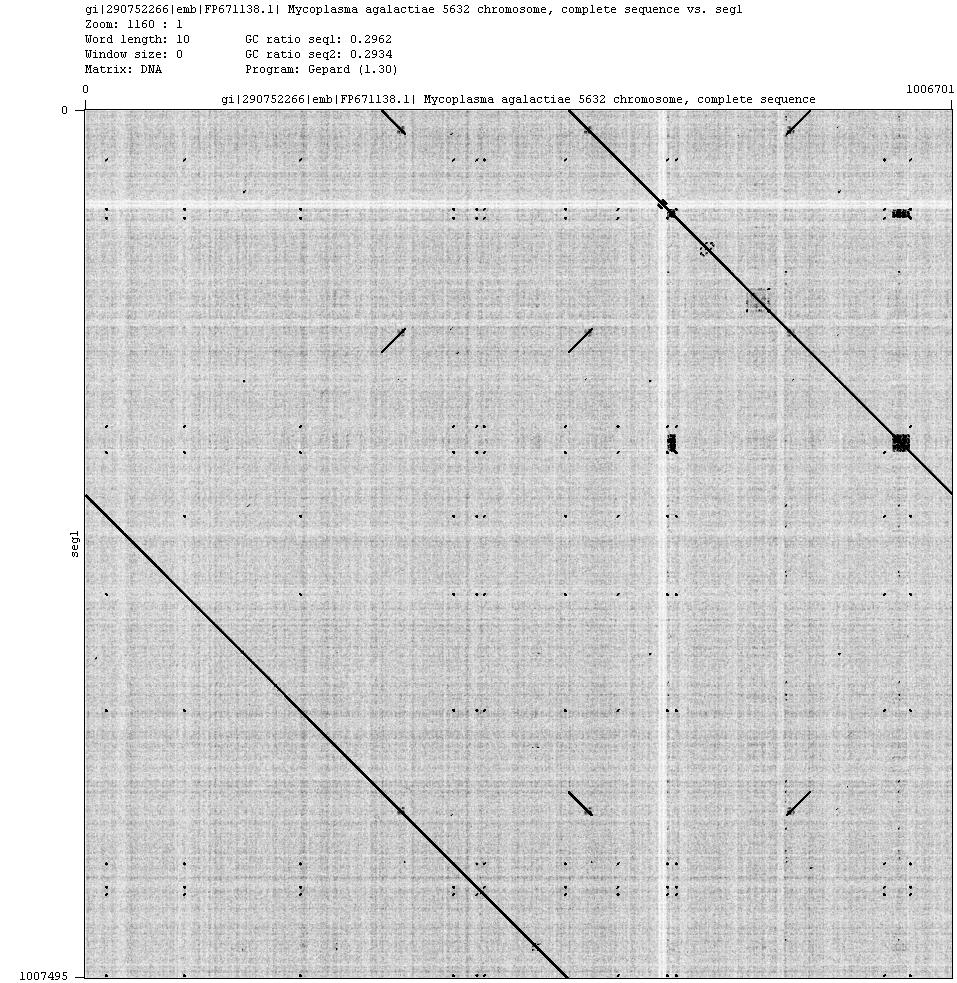}}
\subfloat[Index 11]{\includegraphics[width=4cm]{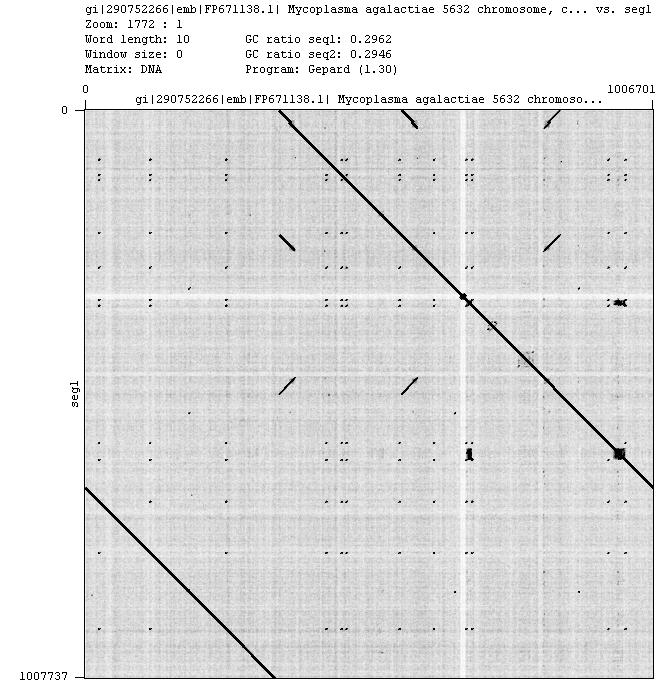}}
\subfloat[Index 12]{\includegraphics[width=4cm]{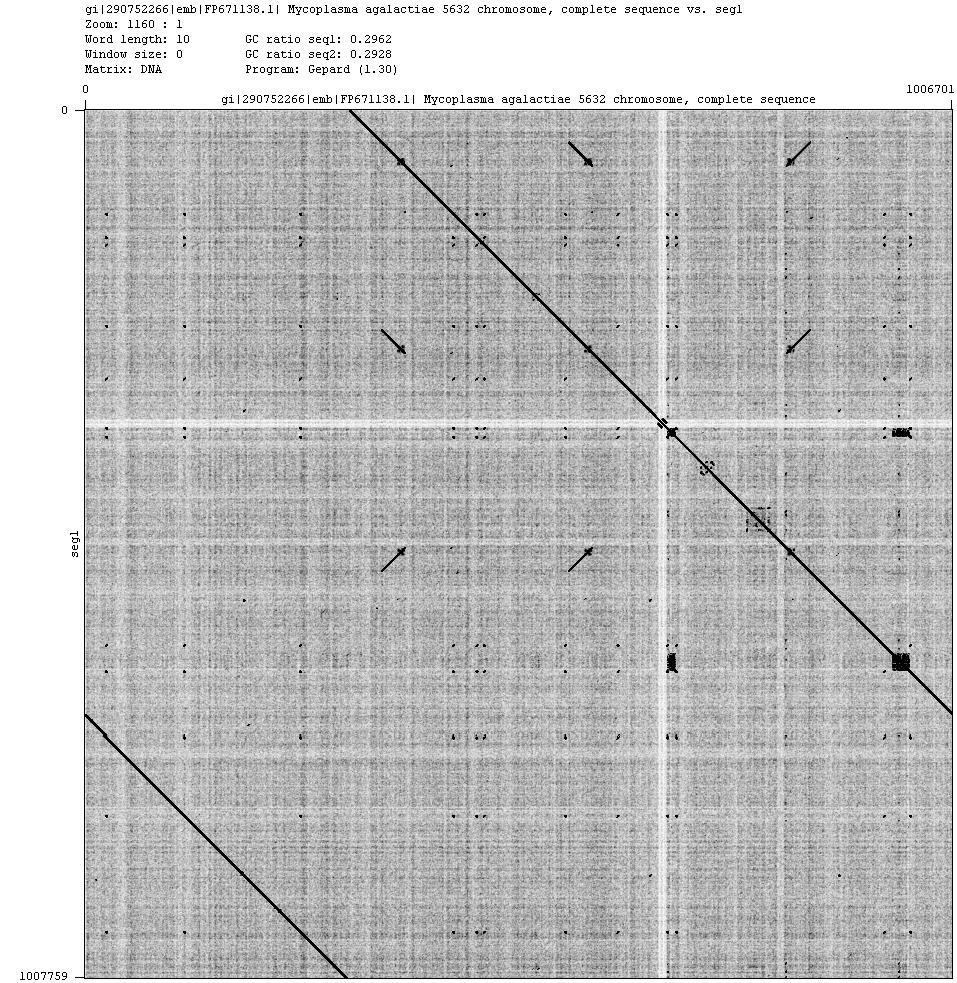}}

\caption{\label{fig:Dot-plot-of}Dot plot of recovered genomes against ground
truth(according to index in Table \ref{tab:Assembly-of-several}) }
\end{figure}

\end{document}